\documentclass[a4paper,fleqn]{cas-sc}

\usepackage[authoryear,longnamesfirst]{natbib}

\usepackage{amsmath,amssymb,amsthm,bm,mathtools}
\usepackage{natbib}
\usepackage{booktabs}
\usepackage{graphicx}
\usepackage{float}
\usepackage{setspace}
\usepackage{microtype}
\usepackage{enumitem}
\usepackage{hyperref}
\usepackage{tabularx}
\usepackage[T1]{fontenc}
\hypersetup{colorlinks=true,linkcolor=black,citecolor=black,urlcolor=black}
\newtheorem{theorem}{Theorem}[section]
\newtheorem{lemma}{Lemma}[section]

\theoremstyle{definition}
\newtheorem{assumption}{Assumption}[section]

\newcommand{\Pp}{\mathbb{P}}
\newcommand{\Ee}{\mathbb{E}}
\newcommand{\Var}{\operatorname{Var}}
\newcommand{\Cov}{\operatorname{Cov}}
\newcommand{\tr}{\operatorname{tr}}
\newcommand{\diag}{\operatorname{diag}}
\newcommand{\rank}{\operatorname{rank}}
\newcommand{\argmin}{\operatorname*{arg\,min}}

\newcommand{\Op}{O_{\mathbb P}}
\newcommand{\R}{\mathbb R}
\newcommand{\ind}{\mathrel{\perp\!\!\!\perp}}
\newcommand{\norm}[1]{\left\lVert #1\right\rVert}
\newcommand{\abs}[1]{\left\lvert #1\right\rvert}

\newcommand{\calM}{\mathcal M}
\newcommand{\ve}{\bm e}
\newcommand{\mA}{\mathbf A}
\newcommand{\mB}{\mathbf B}
\newcommand{\mD}{\mathbf D}

\newcommand{\mI}{\mathbf I}
\newcommand{\mH}{\mathbf H}
\newcommand{\mM}{\mathbf M}
\newcommand{\mP}{\mathbf P}
\newcommand{\mQ}{\mathbf Q}
\newcommand{\mR}{\mathbf R}
\newcommand{\mOmega}{\bm{\Omega}}
\newcommand{\mLambda}{\bm{\Lambda}}
\newcommand{\mS}{\mathbf S}

\newcommand{\mSigma}{\bm{\Sigma}}
\newcommand{\vmu}{\bm\mu}
\newcommand{\vf}{\bm f}
\newcommand{\vY}{\bm Y}
\newcommand{\vZ}{\bm Z}
\newcommand{\vU}{\bm U}
\newcommand{\veps}{\bm\varepsilon}
\newcommand{\vzeta}{\bm\zeta}
\newcommand{\cum}{\operatorname{cum}}
\newcommand{\psinorm}[1]{\left\lVert #1\right\rVert_{\psi_2}}

\newcommand{\vxi}{\bm\xi}
\newcommand{\veta}{\bm\eta}

\def\tsc#1{\csdef{#1}{\textsc{\lowercase{#1}}\xspace}}
\tsc{WGM}
\tsc{QE}
\tsc{EP}
\tsc{PMS}
\tsc{BEC}
\tsc{DE}

\begin{document}
\let\WriteBookmarks\relax
\def\floatpagepagefraction{1}
\def\textpagefraction{.001}
\shorttitle{Factor-Adjusted Location Tests for High-Dimensional Time Series}
\shortauthors{J.Y. Wang et~al.}

\title [mode = title]{Factor-Adjusted Location Tests for High-Dimensional Time Series}                      


\author[1]{Jiyang Wang}

\affiliation[1]{organization={College of Mathematics and Systems Science, Xinjiang University},
	addressline={No. 777, Huarui Street, Shuimogou District}, 
	city={Urumqi},
	postcode={830046}, 
	country={China}}

\author[2]{Xifen Huang}
\cormark[1]

\affiliation[2]{organization={School of Mathematics, Yunnan Normal University},
	addressline={No. 768, Juxian Street, Chenggong District},  
	city={Kunming},
	postcode={650500}, 
	country={China}}
\ead{E-mail address: xf_yellow@126.com}

\author[3]{Long Feng}
\cormark[1]

\affiliation[3]{organization={School of Statistics and Data Science, Nankai University},
	addressline={No. 94, Weijin Road, Nankai District,}, 
	city={Tianjing},
	postcode={300071}, 
	country={China}}

\cortext[cor1]{Corresponding author}
\ead{ flnankai@nankai.edu.cn}

\begin{abstract}
We study high-dimensional one-sample mean testing for time series with strong common serial dependence driven by latent dynamic factors. After estimating the dynamic factor loading space from lagged autocovariance, we project the data onto its orthogonal complement and construct three factor-adjusted tests: a max test for sparse alternatives, a quadratic test for dense alternatives, and a Cauchy combination test for unknown sparsity. The idiosyncratic component is allowed to be non-Gaussian sub-Gaussian vector white noise.  We establish the Gumbel limit of the max statistic, the normal limit and local power function of the quadratic statistic, their asymptotic independence, and the validity of the Cauchy combination.  In the strong-factor case, the refined projection expansion shows that the quadratic statistic remains valid for dimensions as large as $p=o(n^2)$.  A random-loading residual bootstrap is developed for finite-sample calibration.  Simulation studies and a real data application demonstrate reliable size control and competitive power for high-dimensional observations with strong dependence.

\end{abstract}


\begin{highlights}
\item Propose a high-dimensional time series mean test with factor adjustment, which eliminates strong serial correlation caused by latent common factors through dynamic factor projection.
\item Construct Max, Sum, and Cauchy combination tests, respectively suitable for sparse, dense, and alternatives with unknown sparsity.
\item Develop an asymptotic theory for the statistic and propose a residual bootstrap calibration; simulation and empirical results based on the S\&P 500 demonstrate that this method achieves reliable size control and competitive power.
\end{highlights}

\begin{keywords}
Cauchy combination test \sep Dynamic factor model \sep Factor adjustment \sep High-dimensional mean \sep Max test \sep Sum test \sep Vector white noise
\end{keywords}

\maketitle

\section{Introduction}
\label{sec:intro}

High-dimensional data analysis is now central in statistics, econometrics, finance, genomics, signal processing, and environmental studies.  In such applications, the dimension $p$ is often comparable to, or much larger than, the sample size $n$.  Classical multivariate methods based on fixed-$p$ asymptotics are then poorly calibrated or even undefined.  Testing whether a high-dimensional mean vector is zero is one of the most basic examples. \citet{Hotelling1931}'s $T^2$ statistic is optimal in classical Gaussian models with fixed dimension, but it requires inversion of the sample covariance matrix and is not applicable when $p$ is large relative to $n$.

For independent observations, the literature has developed two complementary families of tests.  Sum-type procedures aggregate many small coordinatewise effects and are powerful under dense alternatives.  Representative contributions include \citet{BaiSaranadasa1996}, \citet{SrivastavaDu2008}, \citet{ChenQin2010}, \citet{ChenPaulPrenticeWang2011}, \citet{SrivastavaKatayamaKano2013}, and \citet{GregoryEtAl2015}.  Robust and nonparametric variants include \citet{WangPengLi2015}, \citet{FengSun2016}, and \citet{ChakrabortyChaudhuri2017}.  Max-type, thresholding, and higher-criticism tests target sparse alternatives with a small number of large coordinates; see, for example, \citet{DonohoJin2004}, \citet{HallJin2010}, \citet{ZhongChenXu2013}, and \citet{CaiLiuXia2014}.  Adaptive methods that combine dense and sparse evidence have also been studied, including \citet{XuLinWeiPan2016}, \citet{LiAuePaulPengWang2020}, \citet{FengJiangLiLiu2024}, and the related max-sum construction of \citet{FengJiangLiuXiong2022}.  Gaussian approximation theory for maxima of high-dimensional sums, such as \citet{ChernozhukovChetverikovKato2013}, provides important distributional tools for this line of work.  A broad review is given by \citet{HuangLiLiYang2022}.

When observations are serially dependent, mean testing becomes more delicate because temporal dependence changes both centering and variance.  Ignoring this dependence may inflate the size of tests designed for independent samples.  Several high-dimensional time-series tests therefore adjust for temporal dependence by long-run covariance estimation, self-normalization, or dependence-aware Gaussian approximation.  \citet{AyyalaParkRoy2017} considered mean testing for high-dimensional dependent observations, with further corrections in \citet{ChoLimAyyalaParkRoy2019}.  \citet{ZhangWu2017} developed Gaussian approximation for high-dimensional stationary time series, \citet{BasuMichailidis2015} studied regularized estimation for sparse high-dimensional time series, and \citet{WangShao2020} proposed self-normalized inference.  Related recent developments include \citet{ZhangChenQiu2025}.  These methods are most natural when the remaining temporal dependence is weak, local, or can be summarized by a stable long-run covariance structure.

In many panel time series, however, the dependence is strong because many coordinates share a few latent dynamic components.  Financial returns may contain market-wide serial dynamics; macroeconomic panels may be driven by business-cycle factors; environmental measurements may share large-scale spatio-temporal patterns.  Dynamic factor models provide a natural way to represent such strong common dependence.  Important contributions include \citet{PenaBox1987}, \citet{ForniHallinLippiReichlin2000}, \citet{StockWatson2002}, \citet{BaiNg2002}, \citet{HallinLiska2007}, \citet{PanYao2008}, \citet{LamYaoBathia2011}, and \citet{LamYao2012}.  In particular, the dynamic factor viewpoint separates a high-dimensional time series into a low-dimensional serially dependent component and a vector white-noise component.  This decomposition is well suited for high-dimensional mean testing: one can first remove the low-rank dynamic component and then test the projected residual mean.

Several factor-adjusted mean tests are especially relevant to this paper.  \citet{ZhouKong2015} studied high-dimensional mean testing under an approximate factor model and used factor adjustment to reduce the influence of common components.  \citet{ZhangZhouHeZhang2018} considered high-dimensional time-series mean testing with a factor structure, constructed adaptive statistics from factor-adjusted data with different norms, and used multiplier bootstrap calibration.  \citet{HeZhangZhangZhou2020} developed factor-adjusted two-sample mean tests and support recovery procedures.  These papers show that factor adjustment is useful for high-dimensional mean inference, but their inferential targets and modeling roles of the factors are different from ours. The first and third papers address approximate factor adjustment for mean testing in one-sample and two-sample settings, respectively; the second paper investigates factor-structured high-dimensional time series, employing adaptive norm aggregation subsequent to factor adjustment.  In contrast, our approach employs a dynamic factor decomposition to model and eliminate strong common serial dependence; the post-projection residual is allowed to have a singular correlation matrix and non-Gaussian sub-Gaussian white-noise distribution; and we jointly analyze the proposed max, sum, and Cauchy-combination test statistics, including max-sum asymptotic independence, local dense-alternative power for the sum statistic, and the refined effect of estimating the dynamic factor space. 

This paper develops factor-adjusted high-dimensional location tests for time series with strong common serial dependence.  We consider
\begin{equation*}
	\vY_t=\vmu+\bm W_t,\qquad \bm W_t=\mA\vf_t+\veps_t,
\end{equation*}
where ${\bm f}_t$ is a low-dimensional dynamic factor, $\mA$ is a loading matrix, and $\veps_t$ is vector white noise.  The loading space is estimated from lagged autocovariances.  The observations are then projected onto the orthogonal complement of the estimated loading space.  On the projected data, we construct a max statistic, a quadratic statistic, and a Cauchy combination statistic.

The contributions are as follows.  First, the proposed testing framework explicitly targets high-dimensional mean inference under strong serial dependence generated by latent dynamic factors.  This differs from methods that assume independent observations or weakly dependent residuals after simple covariance adjustment.  Second, the procedure is adaptive to the sparsity of the alternative: the max statistic is sensitive to sparse signals, the quadratic statistic is sensitive to dense signals, and the Cauchy combination remains effective when the sparsity level is unknown.  Third, the theory proves that estimating and removing the dynamic factor space does not change the first-order null limits of the projected max and quadratic statistics.  The proof does not assume Gaussian residuals; the idiosyncratic component is sub-Gaussian vector white noise.  Fourth, for local dense alternatives, the quadratic statistic has a noncentral normal limit, which gives a closed-form asymptotic power function.  Fifth, the asymptotic independence between the max and quadratic statistics justifies a simple analytic Cauchy calibration.  Finally, we provide a random-loading residual bootstrap that matches the simulation implementation and repeats the factor adjustment within each bootstrap sample for improved finite-sample size control.

The rest of the article is organized as follows.  Section~\ref{sec:model} states the model.  Section~\ref{sec:method} defines the testing procedure.  Section~\ref{sec:theory} presents the core theoretical results.  Section~\ref{sec:simulation} reports finite-sample size and power evidence.  Section~\ref{sec:empirical-sp500-weekly} reports the empirical S\&P 500 weekly excess-return study.  All intermediate lemmas and proofs are in the Appendix.

\section{Model}
\label{sec:model}

We observe a $p$-dimensional time series $\vY_1,\ldots,\vY_n$ and test
\begin{equation}
	H_0:\vmu=\bm0\qquad\hbox{against}\qquad H_1:\vmu\ne\bm0.
	\label{eq:hypothesis}
\end{equation}
The data are generated by
\begin{equation}
	\vY_t=\vmu+\mA\vf_t+\veps_t,\qquad t=1,\ldots,n,
	\label{eq:model}
\end{equation}
where $\mA\in\R^{p\times r}$, $\mA^\top\mA=\mI_r$, $r$ is fixed, $\vf_t\in\R^r$ is a latent dynamic factor, and $\veps_t$ is a vector white-noise process.  The number of factors $r$ and the loading space $\calM(\mA)$ are unknown.  Since $(\mA,\vf_t)$ is identifiable only up to an orthogonal rotation, all inference is expressed through the projection matrices
\begin{equation*}
	\mP=\mA\mA^\top,
	\qquad
	\mQ=\mI_p-\mP.
\end{equation*}
The projection removes the factor component.  It also removes the component of $\vmu$ lying in $\calM(\mA)$.  The main theory is stated under the convention
\begin{equation}
	\mP\vmu=\bm0,
	\label{eq:identifiable_mean}
\end{equation}
so that $\mQ\vmu=\vmu$ and \eqref{eq:hypothesis} is equivalent to the projected hypothesis.  Without \eqref{eq:identifiable_mean}, the same procedure tests $H_0:\mQ\vmu=\bm0$.

Let
\begin{equation*}
	\Gamma_y(k)=\Cov(\vY_{t+k},\vY_t),\qquad k\ge1.
\end{equation*}
For a fixed lag truncation $k_0\ge1$, define the Lam--Yao population matrix
\begin{equation}
	\mM=\sum_{k=1}^{k_0}\Gamma_y(k)\Gamma_y(k)^\top.
	\label{eq:M}
\end{equation}
Under the dynamic factor assumptions of \citet{LamYao2012}, the columns of $\mA$ span the eigenspace of $\mM$ corresponding to its $r$ nonzero eigenvalues.  In the sample, let
\begin{equation*}
	\widehat\Gamma_y(k)=\frac{1}{n-k}\sum_{t=1}^{n-k}(\vY_{t+k}-\bar{\vY})(\vY_t-\bar{\vY})^\top,
	\qquad
	\widehat\mM=\sum_{k=1}^{k_0}\widehat\Gamma_y(k)\widehat\Gamma_y(k)^\top.
\end{equation*}
Let $\widehat\lambda_1\ge\cdots\ge\widehat\lambda_p$ be the eigenvalues of $\widehat\mM$.  We use the ratio rule
\begin{equation*}
	\widehat r=\argmin_{1\le j\le R}\frac{\widehat\lambda_{j+1}}{\widehat\lambda_j},\qquad r<R<p,
\end{equation*}
unless stated otherwise.  Given $\widehat r$, let $\widehat\mA$ be the matrix of the first $\widehat r$ orthonormal eigenvectors of $\widehat\mM$ and define
\begin{equation*}
	\widehat\mP=\widehat\mA\widehat\mA^\top,
	\qquad
	\widehat\mQ=\mI_p-\widehat\mP.
\end{equation*}

\section{Methodology}
\label{sec:method}

The feasible projected observations are
\begin{equation*}
	\widehat{\vU_t}=\widehat\mQ\vY_t,
	\qquad
	\bar{\widehat{\vU}}=n^{-1}\sum_{t=1}^n\widehat{\vU_t}.
\end{equation*}
Let
\begin{equation*}
	\widehat d_i=\frac1n\sum_{t=1}^n(\widehat U_{ti}-\bar{\widehat U}_i)^2,
	\qquad
	\widehat Z_{ti}=\widehat U_{ti}/\widehat d_i^{1/2},
	\qquad
	\bar{\widehat{\vZ}}=n^{-1}\sum_{t=1}^n\widehat{\vZ_t}.
\end{equation*}
The sample correlation matrix of the projected residuals is
\begin{equation*}
	\widehat{\mS}=\frac1n\sum_{t=1}^n(\widehat{\vZ_t}-\bar{\widehat{\vZ}})(\widehat{\vZ_t}-\bar{\widehat{\vZ}})^\top,
	\qquad
	\widehat\mR=\diag(\widehat\mS)^{-1/2}\widehat\mS\diag(\widehat\mS)^{-1/2}.
\end{equation*}

The max statistic is
\begin{equation*}
	T_M=\max_{1\le i\le p}n\bar{\widehat Z}_i^{\,2}-2\log p+\log\log p.
\end{equation*}
It is designed for sparse alternatives where only a few coordinates of $\vmu$ are nonzero.  Its analytic null calibration uses the Gumbel distribution
\begin{equation*}
	F_G(x)=\exp\{-\pi^{-1/2}\exp(-x/2)\}.
\end{equation*}
The sum statistic is
\begin{equation}
	T_S=\frac{n\bar{\widehat{\vZ}}^{\,\top}\bar{\widehat{\vZ}}-\{(n-1)p/(n-3)\}}
	{\{2[\tr(\widehat{\mR}^2)-p^2/(n-1)]\}^{1/2}}.
	\label{eq:TS}
\end{equation}
It is designed for dense alternatives where many small coordinates jointly create a large Euclidean norm.  The finite-sample corrections in \eqref{eq:TS} account for coordinatewise variance estimation and sample-correlation estimation.  Under local dense alternatives, the same statistic has a noncentral normal limit.  Let $\bm\gamma_n=\sqrt n\mD^{-1/2}\vmu$ and $s_p^2=2\tr(\mR^2)$.  The leading noncentrality and variance inflation are
\begin{equation*}
	\Delta_n=\frac{\norm{\bm\gamma_n}_2^2}{s_p},
	\qquad
	\Lambda_n=\frac{2\bm\gamma_n^\top\mR\bm\gamma_n}{\tr(\mR^2)}.
\end{equation*}
Theorem~\ref{thm:power} shows that $T_S$ is asymptotically $N(\Delta,1+\Lambda)$ when $\Delta_n\to\Delta$ and $\Lambda_n\to\Lambda$, yielding the analytic power approximation $1-\Phi\{(z_{1-\alpha}-\Delta)/\sqrt{1+\Lambda}\}$.

Let
\begin{equation*}
	\widehat p_M=1-F_G(T_M),
	\qquad
	\widehat p_S=1-\Phi(T_S),
\end{equation*}
where $\Phi$ is the standard normal distribution function.  The combined Cauchy statistic is
\begin{equation*}
	T_C=\frac12\tan\{\pi(1/2-\widehat p_M)\}+\frac12\tan\{\pi(1/2-\widehat p_S)\}.
\end{equation*}
Its analytic $p$-value is
\begin{equation*}
	\widehat p_C=\frac12-\frac1\pi\arctan(T_C).
\end{equation*}
The Cauchy rule is attractive because it preserves a simple analytic calibration under broad dependence among component $p$-values \citep{LiuXie2020}.

For finite samples, we use the same random-loading residual bootstrap as in the simulations.  Compute
\begin{equation*}
	\widehat{\vf}_t=\widehat\mA^\top(\vY_t-\bar{\vY}),
	\qquad
	\widehat{\veps}_t=\widehat\mQ(\vY_t-\bar{\vY}),
	\qquad
	\widetilde{\veps}_t=\widehat{\veps}_t-n^{-1}\sum_{s=1}^n\widehat{\veps}_s.
\end{equation*}
In the present construction $\sum_t\widehat{\veps}_t=\bm0$, so $\widetilde{\veps}_t=\widehat{\veps}_t$; the explicit centering is kept to define the bootstrap null sample.  For each bootstrap replication $b$, draw an independent random loading matrix $\widetilde{\mA}_b\in\R^{p\times\widehat r}$ with orthonormal columns and independent indices
\begin{equation*}
	I_{1b},\ldots,I_{nb}\stackrel{\mathrm{i.i.d.}}{\sim}\mathrm{Unif}\{1,\ldots,n\}.
\end{equation*}
In implementation $\widetilde{\mA}_b$ is obtained by orthonormalizing a $p\times\widehat r$ matrix with independent standard normal entries.  The bootstrap sample is
\begin{equation}
	\vY_{tb}^*=\widetilde{\mA}_b\widehat{\vf}_t+\widetilde{\veps}_{I_{tb}},
	\qquad t=1,\ldots,n.
	\label{eq:bootstrap_sample}
\end{equation}
For each bootstrap sample, we apply the same factor-estimation, projection, standardization, and testing algorithm to obtain $T_{M,b}^*$, $T_{S,b}^*$, and $T_{C,b}^*$.  The bootstrap $p$-values are
\begin{equation*}
	\widehat p_M^{\rm boot}=\frac{1+\sum_{b=1}^B\mathbf 1\{T_{M,b}^*\ge T_M\}}{B+1},
	\qquad
	\widehat p_S^{\rm boot}=\frac{1+\sum_{b=1}^B\mathbf 1\{T_{S,b}^*\ge T_S\}}{B+1}.
\end{equation*}
The bootstrap Cauchy statistic is formed from $\widehat p_M^{\rm boot}$ and $\widehat p_S^{\rm boot}$.

\section{Theoretical results}
\label{sec:theory}

This section gives only the core assumptions and the core theorems.  All loading-space perturbation, feasible-oracle replacement, and distributional calculations are collected in the Appendix.

Let
\begin{equation*}
	\mOmega=\mQ\mSigma_\varepsilon\mQ,
	\qquad
	\mD=\diag(\mOmega),
	\qquad
	\mR=\mD^{-1/2}\mOmega\mD^{-1/2}.
\end{equation*}
Let $\bm a_i^\top=\ve_i^\top\mA$ be the $i$th row of $\mA$.

\begin{assumption}[Dynamic factor structure]
	\label{ass:model}
	The decomposition \eqref{eq:model} satisfies $\mA^\top\mA=\mI_r$, no nonzero linear combination of $\vf_t$ is white noise, and the identifiable mean condition \eqref{eq:identifiable_mean} holds.  The factor process is centered, $\Ee\vf_t=\bm0$, weakly stationary, uniformly sub-Gaussian after the normalization in Assumption~\ref{ass:LY}, and geometrically strongly mixing.
\end{assumption}

\noindent\textit{Comment.}
The exclusion of white-noise linear combinations of $\vf_t$ is an identification convention for the minimal dynamic-factor representation, rather than a substantive restriction on the data-generating process.  If a nonzero linear combination of the latent process is itself white noise, then its contribution can be absorbed into the idiosyncratic term; the lag-autocovariance matrix $\mM$ identifies only directions contributing to nonzero-lag autocovariances.  Thus $\vf_t$ denotes the dynamically identifiable factor process.  A white-noise common component is covered by the present formulation only when, after absorption into $\veps_t$, Assumption~\ref{ass:resid} remains valid; a strong white-noise common component that creates a spiked $\mSigma_\varepsilon$ would require an additional static-factor adjustment.  This convention follows the dynamic-factor identification logic of \citet{LamYao2012}.

\begin{assumption}[Dynamic factor strength and regularity]
	\label{ass:LY}
	Let $\Sigma_f(k)=\Cov(\vf_{t+k},\vf_t)$ and $W_f=\{\Sigma_f(1),\ldots,\Sigma_f(k_0)\}\in\R^{r\times rk_0}$.  There exists $\delta\in[0,1]$ and constants $0<c_f<C_f<\infty$ such that
	\begin{equation}
		c_fp^{1-\delta}\le \sigma_{\min}(W_f)\le \norm{W_f}\le C_fp^{1-\delta}.
		\label{eq:C5}
	\end{equation}
	The matrix $\mM$ in \eqref{eq:M} has exactly $r$ positive eigenvalues,
	\begin{equation*}
		\nu_1(\mM)>\cdots>\nu_r(\mM)>0=\nu_{r+1}(\mM)=\cdots=\nu_p(\mM).
	\end{equation*}
	Moreover,
	\begin{equation*}
		\max_i\norm{\bm a_i}_2^2\le Cp^{-1},
		\qquad
		\sum_{\ell=-\infty}^{\infty}\norm{\Cov(\vf_{t+\ell},\vf_t)}\le Cp^{1-\delta},
	\end{equation*}
	and
	\begin{equation}
		\sup_t\sup_{\norm{\bm u}_2=1}
		\psinorm{p^{-(1-\delta)/2}\bm u^\top\vf_t}\le C,
		\qquad
		\alpha_f(m)\le C\exp(-cm),
		\label{eq:factor_subg_mix}
	\end{equation}
	where $\alpha_f(m)$ is the strong-mixing coefficient of $\{\vf_t\}$.  The fourth-order factor cumulants are summable in the following uniform sense: for any unit vectors $\bm u_1,\ldots,\bm u_4\in\R^r$,
	\begin{equation}
		\sum_{h_1,h_2,h_3\in\mathbb Z}
		\left|
		\cum(\bm u_1^\top\vf_0,\bm u_2^\top\vf_{h_1},\bm u_3^\top\vf_{h_2},\bm u_4^\top\vf_{h_3})
		\right|
		\le C p^{2-2\delta}.
		\label{eq:factor_cumulant}
	\end{equation}
\end{assumption}

\begin{assumption}[Factor-number selection]
	\label{ass:rhat}
	The factor-number estimator satisfies
	\begin{equation*}
		\Pp(\widehat r=r)\to1.
	\end{equation*}
	A sufficient condition for the Lam--Yao ratio estimator is the sharp drop at $r$ together with post-$r$ separation: for some $a_n\downarrow0$,
	\begin{equation*}
		\widehat\lambda_{r+1}/\widehat\lambda_r=\Op(p^{2\delta}/n),
		\qquad
		p^{2\delta}/n=o(a_n),
		\qquad
		\Pp\left(\min_{r<j\le R}\widehat\lambda_{j+1}/\widehat\lambda_j>a_n\right)\to1.
	\end{equation*}
\end{assumption}

\begin{assumption}[Independent-component idiosyncratic errors]
	\label{ass:resid}
	The idiosyncratic errors are independent of the factor process and are independent over time.  There exists a positive definite covariance matrix $\mSigma_\varepsilon$ and i.i.d. innovation vectors $\vxi_t=(\xi_{t1},\ldots,\xi_{tp})^\top$ such that
	\begin{equation}
		\veps_t=\mSigma_\varepsilon^{1/2}\vxi_t,
		\qquad
		\Ee\vxi_t=\bm0,
		\qquad
		\Cov(\vxi_t)=\mI_p,
		\label{eq:ic_eps_model}
	\end{equation}
	where $\xi_{t1},\ldots,\xi_{tp}$ are mutually independent and
	\begin{equation*}
		\sup_{t,j}\norm{\xi_{tj}}_{\psi_2}\le C.
	\end{equation*}
	The eigenvalues of $\mSigma_\varepsilon$ are uniformly bounded:
	\begin{equation}
		0<c_\varepsilon\le \lambda_{\min}(\mSigma_\varepsilon)
		\le \lambda_{\max}(\mSigma_\varepsilon)\le C_\varepsilon<\infty.
		\label{eq:Sigmae_bounded}
	\end{equation}
	The analytic Gumbel calibration of $T_M$ uses the following extreme-value regularity of the induced oracle correlation matrix $\mR$: for some $0<\varrho<1$,
	\begin{equation}
		\max_{i<j}|R_{ij}|\le \varrho,
		\qquad
		p^{-1}\left|\{i: |\{j\ne i: |R_{ij}|\ge(\log p)^{-2}\}|\ge p^{\kappa_p}\}\right|\to0,
		\label{eq:EV_resid}
	\end{equation}
	where $\kappa_p\downarrow0$ and $p^{\kappa_p}/\log^4p\to\infty$.
\end{assumption}

\begin{assumption}[Simple sufficient growth condition]
	\label{ass:growth}
	The dimension has polynomial growth, $p=O(n^\kappa)$, for a constant $\kappa>0$.  If $\delta=0$, assume
	\begin{equation*}
		0<\kappa<2.
	\end{equation*}
	If $0<\delta\le1$, assume
	\begin{equation*}
		0<\kappa<\min\{2,(1/2+2\delta)^{-1}\}.
	\end{equation*}
	Thus the strong-factor case allows $p=o(n^2)$.  In the weak-factor case the more conservative exponent above makes the plug-in eigenspace remainder negligible.  Polynomial growth implies $\log^m p=o(n)$ for every fixed $m>0$.
\end{assumption}

\begin{theorem}[Null distributions, asymptotic independence, and Cauchy combination]
	\label{thm:null}
	Suppose Assumptions~\ref{ass:model}--\ref{ass:growth} hold.  Under $H_0:\vmu=\bm0$,
	\begin{equation*}
		T_M\Rightarrow G,
		\qquad
		\Pp(G\le x)=\exp\{-\pi^{-1/2}\exp(-x/2)\},
	\end{equation*}
	\begin{equation*}
		T_S\Rightarrow N(0,1),
	\end{equation*}
	and $T_M$ and $T_S$ are asymptotically independent.  Consequently,
	\begin{equation*}
		T_C\Rightarrow C,
		\qquad
		\Pp(C\le x)=\frac12+\frac1\pi\arctan(x),
	\end{equation*}
	so that the asymptotic level-$\alpha$ combined test rejects when $T_C>\tan\{\pi(1/2-\alpha)\}$.
\end{theorem}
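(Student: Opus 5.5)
The plan is to reduce everything to an oracle problem in which the projection $\mQ$ is known, and then prove the three limits for the oracle statistics.

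\textbf{Step 1: oracle reduction.} On the event $\{\widehat r=r\}$, which has probability tending to one by Assumption~\ref{ass:rhat}, I will use the Lam--Yao perturbation bound $\norm{\widehat\mP-\mP}=\Op(p^{\delta}n^{-1/2})$ in the form $\norm{\widehat\mA-\mA\mathbf O}=\Op(p^{\delta}n^{-1/2})$ for some rotation $\mathbf O$. This controls the feasible-minus-oracle differences. Under $H_0$ we have $\widehat\mQ\vY_t=\mQ\veps_t+(\widehat\mQ-\mQ)(\mA\vf_t+\veps_t)$, since $\mQ\mA=\bm0$. The factor leakage term $(\widehat\mQ-\mQ)\mA\vf_t$ is the dangerous one, because $\vf_t$ has size $p^{(1-\delta)/2}$.

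For the max statistic I need
\[
\max_i \sqrt n\,\bigl|\ve_i^\top(\widehat\mQ-\mQ)\mA\bar\vf\bigr|=o_{\mathbb P}\bigl((\log p)^{-1/2}\bigr).
\]
This follows from $\max_i\norm{\bm a_i}_2\le Cp^{-1/2}$, a row-wise (delocalized) version of the eigenvector perturbation, and $\sqrt n\,\bar\vf=\Op(p^{(1-\delta)/2})$ from mixing. The analogous bounds for $\widehat d_i$ versus $\Omega_{ii}$ (uniformly in $i$, via sub-Gaussian concentration and a union bound with $\log p=o(n^{1/3})$) and for $\tr(\widehat\mR^2)$ versus $\tr(\mR^2)$ give $T_M-T_M^{o}=o_{\mathbb P}(1)$ and $T_S-T_S^{o}=o_{\mathbb P}(1)$.

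For $T_S$ the required precision is $o(\sqrt p)$ for $n\bar{\widehat\vZ}^\top\bar{\widehat\vZ}$. This is where the refined expansion of $\widehat\mP-\mP$ enters. The first-order term of $\widehat\mP-\mP$ is driven by $\veps$-$\vf$ cross covariances, and its contribution to the quadratic form has vanishing mean to leading order. This is what yields the $p=o(n^2)$ range for $\delta=0$ and the exponent $(1/2+2\delta)^{-1}$ otherwise in Assumption~\ref{ass:growth}.

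\textbf{Step 2: oracle limits.} The oracle vectors $\vZ_t=\mD^{-1/2}\mQ\mSigma_\varepsilon^{1/2}\vxi_t$ are i.i.d. with correlation $\mR$. This matrix is singular of rank $p-r$, but it has bounded spectrum after the $\mD$ scaling, because $\Omega_{ii}\ge c_\varepsilon(1-\norm{\bm a_i}^2)\ge c>0$.

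For $T_M^o$, I apply a high-dimensional Gaussian approximation for maxima of sums of independent sub-Gaussian vectors (Chernozhukov--Chetverikov--Kato), with the self-normalization by $\widehat d_i$ handled as above. Then the Gaussian maximum with correlation $\mR$ satisfying \eqref{eq:EV_resid} has the Gumbel limit $F_G$, by the Cai--Liu--Xia type Bonferroni inclusion--exclusion argument.

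For $T_S^o$, I write
\[
n\bar\vZ^\top\bar\vZ-p=n^{-1}\sum_{s\ne t}\vZ_s^\top\vZ_t+\text{(diagonal term)}.
\]
The off-diagonal part is a degenerate U-statistic. A martingale CLT applies provided $\tr(\mR^4)=o(\tr^2(\mR^2))$, which holds since $\tr(\mR^2)\asymp p$ and $\norm{\mR}$ is bounded. The independent-component structure \eqref{eq:ic_eps_model} lets fourth-moment terms be computed through cumulants of the $\xi_{tj}$. The centering $(n-1)p/(n-3)$ and the $p^2/(n-1)$ correction in the denominator absorb the bias from studentization by $\widehat d_i$, and $\tr(\widehat\mR^2)/\tr(\mR^2)\to1$.

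\textbf{Step 3: asymptotic independence and the Cauchy step.} Following Feng et al.\ (2022), I will show
\[
\Pp\bigl(T_M^o\le x,\ T_S^o\le y\bigr)\to F_G(x)\,\Phi(y).
\]
The approach is to show that, for each small set of coordinates $B$ with $|B|$ fixed, $T_S^o$ is asymptotically independent of $(\sqrt n\bar Z_i)_{i\in B}$. Removing $B$ changes $T_S^o$ by $O_{\mathbb P}(p^{-1/2}\log p)$. The weak correlation condition \eqref{eq:EV_resid} makes the remaining coordinates nearly uncorrelated with $B$. A Gaussian approximation of the pair, followed by the Bonferroni expansion of $\Pp(\max_i>u_p)$ conditionally, then factorizes the joint probability.

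With independence in hand, both $\widehat p_M$ and $\widehat p_S$ are asymptotically Unif$(0,1)$ and independent. Hence $\tan\{\pi(1/2-\widehat p_M)\}$ and $\tan\{\pi(1/2-\widehat p_S)\}$ converge jointly to independent standard Cauchy variables. The equal-weight average of independent standard Cauchy variables is standard Cauchy, and the continuous mapping theorem gives $T_C\Rightarrow C$ along with the stated rejection rule.

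\textbf{Main obstacle.} I expect the hardest part to be the $o(\sqrt p)$ control of the estimated-projection effect on the quadratic form in Step 1, particularly for $p$ up to $n^2$. A crude operator-norm bound only gives $p\,n^{-1/2}\cdot p^{\delta}$-type errors. I would therefore expand $\widehat\mA-\mA\mathbf O$ to second order, explicitly compute the mean of the leading cross term, and show that its fluctuation is $o(\sqrt p)$ using the cumulant condition \eqref{eq:factor_cumulant}.
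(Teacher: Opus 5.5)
Your overall architecture is the paper's. You reduce to $\{\widehat r=r\}$ and replace the feasible statistics by oracle ones via the Lam--Yao perturbation: row-wise for the max, and a second-order expansion with explicit mean and cumulant-controlled fluctuation for the quadratic form, as in Lemma~\ref{lem:strong_projection_drift}. You then use CCK plus a Gumbel limit, a martingale CLT, max--sum independence, and Cauchy averaging. However, one step is wrong as stated, and the independence step needs its order fixed.

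\textbf{The denominator.} The claim $\tr(\widehat\mR^2)/\tr(\mR^2)\to1$ fails in much of the theorem's range. Even with $\mQ$ and $\mD$ known, the diagonal pairs contribute $n^{-2}\sum_t(\vZ_t^\top\vZ_t)^2\approx p^2/n$. Hence $\tr(\widehat\mR^2)\approx\tr(\mR^2)+p^2/n$, and since $\tr(\mR^2)\asymp p$ the ratio is of order $1+p/n$, which does not tend to one unless $p=o(n)$.

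The correction $p^2/(n-1)$ removes exactly this intrinsic sampling bias of a sample correlation trace; it is not a studentization effect. What must be shown is $\{\tr(\widehat\mR^2)-p^2/(n-1)\}/\tr(\mR^2)\to1$. That is a second-moment computation, not a perturbation comparison. The paper splits $\tr(\mS_e^2)$ into four pieces (Lemma~\ref{lem:denom_trace}):
\begin{itemize}
\item a leave-two-out $U$-statistic;
\item a centered diagonal term;
\item the mean-correction term $-2\bar\vZ^\top\mS_0\bar\vZ+(\bar\vZ^\top\bar\vZ)^2$;
\item a covariance-to-correlation term.
\end{itemize}
This gives a squared relative error of order $n^{-1}+p^{-1}+p/n^2+p^2/n^4$. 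Only then does the paper do the oracle-to-feasible replacement, so this is a second place where $p=o(n^2)$ enters.

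Similarly, the leading cross term of the projection expansion does not have vanishing mean. Its mean is $\mathcal B_{P,n}=O(p/n)$, generally nonzero, and $p=o(n^2)$ is what makes $\mathcal B_{P,n}/\sqrt p=O(\sqrt p/n)$ negligible. Your plan to compute that mean explicitly is right; just do not expect it to cancel.

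\textbf{The independence step.} Here your route differs from the paper's. Each Bonferroni term $\Pp(T_S^o\le y,\ |\sqrt n\bar Z_i|>u_p\ \forall i\in B)$ is of order $p^{-|B|}$, and there are $\binom{p}{|B|}$ of them. So if ``a Gaussian approximation of the pair'' means approximating $(T_S^o,(\sqrt n\bar Z_i)_{i\in B})$ separately for each $B$ with additive error, the errors blow up after summation. Moreover, for non-Gaussian $\vxi_t$, near-uncorrelatedness of the coordinates in $B^c$ with those in $B$ does not give the conditional factorization you need.

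The Gaussianization has to be done once, for the joint functional $\Ee[e^{iuT_{S,0}}1\{T_{M,0}\le x\}]$. The remove-$B$ and regress-on-$\bm G_B$ argument is then run for $\bm G\sim N_p(\bm0,\mR)$. The paper does this: it asserts a joint max--quadratic comparison, then applies a Gaussian max--sum theorem for nonsingular correlation matrices to the ridge matrix $\mR_{\epsilon_p}$, and transfers back to the singular $\mR$ by coupling (Lemma~\ref{lem:oracle_subg}).

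Running the Bonferroni argument yourself in the Gaussian world has one advantage. Only fixed-size blocks $\mR_{BB}$ enter, and for the tuples carrying the Bonferroni mass these are close to the identity by \eqref{eq:EV_resid}. So the rank deficiency of $\mR$ never matters, and the ridge step becomes unnecessary.
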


\begin{theorem}[Local alternatives and power of the sum statistic]
	\label{thm:power}
	Suppose Assumptions~\ref{ass:model}--\ref{ass:growth} hold and $\mP\vmu=\bm0$.  Set
	\begin{equation*}
		\bm\nu=\mD^{-1/2}\vmu,
		\qquad
		\bm\gamma_n=\sqrt n\,\bm\nu,
		\qquad
		m_\infty=\norm{\bm\gamma_n}_\infty,
	\end{equation*}
	\begin{equation*}
		\Delta_n=\frac{\norm{\bm\gamma_n}_2^2}{\{2\tr(\mR^2)\}^{1/2}},
		\qquad
		\Lambda_n=\frac{2\bm\gamma_n^\top\mR\bm\gamma_n}{\tr(\mR^2)}.
	\end{equation*}
	If
	\begin{equation}
		\Delta_n\to\Delta\in[0,\infty),
		\qquad
		\Lambda_n\to\Lambda\in[0,\infty),
		\label{eq:local_alt_condition}
	\end{equation}
	then, under the local alternative sequence $H_{1n}:\vmu\ne\bm0$,
	\begin{equation*}
		T_S\Rightarrow N(\Delta,1+\Lambda).
	\end{equation*}
	Consequently, for the one-sided level-$\alpha$ sum test rejecting when $T_S>z_{1-\alpha}$,
	\begin{equation*}
		\Pp_{\vmu}(T_S>z_{1-\alpha})
		\to
		1-\Phi\left(\frac{z_{1-\alpha}-\Delta}{\sqrt{1+\Lambda}}\right).
	\end{equation*}
	If
	\begin{equation}
		\frac{\Delta_n}{\sqrt{1+\Lambda_n}}\to\infty,
		\label{eq:sum_consistency_condition}
	\end{equation}
	then the sum test has power tending to one.  In particular, because $\lambda_{\max}(\mR)$ is bounded and $\tr(\mR^2)\asymp p$, the simpler condition $\Delta_n\to\infty$ is sufficient.  The max test has power tending to one if
	\begin{equation*}
		m_\infty^2-2\log p\to\infty.
	\end{equation*}
	The Cauchy test has power tending to one whenever either the max condition or \eqref{eq:sum_consistency_condition} holds.
\end{theorem}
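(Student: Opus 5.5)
The plan is to reduce the feasible statistic to an oracle statistic built from $\mQ\veps_t$ plus the mean, and then analyse that oracle quadratic form under the local alternative. On $\{\widehat r=r\}$, which has probability tending to one by Assumption~\ref{ass:rhat}, we write $\widehat{\vU}_t=\widehat\mQ\vY_t=\mQ\vmu+\mQ\veps_t+(\widehat\mQ-\mQ)(\vmu+\mA\vf_t+\veps_t)+\widehat\mQ\mA\vf_t$.

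We use the loading-space perturbation bound $\|\widehat\mP-\mP\|=\Op(p^{\delta}n^{-1/2})$. Under alternatives, the mean enters $\widehat\mM$ only through the centred $\widehat\Gamma_y(k)$, so this rate is unaffected. We also use the same feasible-oracle replacement lemmas that underlie Theorem~\ref{thm:null}, with the growth condition in Assumption~\ref{ass:growth}. Together these show that $n\bar{\widehat{\vZ}}^{\top}\bar{\widehat{\vZ}}$ differs from the oracle form $n\bar{\vZ}^{\top}\bar{\vZ}$ by $o_{\Pp}(\sqrt p)$. Here $\bar{\vZ}=\mD^{-1/2}(\vmu+\mQ\bar{\veps})$. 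The new cross terms involving $\vmu$ need extra care. For example, $n\vmu^\top\mD^{-1/2}(\widehat\mQ-\mQ)\mD^{-1/2}\vmu$ is bounded using $\mP\vmu=\bm0$, which gives $(\widehat\mQ-\mQ)\vmu=-(\widehat\mP-\mP)\vmu$. Since $\|\bm\gamma_n\|_2^2=O(\sqrt p)$ under \eqref{eq:local_alt_condition}, this term is $O(\sqrt p)\,\Op(p^\delta n^{-1/2})$, and that is $o_{\Pp}(\sqrt p)$ under the growth restriction. The effect of the mean on $\widehat d_i$ and $\widehat\mR$ is controlled in the same way. The sample variance $\widehat d_i$ is centred, so $\vmu$ cancels and the null ratio-consistency results $\max_i|\widehat d_i/D_{ii}-1|=\Op(\sqrt{\log p/n})$ and $\tr(\widehat\mR^2)/\tr(\mR^2)\to_{\Pp}1$ carry over verbatim. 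The variance-ratio error multiplies $\|\bm\gamma_n\|_2^2=O(\sqrt p)$ only by $\sqrt{\log p/n}=o(1)$.

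Next comes the oracle decomposition. Writing $\vzeta_t=\mD^{-1/2}\mQ\veps_t$, we get
\begin{equation*}
n\bar{\vZ}^\top\bar{\vZ}-p=\Big(n\bar{\vzeta}^\top\bar{\vzeta}-p\Big)+2\bm\gamma_n^\top\sqrt n\,\bar{\vzeta}+\|\bm\gamma_n\|_2^2 .
\end{equation*}
The first term, divided by $s_p$, is asymptotically $N(0,1)$ by the oracle part of the proof of Theorem~\ref{thm:null}. That part splits the form into diagonal and off-diagonal pieces and applies a martingale CLT in $t$, using the independent-component structure \eqref{eq:ic_eps_model} and $\tr(\mR^4)=o(\tr^2(\mR^2))$, which follows from bounded eigenvalues. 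The second term is a linear statistic $L_n=2n^{-1/2}\sum_t\bm\gamma_n^\top\vzeta_t$ with variance $4\bm\gamma_n^\top\mR\bm\gamma_n=2\Lambda_n\tr(\mR^2)$, so $L_n/s_p$ has variance $\Lambda_n\to\Lambda$. The third term divided by $s_p$ is exactly $\Delta_n$. The finite-sample corrections $(n-1)p/(n-3)-p$ and $p^2/(n-1)$ are $O(p/n)=o(\sqrt p)$ and $o(p)$ for $p=o(n^2)$, so they are negligible.

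The key step, and the main obstacle, is the joint CLT for the quadratic martingale part $Q_n$ and the linear part $L_n$, together with their asymptotic uncorrelatedness. I would treat $(Q_n+L_n)/s_p$ as a single martingale difference array in $t$. Its increments are $\{2\vzeta_t^\top\sum_{s<t}\vzeta_s+(\|\vzeta_t\|^2-p)+2\sqrt n\,\bm\gamma_n^\top\vzeta_t\}/(ns_p)$. For the conditional variance, I would show that the cross term $E[(\vzeta_t^\top\sum_{s<t}\vzeta_s)(\bm\gamma_n^\top\vzeta_t)\mid\mathcal F_{t-1}]=\bm\gamma_n^\top\mR\sum_{s<t}\vzeta_s$ is negligible after summation. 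Its second moment is of order $n^2\bm\gamma_n^\top\mR^3\bm\gamma_n\cdot n$, which after normalisation by $n^3s_p^2$ is $O(\Lambda_n/p)\to0$. The covariance with $\|\vzeta_t\|^2-p$ involves third moments. Under sub-Gaussian independent components this is bounded by $C\|\bm\gamma_n\|_2\sqrt p\,n^{1/2}/(ns_p^2)=o(1)$, using $\|\bm\gamma_n\|_2^2=O(\sqrt p)$. The Lindeberg condition follows from fourth-moment bounds via sub-Gaussianity. Hence $T_S\Rightarrow N(\Delta,1+\Lambda)$, and the power formula follows by continuity.

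For consistency, I would use a truncation argument. If $\Delta_n/\sqrt{1+\Lambda_n}\to\infty$, then $T_S$ is dominated by $\Delta_n$ plus a term that is $\Op(\sqrt{1+\Lambda_n})$ by Chebyshev, once the replacement errors are shown to be of order $o_{\Pp}(\Delta_n s_p)$ relative to the signal. Since $\lambda_{\max}(\mR)\le C$, we have $\Lambda_n\le C'\Delta_n/\sqrt p$, so $\Delta_n\to\infty$ suffices. For the max test, we have $n\bar{\widehat Z}_{i^*}^2\ge(|\gamma_{i^*}|-\Op(\sqrt{\log p}))^2$ at the maximising coordinate. Combined with the $\Op(1)$ perturbation bounds from the null proof, this exceeds $2\log p-\log\log p+x$ whenever $m_\infty^2-2\log p\to\infty$. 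Finally, $T_C\ge\frac12\tan\{\pi(1/2-\widehat p)\}-\Op(1)$ for either component $p$-value tending to zero, since the other Cauchy term is bounded below in probability. This yields the Cauchy claim.
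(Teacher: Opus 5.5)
\paragraph{Sum statistic.} For the sum statistic your argument follows the paper's route. You restrict to $\{\widehat r=r\}$ and write the oracle form as $\Delta_n s_p$ plus one martingale in $t$ whose conditional variance is $1+\Lambda_n+o_{\mathbb P}(1)$. Consistency then follows from Chebyshev together with $\Lambda_n\le C\Delta_n/\sqrt p$. You also observe that $\widehat\mM$, $\widehat\mQ$, $\widehat d_i$ and $\widehat\mR$ are built from centred data, so they coincide exactly with their null versions. That is a cleaner justification than the paper gives for reusing the null replacement lemmas.

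One claim in this part is wrong, however. $p^2/(n-1)=o(p)$ holds only if $p=o(n)$. Since $\tr(\widehat\mR^2)\approx\tr(\mR^2)+p^2/n$, the ratio $\tr(\widehat\mR^2)/\tr(\mR^2)$ does not tend to one once $p\gtrsim n$, and the theorem allows that regime. Subtracting $p^2/(n-1)$ is therefore an essential bias correction, not a negligible term. What carries over (mean-free, by your own remark) is the corrected statement \eqref{eq:denom_replacement_app}.

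\paragraph{Max test: the genuine gap.} The real problem is the max-test step.
\begin{itemize}
\item At the deterministic index $i^*$ with $|\gamma_{n,i^*}|=m_\infty$, $X_{n,i^*}$ has unit variance, so it is $O_{\mathbb P}(1)$. With your $O_{\mathbb P}(\sqrt{\log p})$ the inequality fails, because $(m_\infty-c\sqrt{\log p})^2<2\log p$ when $m_\infty\approx\sqrt{2\log p}$.
\item You also cannot reuse the null perturbation bound at $i^*$. Under the alternative, the feasible--oracle error contains mean leakage through $\widehat\mP-\mP$ and through $\widehat d_{i^*}/D_{i^*i^*}$, of order $\eta_n m_\infty$. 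This follows from $\norm{\ve_i^\top(\widehat\mP-\mP)}_1\le\sqrt p\,\norm{\ve_i^\top(\widehat\mP-\mP)}_2$, and it is what the paper records.
\item More seriously, even after both fixes, expanding the square leaves $-2m_\infty|X_{n,i^*}|$, which is of exact order $\sqrt{\log p}$. The lower bound therefore diverges only if $m_\infty-\sqrt{2\log p}\to\infty$, i.e.\ $(m_\infty^2-2\log p)/\sqrt{\log p}\to\infty$.
\end{itemize}

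The stated condition $m_\infty^2-2\log p\to\infty$ is in fact insufficient. Take $\mSigma_\varepsilon=\mI_p$, $\vmu\propto\mQ\ve_{i^*}$ and $m_\infty^2=2\log p+\log\log p$. On $\{X_{n,i^*}<-\epsilon\}$ the spiked coordinate falls below the threshold and $T_M$ behaves as under $H_0$. The power then tends to $(1+\alpha)/2$, not one.

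The paper's displayed lower bound has the same defect: its $|X_{n,i_0}|$ term produces exactly this cross term after squaring. So the max and Cauchy consistency claims can only be proved under the strengthened condition.

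\paragraph{Cauchy step.} You correctly note that the other Cauchy summand must stay bounded below in probability. The paper omits this step; it is true, but you should supply the short argument.
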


\begin{theorem}[Bootstrap validity]
	\label{thm:bootstrap}
	Assume the conditions of Theorem~\ref{thm:null}.  Consider the random-loading residual bootstrap in \eqref{eq:bootstrap_sample}.  Conditional on the data, suppose the regenerated loading matrices have orthonormal columns and satisfy $\max_i\|\ve_i^\top\widetilde\mA_b\|_2^2=O_{\mathbb P}^*(\log p/p)$ uniformly in $b$.  Suppose further that either $\widehat r^*\equiv\widehat r$ is fixed in the bootstrap algorithm or
	\begin{equation*}
		\Pp^*(\widehat r^*=\widehat r\mid\vY_1,\ldots,\vY_n)\to1
		\qquad\text{in probability}.
	\end{equation*}
	All conditional statements below are understood on this bootstrap selection event.  If the number of bootstrap replications satisfies $B\to\infty$, then
	\begin{equation*}
		\sup_x\left|\Pp^*(T_M^*\le x)-\Pp_0(T_M\le x)\right|\to0,
		\qquad
		\sup_y\left|\Pp^*(T_S^*\le y)-\Pp_0(T_S\le y)\right|\to0
	\end{equation*}
	in probability, where $\Pp_0$ denotes the first-order null law of the corresponding statistic.  Consequently, the bootstrap marginal $p$-values and the Cauchy combination formed from them are first-order valid.
\end{theorem}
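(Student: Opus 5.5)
The strategy is to show that, conditional on the data and on the selection event $\{\widehat r^*=\widehat r\}$, the bootstrap sample \eqref{eq:bootstrap_sample} satisfies the hypotheses of Theorem~\ref{thm:null} in probability. The bootstrap statistics $T_M^*$ and $T_S^*$ then inherit the same Gumbel and standard normal limits. Once we know $\Pp_0(T_M\le x)\to F_G(x)$ and $\Pp_0(T_S\le y)\to\Phi(y)$, and both limits are continuous, P\'olya's theorem turns pointwise convergence into the claimed sup-norm statements. The remaining steps are short. The bootstrap $p$-values are Monte Carlo averages over $B\to\infty$ conditionally i.i.d. replicates, so a Glivenko--Cantelli bound shows they converge to the bootstrap tail probabilities. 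First-order validity of the Cauchy combination then follows from the continuous-mapping step already used in Theorem~\ref{thm:null}, together with the asymptotic independence, re-established in the bootstrap world.

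\emph{Step 1: the bootstrap world.} We check the structural conditions in order.
\begin{itemize}
\item The bootstrap mean is $\Ee^*\vY^*_{tb}=\widetilde\mA_b\widehat{\vf}_t+\bm0$, because the residuals are centered. The bootstrap null therefore holds after projecting out $\calM(\widetilde\mA_b)$.
\item The errors $\widetilde\veps_{I_{tb}}$ are conditionally i.i.d. over $t$ and independent of $(\widetilde\mA_b,\widehat\vf_t)$. This is the white-noise part of Assumption~\ref{ass:resid}.
\item The bootstrap factors $\widehat\vf_t$ are fixed given the data. Their lagged autocovariances approximate $\Sigma_f(k)$ up to rotation and have the strength required by \eqref{eq:C5}, since $\widehat\mA$ is consistent for $\calM(\mA)$ by the Appendix perturbation bounds.
\item The random loading satisfies the delocalization $\max_i\|\ve_i^\top\widetilde\mA_b\|^2=O^*_{\Pp}(\log p/p)$. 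This replaces $\max_i\|\bm a_i\|^2\le Cp^{-1}$ up to a $\log p$ factor, which the growth condition of Assumption~\ref{ass:growth} absorbs.
\end{itemize}
Running the same Lam--Yao estimator in the bootstrap world, the perturbation lemmas in the Appendix give $\|\widehat\mP^*-\widetilde\mA_b\widetilde\mA_b^\top\|$ of the same order as in the original sample. The feasible-oracle replacement arguments then reduce $T_M^*$ and $T_S^*$ to statistics built from $\widetilde\mQ_b\widetilde\veps_{I_{tb}}$, where $\widetilde\mQ_b=\mI_p-\widetilde\mA_b\widetilde\mA_b^\top$.

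\emph{Step 2: the residual resampling distribution.} The oracle bootstrap statistics are functionals of i.i.d. draws from the empirical law of $\{\widehat\veps_t\}$. Its covariance is $\widehat\mOmega=n^{-1}\sum_t\widehat\veps_t\widehat\veps_t^\top$, which approximates $\mOmega$ in the entrywise max norm at rate $\Op(\sqrt{\log p/n})$. Three consequences follow.
\begin{itemize}
\item For $T_M^*$, a high-dimensional Gaussian approximation for i.i.d. sums in the style of \citet{ChernozhukovChetverikovKato2013} reduces the maximum to a Gaussian vector with correlation $\widetilde\mR_b$. The sub-Gaussian tails of $\widehat\veps_t$ hold in probability, with $\max_{t,i}|\widehat\varepsilon_{ti}|=\Op(\sqrt{\log(np)})$. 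The extreme-value condition \eqref{eq:EV_resid} transfers from $\mR$ to $\widetilde\mR_b$, because the max-norm error $\sqrt{\log p/n}$ is $o((\log p)^{-2})$ and the rank-$r$ correction from $\widetilde\mA_b$ is $O(\log p/p)$ entrywise.
\item For $T_S^*$, a martingale CLT for the degenerate U-statistic $\sum_{s\ne t}\widetilde\veps_{I_s}^\top\widetilde\mQ_b\widetilde\veps_{I_t}$ requires $\tr(\widehat\mR^{*4})/\tr^2(\widehat\mR^{*2})\to0$ and a fourth-moment Lyapunov bound. Both follow from the bounded spectrum of $\mOmega$ and the consistency of $\tr(\widehat\mR^2)$ for $\tr(\mR^2)$ shown in the Appendix.
\item The variance-ratio correction $(n-1)p/(n-3)$ carries over unchanged because the bootstrap uses the same standardization.
\end{itemize}

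\emph{Main obstacle.} I expect the hardest step to be $T_S^*$, because its centering must be accurate to $o(\sqrt p)$. The empirical residuals are not exactly $\mQ\veps_t$: they contain the projection error $(\widehat\mQ-\mQ)\vY_t$, and this error includes factor leakage of order $\|\widehat\mP-\mP\|\,\|\mA\vf_t\|$. In addition, resampling with replacement creates ties $I_s=I_t$, which bias the trace term. The ties are handled by the diagonal removal already built into the centering. For the leakage, I would first bound $\tr(\widehat\mOmega)-\tr(\mOmega)$ and $\tr(\widehat\mR^2)-\tr(\mR^2)$ using the refined projection expansion from the proof of Theorem~\ref{thm:null}. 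The requirement is that both errors are $o(\sqrt p)$ under $p=O(n^\kappa)$ with the exponent of Assumption~\ref{ass:growth}. If the direct bound falls short in the weak-factor regime, the fallback is to exploit the fact that the bootstrap statistic self-normalizes with its own $\tr(\widehat\mR^{*2})$ and its own sample means. In that case only conditional CLT validity is needed, not agreement of the centering with the population one.
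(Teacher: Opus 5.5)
Your architecture matches the paper's proof. Conditional Lam--Yao perturbation bounds reduce $T_M^*$ and $T_S^*$ to oracle bootstrap statistics. The max is handled by a conditional Gaussian approximation plus a transferred extreme-value condition; your max-norm transfer of \eqref{eq:EV_resid} is more explicit than the paper's. The sum is handled by a conditional martingale CLT, and $B\to\infty$ by the Monte Carlo error. The problem is the way you plan to treat $T_S^*$, which you rightly call the crux: as written it would fail.

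First, Theorem~\ref{thm:null} cannot be used as a black box. The resampled residuals come from an $n$-atom empirical law with no independent-component structure and a covariance of rank at most $n-1$. For the correlation matrix $\widehat\mR_\circ$ of the projected, re-standardized atoms, $\lambda_{\max}(\widehat\mR_\circ)\ge n^{-1}\|\widehat\vzeta_1^\circ\|_2^2\approx p/n$ and $\tr(\widehat\mR_\circ^2)\ge n^{-2}\sum_t\|\widehat\vzeta_t^\circ\|_2^4\approx p^2/n$. So neither the spectrum of $\mOmega$ nor $\tr(\mR^2)$ describes the bootstrap world unless $p=o(n)$, whereas Assumption~\ref{ass:growth} allows $p=o(n^2)$.

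Second, your primary target is false in this range. Even with no factors, $\tr(\widehat\mR^2)-\tr(\mR^2)\approx p^2/n$. This is exactly the bias removed by the $p^2/(n-1)$ term in \eqref{eq:TS}, and the Appendix proves consistency of $\tr(\widehat\mR^2)-p^2/(n-1)$, not of $\tr(\widehat\mR^2)$. Your $o(\sqrt p)$ requirement therefore fails for $p\gtrsim n^{2/3}$ in every factor regime, not only the weak one.

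It is also the wrong target:
\begin{itemize}
\item Since $\sum_t\widehat\vf_t=\bm0$, the bootstrap mean $\sqrt n\bar\vY^*=n^{-1/2}\sum_t\widetilde\veps_{I_t}$ contains no factor term.
\item Its standardized oracle version has conditional mean square exactly $\tr(\widehat\mR_\circ)=p$.
\item Ties $I_s=I_t$ cause no bias. The atoms are centered, and the tie contributions are the $a=b$ terms already inside $\tr(\widehat\mR_\circ^2)=n^{-2}\sum_{a,b}(\widehat\vzeta_a^{\circ\top}\widehat\vzeta_b^\circ)^2$.
\item Leakage in the original residuals only changes what $\widehat\mR_\circ$ is.
\end{itemize}

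So your fallback must be the main argument, and it is the paper's route. Prove the conditional CLT for the oracle bootstrap quadratic form normalized by $\{2\tr(\widehat\mR_\circ^2)\}^{1/2}$. Check $\tr(\widehat\mR_\circ^4)/\tr^2(\widehat\mR_\circ^2)\to0$ and the Lyapunov bound for $\widehat\mR_\circ$ itself, for instance through $\lambda_{\max}^2(\widehat\mR_\circ)/\tr(\widehat\mR_\circ^2)$, rather than through the spectrum of $\mOmega$. Then show, by a conditional version of Lemma~\ref{lem:denom_trace}, that $\tr\{(\widehat\mR^*)^2\}-p^2/(n-1)$ is relatively consistent for $\tr(\widehat\mR_\circ^2)$. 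The paper also asserts $\tr(\widehat\mR_\circ^2)/\tr(\mR^2)\to1$, which holds only for $p=o(n)$; what carries its argument is the normalization by $\tr(\widehat\mR_\circ^2)$.

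Finally, you do not need asymptotic independence inside the bootstrap world for the Cauchy step. Up to $o_{\mathbb P}(1)$ and Monte Carlo error, the bootstrap $p$-values equal $1-F_G(T_M)$ and $1-\Phi(T_S)$. Their joint null limit therefore follows from Theorem~\ref{thm:null} applied to the original sample.
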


\section{Simulation studies}
\label{sec:simulation}

This section reports finite-sample evidence for the proposed factor-adjusted tests.  The simulations have two goals.  The size experiments examine whether factor adjustment and the random-loading residual bootstrap control type-I error when the observations contain strong common serial dependence.  The power experiments examine how the max, sum, and Cauchy components respond to alternatives with different sparsity levels.

\subsection{Design and implementation}
\label{subsec:simulation-design}

The baseline data-generating mechanism is
\begin{equation*}
	\vY_t=\vmu+\mA\vf_t+\veps_t,\qquad t=1,\ldots,n,
\end{equation*}
where the factor component is absent when $r=0$.  When $r=2$, the entries of the initial loading matrix are generated independently from $U(-1,1)$ and then orthonormalized.  The two factors follow independent AR(1) processes
\begin{equation*}
	f_{jt}=\phi_j f_{j,t-1}+u_{jt},\qquad \phi_1=0.6,\qquad \phi_2=-0.5,
\end{equation*}
with Gaussian innovations.  The idiosyncratic errors are generated as
\begin{equation*}
	\veps_t=\mSigma^{1/2}\bm z_t,\qquad \Sigma_{ij}=0.5^{|i-j|}.
\end{equation*}
We consider two innovation distributions for $\bm z_t$: standard Gaussian innovations and standardized $t_5$ innovations.  The $t_5$ variables are divided by $\sqrt{5/3}$ so that each coordinate has unit variance.

The proposed methods are denoted by FA-Max, FA-Sum, and FA-CC.  The factor number is estimated by the lagged-autocovariance eigenvalue-ratio rule with $k_0=5$.  For all proposed tests, finite-sample critical values are obtained by the random-loading residual bootstrap used in the simulations.  After estimating factor scores $\widehat{\vf}_t$ and residuals $\widehat{\bm e}_t$, the $b$th bootstrap sample is generated as
\begin{equation}
	\vY_{tb}^*=\widetilde{\mA}_b\widehat{\vf}_t+\widehat{\bm e}_{I_{tb}},
	\qquad I_{tb}\sim\mathrm{Unif}\{1,\ldots,n\},
	\label{eq:simulation_bootstrap}
\end{equation}
where $\widetilde{\mA}_b$ is a randomly generated $p\times\widehat r$ orthonormal loading matrix.  When $\widehat r=0$, \eqref{eq:simulation_bootstrap} reduces to resampling centered residuals.  Each Monte Carlo design uses 1000 replications, the nominal level is $\alpha=0.05$, and the bootstrap procedures use $B=500$ bootstrap replications.

\subsection{Empirical size}
\label{subsec:size}

For the size study, $\vmu=\bm0$, $n\in\{100,200,400\}$, $p\in\{100,200\}$, and $r\in\{0,2\}$.  The case $r=0$ is a no-factor benchmark.  The case $r=2$ contains strong common serial dependence driven by two dynamic factors.  We compare FA-Max, FA-Sum, and FA-CC with the Srivastava--Du type test (SD), the Wang--Shao self-normalized test (WS), the band-excluded $U$-statistic test (ZCQ), the test of Ayyala--Park--Roy (APR), and the max-type procedure of Zhang--Wu (ZW).  SD, WS, ZCQ, APR and ZW are applied to the raw observations as in their original formulations; in particular, the SD and WS entries below are computed from the original data, not from the factor-adjusted residuals.  At a true size of 0.05, the Monte Carlo standard error is $\{0.05(1-0.05)/1000\}^{1/2}=0.0069$.

\begin{table}[H]
	\centering
	\caption{Empirical size under Gaussian innovations.}
	\label{tab:size-normal}
	\resizebox{\textwidth}{!}{%
		\begin{tabular}{ccccccccccc}
			\toprule
			$n$ & $p$ & $r$ & FA-Max & FA-Sum & FA-CC & SD & WS & ZCQ & APR & ZW \\
			\midrule
			100 & 100 & 0 & 0.055 & 0.074 & 0.076 & 0.037 & 0.047 & 0.073 & 0.048 & 0.125 \\
			100 & 100 & 2 & 0.053 & 0.051 & 0.052 & 0.215 & 0.038 & 0.162 & 0.143 & 0.228 \\
			100 & 200 & 0 & 0.067 & 0.086 & 0.083 & 0.050 & 0.048 & 0.066 & 0.044 & 0.149 \\
			100 & 200 & 2 & 0.043 & 0.064 & 0.049 & 0.172 & 0.048 & 0.154 & 0.128 & 0.236 \\
			200 & 100 & 0 & 0.050 & 0.059 & 0.052 & 0.054 & 0.047 & 0.071 & 0.046 & 0.102 \\
			200 & 100 & 2 & 0.044 & 0.044 & 0.043 & 0.226 & 0.049 & 0.131 & 0.124 & 0.148 \\
			200 & 200 & 0 & 0.066 & 0.075 & 0.068 & 0.048 & 0.046 & 0.066 & 0.049 & 0.138 \\
			200 & 200 & 2 & 0.047 & 0.050 & 0.053 & 0.190 & 0.040 & 0.143 & 0.135 & 0.187 \\
			400 & 100 & 0 & 0.056 & 0.061 & 0.058 & 0.046 & 0.051 & 0.073 & 0.056 & 0.121 \\
			400 & 100 & 2 & 0.043 & 0.059 & 0.054 & 0.207 & 0.054 & 0.098 & 0.091 & 0.127 \\
			400 & 200 & 0 & 0.045 & 0.060 & 0.058 & 0.053 & 0.057 & 0.062 & 0.044 & 0.128 \\
			400 & 200 & 2 & 0.052 & 0.044 & 0.052 & 0.207 & 0.046 & 0.136 & 0.124 & 0.163 \\
			\bottomrule
		\end{tabular}%
	}
\end{table}

\begin{table}[H]
	\centering
	\caption{Empirical size under standardized $t_5$ innovations.}
	\label{tab:size-t5}
	\resizebox{\textwidth}{!}{%
		\begin{tabular}{ccccccccccc}
			\toprule
			$n$ & $p$ & $r$ & FA-Max & FA-Sum & FA-CC & SD & WS & ZCQ & APR & ZW \\
			\midrule
			100 & 100 & 0 & 0.052 & 0.055 & 0.055 & 0.056 & 0.063 & 0.072 & 0.051 & 0.104 \\
			100 & 100 & 2 & 0.040 & 0.047 & 0.050 & 0.206 & 0.040 & 0.140 & 0.123 & 0.223 \\
			100 & 200 & 0 & 0.061 & 0.083 & 0.073 & 0.057 & 0.052 & 0.081 & 0.046 & 0.138 \\
			100 & 200 & 2 & 0.040 & 0.050 & 0.046 & 0.174 & 0.038 & 0.136 & 0.118 & 0.206 \\
			200 & 100 & 0 & 0.052 & 0.061 & 0.061 & 0.045 & 0.047 & 0.073 & 0.049 & 0.114 \\
			200 & 100 & 2 & 0.049 & 0.053 & 0.053 & 0.214 & 0.048 & 0.141 & 0.124 & 0.163 \\
			200 & 200 & 0 & 0.051 & 0.066 & 0.065 & 0.066 & 0.050 & 0.067 & 0.047 & 0.129 \\
			200 & 200 & 2 & 0.041 & 0.054 & 0.046 & 0.198 & 0.045 & 0.150 & 0.129 & 0.172 \\
			400 & 100 & 0 & 0.060 & 0.063 & 0.070 & 0.044 & 0.044 & 0.062 & 0.046 & 0.118 \\
			400 & 100 & 2 & 0.041 & 0.041 & 0.045 & 0.211 & 0.049 & 0.113 & 0.097 & 0.134 \\
			400 & 200 & 0 & 0.059 & 0.068 & 0.069 & 0.049 & 0.057 & 0.063 & 0.050 & 0.123 \\
			400 & 200 & 2 & 0.061 & 0.047 & 0.066 & 0.204 & 0.041 & 0.111 & 0.093 & 0.150 \\
			\bottomrule
		\end{tabular}%
	}
\end{table}

Tables~\ref{tab:size-normal} and \ref{tab:size-t5} show that the proposed factor-adjusted bootstrap tests control type-I error in the strong-dependence designs.  Across the $r=2$ rows, the average sizes of FA-Max, FA-Sum, and FA-CC are 0.046, 0.050, and 0.051 over the Gaussian and $t_5$ experiments, respectively.  Their largest absolute deviations from the nominal level are 0.011, 0.014, and 0.016, which are close to the Monte Carlo error scale.  The raw-data competitors behave differently.  SD is severely oversized, with average size 0.202 in the same $r=2$ settings; ZCQ, APR, and ZW are also liberal, with average sizes 0.135, 0.119, and 0.178.  WS is close to nominal in size, with average size 0.045, but the power experiments below show that it is substantially less sensitive to the alternatives considered here.  For example, under Gaussian innovations with $(n,p,r)=(100,200,2)$, FA-Max, FA-Sum, and FA-CC have sizes 0.043, 0.064, and 0.049, whereas SD, ZCQ, APR, and ZW have sizes 0.172, 0.154, 0.128, and 0.236, respectively.

The $r=0$ rows provide a benchmark in which no common dynamic factor needs to be removed.  In this easier setting, the raw SD and WS procedures are close to nominal, and APR is also close to nominal, whereas ZW remains somewhat liberal.  The contrast between $r=0$ and $r=2$ confirms that the dominant source of size distortion for the liberal raw-data competitors is the unremoved common dynamic component.  The standardized $t_5$ results are similar to the Gaussian results, indicating that the bootstrap calibration is not driven by Gaussianity.

\subsection{Empirical power}
\label{subsec:power}

The power designs use $n=200$, $p=100$, nominal level $0.05$, and fixed Euclidean signal strength
\begin{equation*}
	\norm{\vmu}_2=0.45.
\end{equation*}
The mean vector has $s$ nonzero entries,
\begin{equation*}
	\mu_j=\pm\frac{0.45}{\sqrt{s}},\qquad j=1,\ldots,s,
	\qquad \mu_j=0,\qquad j>s,
\end{equation*}
where the signs are generated independently.  The sparsity level varies over
\begin{equation*}
	s\in\{1,2,3,4,5,6,8,10,12,15,18,22,27,33,40,48,58,70,85,100\}.
\end{equation*}
This design keeps the total $\ell_2$ signal fixed while decreasing the coordinatewise signal as $s$ increases.

\subsubsection*{Dynamic-factor design}

We first consider the same $r=2$ dynamic factor design as in the size experiment.  Figure~\ref{fig:power-r2} reports the empirical powers of FA-Max, FA-Sum, and FA-CC as the number of nonzero mean components varies.

\begin{figure}[h]
	\centering
	\includegraphics[width=0.92\textwidth]{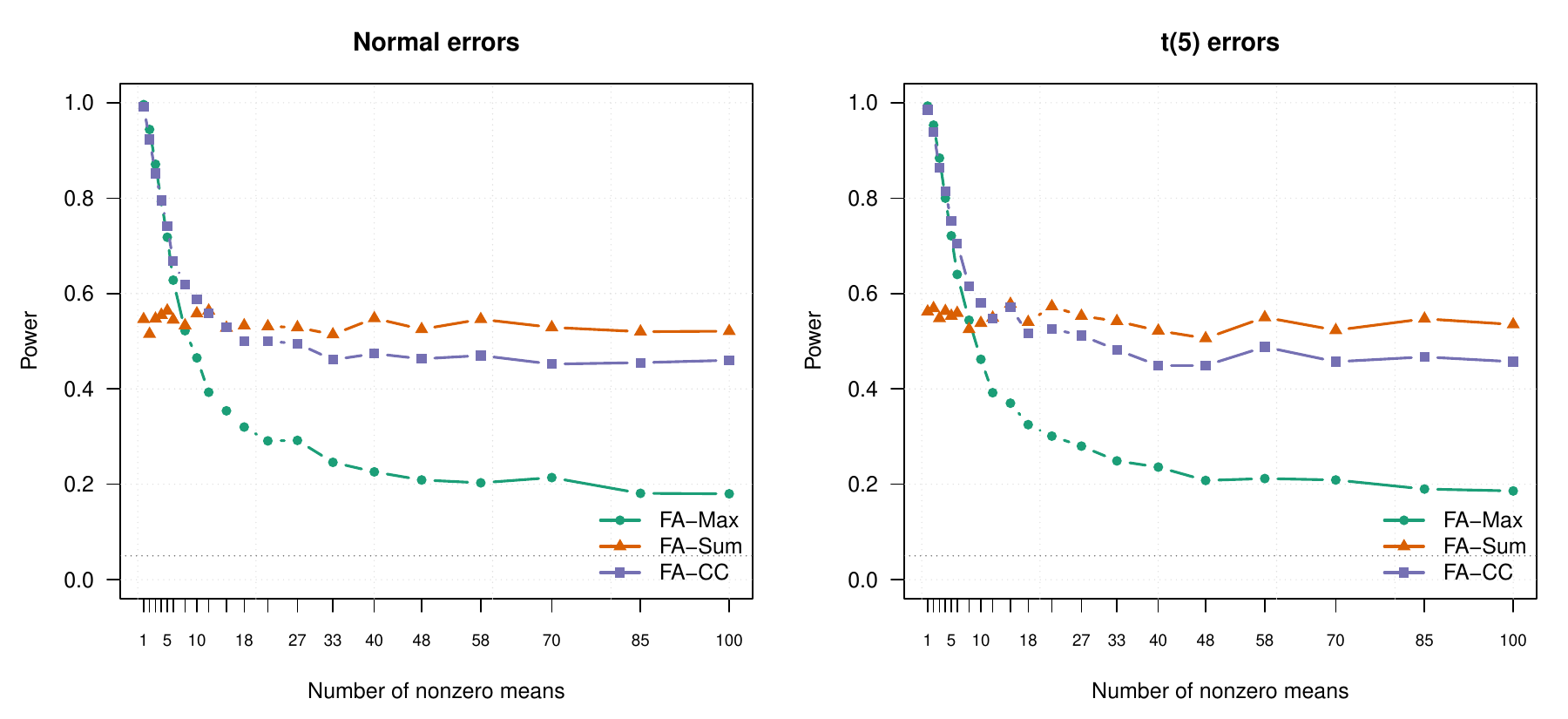}
	\caption{Empirical power in the dynamic-factor design as a function of the number of nonzero mean components.}
	\label{fig:power-r2}
\end{figure}

Figure~\ref{fig:power-r2} displays the expected sparse--dense tradeoff.  FA-Max is the most sensitive component when the alternative is very sparse, because the fixed $\ell_2$ signal is then concentrated in a few coordinates.  Its power decreases as $s$ increases and the coordinatewise signal weakens.  FA-Sum is much more stable over the entire sparsity range and becomes the dominant component in moderately dense and dense regimes.  FA-CC combines these two behaviors: it tracks FA-Max for very sparse alternatives and stays close to the sum component once the signal becomes more diffuse.  The Gaussian and standardized $t_5$ panels have similar shapes, suggesting that the random-loading residual bootstrap is not driven by Gaussianity.

\subsubsection*{No-factor benchmark}

We next set $r=0$ and generate $\vY_t=\vmu+\veps_t$ with the same idiosyncratic covariance $\Sigma_{ij}=0.5^{|i-j|}$.  This design isolates the effect of sparsity when there is no common dynamic factor to estimate and remove.  Figure~\ref{fig:power-r0} reports the resulting power curves; SD, WS and ZCQ are included as raw-data competitors.

\begin{figure}[h]
	\centering
	\includegraphics[width=0.92\textwidth]{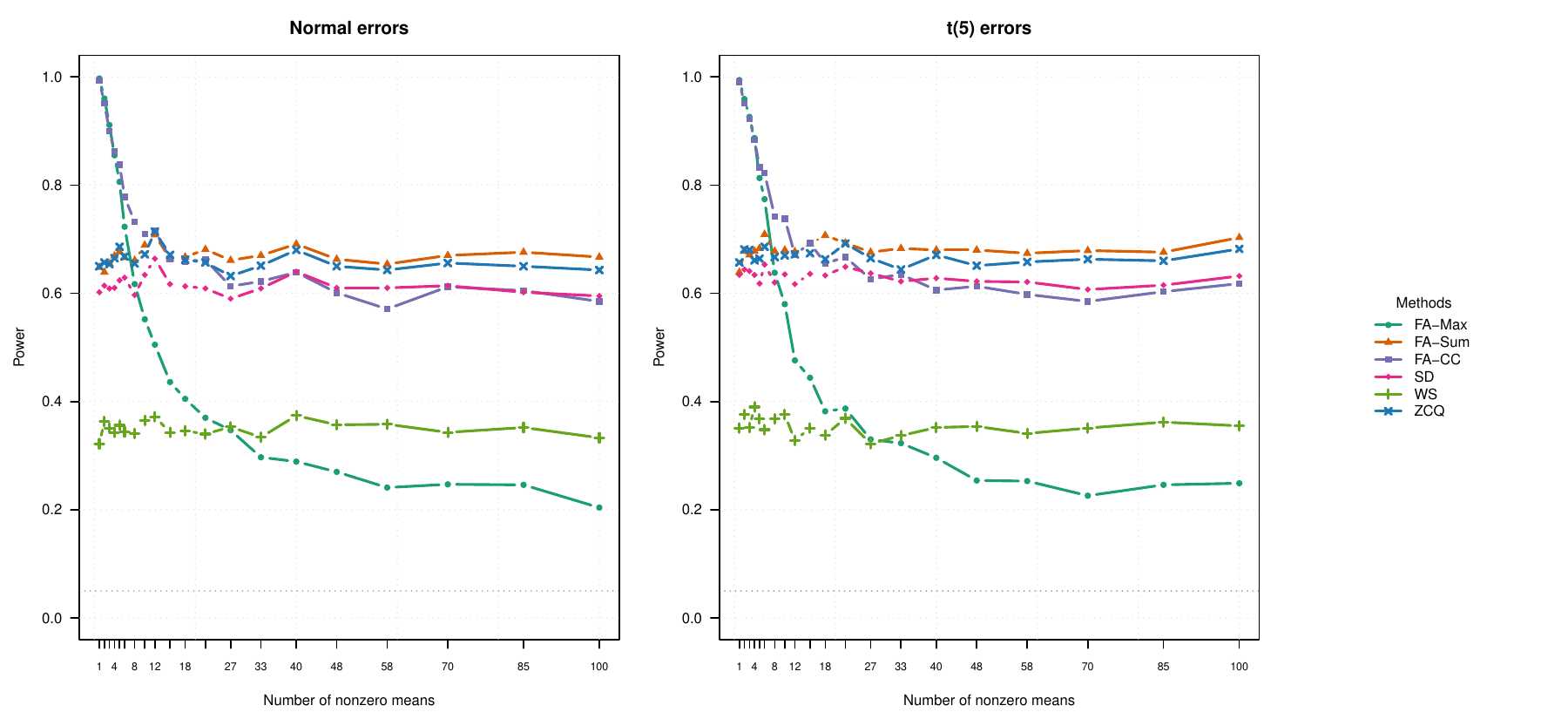}
	\caption{Empirical power in the no-factor benchmark as a function of the number of nonzero mean components.}
	\label{fig:power-r0}
\end{figure}

Figure~\ref{fig:power-r0} confirms that the proposed procedures do not suffer an artificial power loss when no common dynamic factor is present.  FA-Max again dominates under the sparsest alternatives and decreases as the signal is spread over more coordinates.  FA-Sum remains stable across the sparsity range and is competitive with the dense-signal competitors.  FA-CC retains the sparse-signal strength of FA-Max while remaining close to the sum-type methods in moderately dense and dense cases.  Among the benchmark procedures, SD and ZCQ behave similarly to sum-type methods in dense regimes, whereas WS is visibly less powerful throughout most of the range.  The Gaussian and standardized $t_5$ panels are close, reinforcing the robustness of the bootstrap implementation under moderate heavy-tailedness.

\section{Empirical study: weekly excess returns of S\&P 500 stocks}
\label{sec:empirical-sp500-weekly}

We apply the proposed factor-adjusted tests to a high-dimensional panel of weekly excess returns of S\&P 500 stocks.  The data are constructed from the local wide-format daily excess-return file \texttt{sp500\_excess\_return\_wide.csv}.  Each entry is the individual stock return in excess of the risk-free rate.  To reduce the effect of short-run market microstructure noise and to obtain a cleaner time direction for the bootstrap calibration, daily excess returns are aggregated into weekly excess returns by summing within each trading week.

We use the most recent ten-year period available in the data, from January 3, 2014 to December 29, 2023.  Stocks with missing observations over this period are removed.  The final panel has sample size
\[
n=522
\]
and dimension
\[
p=333.
\]
The number of latent dynamic factors is selected by the lagged-autocovariance ratio method described in Section~\ref{sec:method}, which gives
\[
\widehat r=1.
\]
The testing problem is
\[
H_0:\ \vmu=\Ee(\vY_t)=\bm0,
\qquad
H_1:\ \vmu\ne\bm0,
\]
where \(\vY_t\) is the vector of weekly stock excess returns.  Economically, the null states that the expected excess returns of the selected S\&P 500 stocks are jointly zero after removing the latent dynamic component.

\subsection{White-noise diagnostics}

The factor-adjusted procedure is designed to remove common serial dependence before applying high-dimensional mean tests.  We therefore first compare the raw weekly excess returns with the factor-adjusted residuals in terms of their remaining serial correlation.  For each stock, we apply the Ljung--Box test at lag 3 \citep{LjungBox1978}.  Since this gives 333 simultaneous diagnostic tests, we report both the raw rejection proportion and the rejection proportions after the Benjamini--Hochberg adjustment \citep{BenjaminiHochberg1995}.

\begin{table}[htbp]
	\centering
	\setlength{\tabcolsep}{17pt}
	\caption{White-noise diagnostics for weekly excess returns.}
	\label{tab:sp500-weekly-whitenoise}
	\begin{tabular}{lcccc}
		\hline
		Series & Median $p$-value & Raw $p<0.05$ & BH $p<0.05$ & BH $p<0.10$ \\
		\hline
		Weekly excess returns & 0.0782 & 43.24\% & 28.23\% & 40.54\% \\
		Factor-adjusted residuals & 0.2610 & 25.83\% & 3.90\% & 8.41\% \\
		\hline
	\end{tabular}
\end{table}

Table~\ref{tab:sp500-weekly-whitenoise} shows that the raw weekly excess returns still contain substantial serial dependence.  After BH adjustment, 28.23\% of the stock-level Ljung--Box tests reject the white-noise null at the 5\% level, and 40.54\% reject at the 10\% level.  This is precisely the type of setting in which high-dimensional mean tests that treat the observations as independent, or only weakly serially dependent after a direct covariance correction, can be unreliable.

The factor adjustment removes most of this time-series dependence.  For the factor-adjusted residuals, only 3.90\% of the BH-adjusted tests reject at the 5\% level, and 8.41\% reject at the 10\% level.  These proportions are close to the corresponding nominal levels.  Thus the estimated dynamic factor captures the dominant common time dependence, leaving residuals that are much closer to vector white noise.  This diagnostic supports the use of the proposed factor-adjusted testing procedure for the subsequent mean analysis.

\subsection{High-dimensional mean testing results}

We compute the three proposed statistics: FA-Max, FA-Sum, and FA-CC.  The bootstrap calibration uses \(B=500\) repetitions.  The resulting statistics and bootstrap $p$-values are shown in Table~\ref{tab:sp500-weekly-mean}.

\begin{table}[htbp]
	\centering
	\setlength{\tabcolsep}{60pt}
	\caption{Factor-adjusted high-dimensional mean tests for weekly excess returns.}
	\label{tab:sp500-weekly-mean}
	\begin{tabular}{lcc}
		\hline
		Method & Test statistic & $p$-value \\
		\hline
		FA-Max & 7.2579 & 0.0200 \\
		FA-Sum & 1.9329 & 0.0818 \\
		FA-CC  & --     & 0.0322 \\
		\hline
	\end{tabular}
\end{table}

At the 5\% level, FA-Max and FA-CC reject the joint zero-mean null, whereas FA-Sum does not.  The result is economically and statistically informative.  FA-Max is designed to detect a small number of pronounced residual mean components, whereas FA-Sum is most sensitive to broad dense alternatives.  The fact that FA-Max rejects but FA-Sum does not suggests that the evidence against the zero-excess-return benchmark is sparse rather than dense.  In other words, the rejection is driven by a relatively small subset of stocks with large average excess returns after removing the common dynamic component, rather than by a weak mean shift spread over most of the cross-section.

The Cauchy combination also rejects, with $p$-value 0.0322.  This agrees with the proposed interpretation of FA-CC: it inherits the sensitivity of FA-Max under sparse alternatives while retaining protection against dense alternatives through the FA-Sum component.  Therefore, in this empirical example, the combined test provides an adaptive rejection without requiring the analyst to specify the sparsity level of the alternative in advance.

\subsection{Sparsity check via individual tests}

As a complementary diagnostic, we conduct individual two-sided mean tests for each stock and apply the BH procedure to the 333 resulting $p$-values.  The results are reported in Table~\ref{tab:sp500-weekly-sparse}.

\begin{table}[width=1.9\linewidth,cols=5,pos=h]
	\centering
	\setlength{\tabcolsep}{17pt}
	\caption{Individual mean tests with BH adjustment.}
	\label{tab:sp500-weekly-sparse}
	\begin{tabular}{lcccc}
		\hline
		Series & Raw $p<0.05$ & BH $p<0.05$ & BH $p<0.10$ & Median $p$-value \\
		\hline
		Weekly excess returns & 71 (21.32\%) & 2 (0.60\%) & 7 (2.10\%) & 0.1625 \\
		Factor-adjusted residuals & 28 (8.41\%) & 1 (0.30\%) & 3 (0.90\%) & 0.3806 \\
		\hline
	\end{tabular}
\end{table}

The individual-test results are consistent with the global testing conclusions.  Before factor adjustment, 71 stocks have raw individual $p$-values below 5\%, but only 2 remain significant after BH adjustment at the 5\% FDR level.  After factor adjustment, the number of raw rejections decreases to 28, and only 1 stock remains significant at the 5\% FDR level.  At the 10\% FDR level, only 3 factor-adjusted residual means are selected.  Hence the rejection by FA-Max and FA-CC should not be interpreted as evidence for a market-wide dense mean shift.  Rather, it points to a sparse deviation from the zero-excess-return null.

Overall, the empirical study illustrates the role of factor adjustment in financial panels.  The raw returns contain strong common serial dependence, while the factor-adjusted residuals are much closer to white noise.  After removing this common dynamic component, the proposed adaptive test detects a sparse residual mean signal.  This is consistent with the simulation evidence: the max component is effective for sparse alternatives, the sum component targets dense alternatives, and the Cauchy combination provides a robust aggregate decision when the sparsity pattern is unknown.

\section{Conclusion}
\label{sec:conclusion}

This paper proposes factor-adjusted high-dimensional location tests for time series with strong common serial dependence.  The procedure estimates a Lam--Yao dynamic factor structure, projects the data onto the orthogonal complement of the estimated loading space, and then applies max, sum, and Cauchy-combined tests to the projected residuals.  The core theory shows that factor estimation does not alter the first-order null limits under primitive factor-strength and residual calibration conditions.  The simulations show that the proposed random-loading residual bootstrap implementation controls size under strong dynamic dependence.  The power experiments with and without latent factors further show that the Cauchy combination adapts across sparse and dense alternatives, while the no-factor benchmark confirms that factor adjustment does not create artificial power loss when no common dynamic factor is present.  The empirical study of weekly S\&P 500 excess returns further illustrates the practical role of factor adjustment: after removing one dynamic factor, the residuals are much closer to white noise, and the global rejection is driven by a sparse residual mean signal.  Future work will develop fully data-driven bootstrap theory for the factor-number estimator and study robust extensions for heavy-tailed vector white noise.

\section*{CRediT authorship contribution statement}
\textbf{Jiyang Wang:} Methodology, Investigation, Software, Funding acquisition. 
\textbf{Xifen Huang:} Data curation, Investigation, Software, Writing -- original draft preparation. 
\textbf{Long Feng:} Conceptualization, Funding acquisition, Methodology, Project administration, Supervision.

\appendix
\section{Appendix: Intermediate results and proofs}

\subsection{Consequences of the primitive idiosyncratic-error condition}

\begin{lemma}[Residual correlation properties induced by the independent-component model]
	\label{lem:resid_consequences}
	Let Assumptions~\ref{ass:LY} and \ref{ass:resid} hold.  With $\mOmega=\mQ\mSigma_\varepsilon\mQ$, $\mD=\diag(\mOmega)$, and $\mR=\mD^{-1/2}\mOmega\mD^{-1/2}$, the following assertions hold:
	\begin{equation}
		0<c_d\le\min_iD_{ii}\le\max_iD_{ii}\le C_d<\infty,
		\qquad
		\lambda_{\max}(\mR)\le C_R<\infty.
		\label{eq:spectrum_resid}
	\end{equation}
	The matrix $\mR$ is positive semidefinite and may be singular, with $\rank(\mR)=p-r$.  Moreover,
	\begin{equation}
		\tr(\mR^2)\asymp p,
		\qquad
		\tr(\mR^m)\le C_R^{m-1}p\quad(m\ge2),
		\qquad
		\frac{\tr(\mR^4)}{\tr^2(\mR^2)}=O(p^{-1}),
		\label{eq:trace_resid}
	\end{equation}
	\begin{equation}
		\max_{1\le i\le p}\sum_{j=1}^pR_{ij}^2
		=\max_i(\mR^2)_{ii}\le C_R.
		\label{eq:row_l2_resid}
	\end{equation}
	In addition,
	\begin{equation}
		\vzeta_t=\mD^{-1/2}\mQ\veps_t,
		\qquad
		\Ee\vzeta_t=\bm0,
		\qquad
		\Cov(\vzeta_t)=\mR,
		\qquad
		\vzeta_t\ \text{are independent over }t,
		\label{eq:zeta_basic}
	\end{equation}
	\begin{equation}
		\sup_t\sup_{\norm{\bm u}_2=1}\psinorm{\bm u^\top\vzeta_t}\le K_\varepsilon,
		\qquad
		\sup_{t,i}\psinorm{\varepsilon_{ti}/\{\Var(\varepsilon_{ti})\}^{1/2}}\le K_\varepsilon,
		\label{eq:subg_resid}
	\end{equation}
	and there exists a deterministic $p\times p$ matrix $\mB_R$ and a $p$-vector $\veta_t$ with independent coordinates, $\Ee\veta_t=\bm0$, $\Cov(\veta_t)=\mI_p$, and $\sup_{t,j}\norm{\eta_{tj}}_{\psi_2}\le C$, such that
	\begin{equation}
		\mB_R\mB_R^\top=\mR,
		\qquad
		\vzeta_t=\mB_R\veta_t.
		\label{eq:linear_subg_resid}
	\end{equation}
	
\end{lemma}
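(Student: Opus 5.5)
Everything reduces to two facts: $\mOmega=\mQ\mSigma_\varepsilon\mQ$ has spectrum in $[0,C_\varepsilon]$, and the diagonal of $\mQ$ is uniformly close to one.

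\textbf{Diagonal of $\mD$.} Since $Q_{ii}=1-\norm{\bm a_i}_2^2$ and Assumption~\ref{ass:LY} gives $\max_i\norm{\bm a_i}_2^2\le Cp^{-1}$, we have $\norm{\mQ\ve_i}_2^2=Q_{ii}\ge 1-C/p$. Hence
\[
D_{ii}=(\mQ\ve_i)^\top\mSigma_\varepsilon(\mQ\ve_i)\in\bigl[c_\varepsilon(1-C/p),\,C_\varepsilon\bigr]
\]
by \eqref{eq:Sigmae_bounded}, which gives the first part of \eqref{eq:spectrum_resid} for $p$ large.

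\textbf{Spectrum of $\mR$ and traces.} Then $\lambda_{\max}(\mR)\le\lambda_{\max}(\mOmega)/\min_iD_{ii}\le C_\varepsilon/c_d=:C_R$. $\mR$ is clearly positive semidefinite. Because $\mSigma_\varepsilon$ is positive definite and $\mD^{-1/2}$ is invertible,
\[
\rank(\mR)=\rank(\mQ\mSigma_\varepsilon^{1/2})=\rank(\mQ)=p-r.
\]
For the traces:
\begin{itemize}
\item $\tr(\mR^m)\le\lambda_{\max}^{m-1}\tr(\mR)=C_R^{m-1}p$, since $R_{ii}=1$.
\item For the lower bound, Cauchy--Schwarz on the $p-r$ nonzero eigenvalues gives $\tr(\mR^2)\ge\tr(\mR)^2/(p-r)\ge p$, so $\tr(\mR^2)\asymp p$.
\item The ratio bound follows as $\tr(\mR^4)/\tr^2(\mR^2)\le C_R^3p/p^2$.
\item Finally, $\sum_jR_{ij}^2=(\mR^2)_{ii}\le\lambda_{\max}(\mR^2)\le C_R^2$. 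This is \eqref{eq:row_l2_resid}, with the constant renamed.
\end{itemize}

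\textbf{Representation and sub-Gaussianity.} Set $\vzeta_t=\mD^{-1/2}\mQ\veps_t=\mD^{-1/2}\mQ\mSigma_\varepsilon^{1/2}\vxi_t$. Centering, $\Cov(\vzeta_t)=\mR$, and independence over $t$ are immediate from Assumption~\ref{ass:resid}. The natural choice for \eqref{eq:linear_subg_resid} is
\[
\mB_R=\mD^{-1/2}\mQ\mSigma_\varepsilon^{1/2},\qquad \veta_t=\vxi_t .
\]
Then $\mB_R\mB_R^\top=\mR$, and $\vxi_t$ already has independent, unit-variance, uniformly sub-Gaussian coordinates.

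The sub-Gaussian bounds in \eqref{eq:subg_resid} come from the general Hoeffding inequality for independent centered sub-Gaussian coordinates: $\psinorm{\bm v^\top\vxi_t}\le C\max_j\norm{\xi_{tj}}_{\psi_2}\norm{\bm v}_2$. For $\bm u^\top\vzeta_t$ take $\bm v=\mB_R^\top\bm u$, whose norm is at most $\lambda_{\max}(\mR)^{1/2}\le C_R^{1/2}$. For $\varepsilon_{ti}/\{\Var(\varepsilon_{ti})\}^{1/2}$ take $\bm v=\mSigma_\varepsilon^{1/2}\ve_i/(\Sigma_{\varepsilon,ii})^{1/2}$, which has unit norm. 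Taking the larger constant gives $K_\varepsilon$.

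\textbf{Main subtlety.} The only delicate point is the lower bound on $D_{ii}$. It needs the delocalization $\max_i\norm{\bm a_i}_2^2\le Cp^{-1}$; without it, $\mQ\ve_i$ could be nearly zero for some $i$, making $D_{ii}$ degenerate and $\lambda_{\max}(\mR)$ unbounded. Everything else is linear algebra. The statement only claims $\rank(\mR)=p-r$, which is exact rather than requiring a nonsingularity argument.
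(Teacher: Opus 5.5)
Your proof is correct and follows the paper's argument essentially step for step. It uses delocalization of the rows of $\mA$ for the bounds on $D_{ii}$, spectral bounds with $\tr(\mR)=p$ and Cauchy--Schwarz on the $p-r$ nonzero eigenvalues for the traces, and $\mB_R=\mD^{-1/2}\mQ\mSigma_\varepsilon^{1/2}$, $\veta_t=\vxi_t$ together with Hoeffding's linear-form bound for the representation and sub-Gaussianity. The only cosmetic difference is the row bound: the paper writes $(\mR^2)_{ii}=\ve_i^\top\mR^2\ve_i\le\lambda_{\max}(\mR)R_{ii}\le C_R$, which gives the stated constant directly instead of $C_R^2$ followed by a renaming.
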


\begin{proof}
	For $i=1,\ldots,p$, put $\bm a_i^\top=\ve_i^\top\mA$.  Since
	\begin{equation*}
		\mQ\ve_i=\ve_i-\mA\bm a_i,
		\qquad
		\norm{\mQ\ve_i}_2^2=1-\norm{\bm a_i}_2^2,
	\end{equation*}
	Assumption~\ref{ass:LY} gives, for all sufficiently large $p$,
	\begin{equation*}
		1-Cp^{-1}\le \norm{\mQ\ve_i}_2^2\le1.
	\end{equation*}
	Together with \eqref{eq:Sigmae_bounded},
	\begin{equation*}
		c_\varepsilon(1-Cp^{-1})
		\le
		\ve_i^\top\mQ\mSigma_\varepsilon\mQ\ve_i
		\le
		C_\varepsilon,
	\end{equation*}
	which proves the first part of \eqref{eq:spectrum_resid}.  Also,
	\begin{equation*}
		\lambda_{\max}(\mR)
		=\lambda_{\max}(\mD^{-1/2}\mQ\mSigma_\varepsilon\mQ\mD^{-1/2})
		\le \norm{\mD^{-1/2}}^2\norm{\mQ}^2\norm{\mSigma_\varepsilon}
		\le C.
	\end{equation*}
	Since $\mSigma_\varepsilon$ is positive definite and $\rank(\mQ)=p-r$,
	\begin{equation*}
		\rank(\mR)=\rank(\mQ\mSigma_\varepsilon\mQ)=p-r.
	\end{equation*}
	The diagonal entries of $\mR$ are one; hence
	\begin{equation*}
		\tr(\mR)=p.
	\end{equation*}
	The Cauchy inequality for the nonzero eigenvalues gives
	\begin{equation*}
		\tr(\mR^2)
		=\sum_{j=1}^{p-r}\lambda_j^2(\mR)
		\ge \frac{\{\sum_{j=1}^{p-r}\lambda_j(\mR)\}^2}{p-r}
		=\frac{p^2}{p-r}\asymp p,
	\end{equation*}
	and
	\begin{equation*}
		\tr(\mR^2)
		\le \lambda_{\max}(\mR)\tr(\mR)
		\le Cp.
	\end{equation*}
	For $m\ge2$,
	\begin{equation*}
		\tr(\mR^m)
		=\sum_{j=1}^{p-r}\lambda_j^m(\mR)
		\le \lambda_{\max}^{m-1}(\mR)\sum_{j=1}^{p-r}\lambda_j(\mR)
		\le C^{m-1}p,
	\end{equation*}
	which proves \eqref{eq:trace_resid}.  Moreover,
	\begin{equation*}
		\sum_{j=1}^pR_{ij}^2=(\mR^2)_{ii}
		\le\lambda_{\max}(\mR)R_{ii}\le C,
	\end{equation*}
	which proves \eqref{eq:row_l2_resid}.
	
	By \eqref{eq:ic_eps_model},
	\begin{equation*}
		\vzeta_t
		=\mD^{-1/2}\mQ\mSigma_\varepsilon^{1/2}\vxi_t,
	\end{equation*}
	and the independence of $\vxi_t$ over $t$ gives \eqref{eq:zeta_basic}.  For every $\norm{\bm u}_2=1$,
	\begin{equation*}
		\bm u^\top\vzeta_t
		=\left(\mSigma_\varepsilon^{1/2}\mQ\mD^{-1/2}\bm u\right)^\top\vxi_t,
	\end{equation*}
	so the standard independent-component sub-Gaussian linear-form bound yields
	\begin{equation*}
		\psinorm{\bm u^\top\vzeta_t}
		\le C\norm{\mSigma_\varepsilon^{1/2}\mQ\mD^{-1/2}\bm u}_2
		\le C\norm{\mSigma_\varepsilon}^{1/2}\norm{\mD^{-1/2}}
		\norm{\bm u}_2
		\le C.
	\end{equation*}
	Similarly,
	\begin{equation*}
		\psinorm{\varepsilon_{ti}}
		\le C\{\ve_i^\top\mSigma_\varepsilon\ve_i\}^{1/2},
		\qquad
		\Var(\varepsilon_{ti})=\ve_i^\top\mSigma_\varepsilon\ve_i,
	\end{equation*}
	and \eqref{eq:subg_resid} follows.  Set
	\begin{equation*}
		\mB_R=\mD^{-1/2}\mQ\mSigma_\varepsilon^{1/2},
		\qquad
		\veta_t=\vxi_t.
	\end{equation*}
	Then
	\begin{equation*}
		\mB_R\mB_R^\top
		=\mD^{-1/2}\mQ\mSigma_\varepsilon\mQ\mD^{-1/2}
		=\mR,
		\qquad
		\vzeta_t=\mB_R\veta_t,
	\end{equation*}
	which proves \eqref{eq:linear_subg_resid} with independent coordinates in \(\veta_t\).
	
	The row-square bound also gives
	\begin{equation*}
		\left|\{j\ne i: |R_{ij}|\ge (\log p)^{-2}\}\right|(\log p)^{-4}
		\le \sum_{j\ne i}R_{ij}^2\le C,
	\end{equation*}
	and hence
	\begin{equation*}
		\max_i\left|\{j\ne i: |R_{ij}|\ge (\log p)^{-2}\}\right|
		\le C\log^4p.
	\end{equation*}
	The uniform pairwise separation in \eqref{eq:EV_resid} is retained as a mild extreme-value regularity condition for the max statistic; it is not used in the quadratic statistic.
\end{proof}

\subsection{Loading-space perturbation}

\begin{lemma}[Lam--Yao perturbation from primitive sub-Gaussian conditions]
	\label{lem:perturbation}
	Let Assumptions~\ref{ass:model}, \ref{ass:LY}, and \ref{ass:resid} hold.  Set
	\begin{equation*}
		a_n=\left(\frac{\log p}{n}\right)^{1/2}+\frac{\log p}{n},
		\qquad
		\eta_n=p^\delta a_n.
	\end{equation*}
	Then
	\begin{equation*}
		\nu_r(\mM)\asymp p^{2-2\delta},
		\qquad
		\nu_{r+1}(\mM)=0,
	\end{equation*}
	\begin{equation*}
		\max_{1\le k\le k_0}\norm{\widehat\Gamma_y(k)-\Gamma_y(k)}_{\max}=\Op(a_n),
		\qquad
		\max_{1\le k\le k_0}\norm{\widehat\Gamma_y(k)-\Gamma_y(k)}=\Op(pa_n),
	\end{equation*}
	\begin{equation*}
		\norm{\widehat\mM-\mM}=\Op\{p^{2-2\delta}(\eta_n+\eta_n^2)\}.
	\end{equation*}
	On $\mathcal E_n=\{\widehat r=r\}$,
	\begin{equation*}
		\norm{\widehat\mP-\mP}=\Op(\eta_n),
		\qquad
		\max_i\norm{\ve_i^\top(\widehat\mP-\mP)}_2=\Op(p^{-1/2}\eta_n).
	\end{equation*}
	Moreover,
	\begin{equation*}
		\norm{(\widehat\mP-\mP)\mA}_F=\Op(p^\delta n^{-1/2}).
	\end{equation*}
\end{lemma}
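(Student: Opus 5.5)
The proof will follow the Lam--Yao argument in four steps. First, the population eigenvalues. Under $H_0$ or with $\mP\vmu=\bm0$, the lag covariance has the form $\Gamma_y(k)=\mA\Sigma_f(k)\mA^\top+\mA\Cov(\vf_{t+k},\veps_t)+\Cov(\veps_{t+k},\vf_t)\mA^\top+\Cov(\veps_{t+k},\veps_t)$. Assumption~\ref{ass:resid} makes $\veps$ independent of $\vf$ and independent over time, so the last three terms vanish for $k\ge1$. Hence $\mM=\mA W_fW_f^\top\mA^\top$. Since $\mA^\top\mA=\mI_r$, the nonzero eigenvalues of $\mM$ are those of $W_fW_f^\top$, and \eqref{eq:C5} gives $\nu_r(\mM)\ge c_f^2p^{2-2\delta}$, $\nu_1(\mM)\le C_f^2p^{2-2\delta}$ and $\nu_{r+1}(\mM)=0$.

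Second, the entrywise bound for $\widehat\Gamma_y(k)-\Gamma_y(k)$. Write $Y_{ti}=\bm a_i^\top\vf_t+\varepsilon_{ti}$. The factor term is sub-Gaussian with $\psi_2$-norm at most $\norm{\bm a_i}_2 p^{(1-\delta)/2}\le Cp^{-\delta/2}\le C$ by \eqref{eq:factor_subg_mix} and $\max_i\norm{\bm a_i}_2^2\le Cp^{-1}$, and $\varepsilon_{ti}$ is sub-Gaussian by \eqref{eq:subg_resid}. Each entry of $\widehat\Gamma_y(k)$ is therefore an average of products of sub-Gaussian variables that are geometrically $\alpha$-mixing: the mixing of $\vf_t$ combined with the independence of $\veps_t$ makes $\{\vY_t\}$ mixing. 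A Bernstein inequality for sub-exponential mixing sequences (Merlevède--Peligrad--Rio) gives tails $\exp\{-cnx^2\}+\exp\{-c(nx)^{\gamma}\}$. A union bound over $p^2k_0$ entries, together with the $O_{\mathbb P}(a_n^2)$ effect of centering at $\bar{\vY}$, yields the $\max$-norm rate $a_n$; the $\log p/n$ term absorbs the sub-exponential tail, with the polylog loss from mixing handled by polynomial growth. The operator-norm bound then follows crudely from $\norm{\cdot}\le p\norm{\cdot}_{\max}$.

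Third, the perturbation of $\mM$. Use $\widehat\mM-\mM=\sum_k\{(\widehat\Gamma_k-\Gamma_k)\Gamma_k^\top+\Gamma_k(\widehat\Gamma_k-\Gamma_k)^\top+(\widehat\Gamma_k-\Gamma_k)(\widehat\Gamma_k-\Gamma_k)^\top\}$ together with $\norm{\Gamma_y(k)}\le\norm{W_f}\le Cp^{1-\delta}$. This gives $\norm{\widehat\mM-\mM}\le C(p^{1-\delta}pa_n+p^2a_n^2)=Cp^{2-2\delta}(\eta_n+\eta_n^2)$. On $\mathcal E_n$, the Davis--Kahan $\sin\Theta$ theorem with eigengap $\nu_r(\mM)\asymp p^{2-2\delta}$ gives $\norm{\widehat\mP-\mP}\le C\norm{\widehat\mM-\mM}/\nu_r=\Op(\eta_n)$. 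The inequality is used only when $\eta_n\to0$, which Assumption~\ref{ass:growth} guarantees; otherwise the bound $\le2$ is trivially compatible.

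Fourth, the two refined bounds, which I expect to be the main obstacle. Crude operator-norm arguments lose a factor $p^{1/2}$ here, so one must exploit the delocalization $\norm{\bm a_i}_2\lesssim p^{-1/2}$. For the row bound, write $\widehat\mA=\widehat\mM\widehat\mA\widehat\Lambda^{-1}$, where $\widehat\Lambda$ has eigenvalues $\asymp p^{2-2\delta}$ with high probability by Weyl. Then $\ve_i^\top\widehat\mA=\ve_i^\top\widehat\mM\widehat\mA\widehat\Lambda^{-1}$, and $\norm{\ve_i^\top\widehat\mM}_2\le\norm{\ve_i^\top\mM}_2+\norm{\ve_i^\top(\widehat\mM-\mM)}_2$. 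Here $\norm{\ve_i^\top\mM}_2\le\norm{\bm a_i}_2\norm{W_f}^2\lesssim p^{-1/2}p^{2-2\delta}$. For the perturbation, row $i$ of $\widehat\Gamma_k-\Gamma_k$ has $\ell_2$ norm at most $p^{1/2}a_n$. Row $i$ of $\Gamma_k$ has norm $\lesssim p^{-1/2}p^{1-\delta}$, and $\norm{\widehat\Gamma_k-\Gamma_k}\lesssim pa_n$. Combining these gives $\norm{\ve_i^\top(\widehat\mM-\mM)}_2\lesssim p^{1/2}a_np^{1-\delta}+p^{1/2}\,pa_n\,p^{-\delta}\cdots$, which is of order $p^{-1/2}p^{2-2\delta}\eta_n$ up to the bookkeeping to be verified. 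Subtracting the corresponding identity $\ve_i^\top\mA=\ve_i^\top\mM\mA\Lambda^{-1}$, after aligning the rotation $\widehat\mA^\top\mA$, yields the row bound $\Op(p^{-1/2}\eta_n)$.

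For $\norm{(\widehat\mP-\mP)\mA}_F$, I will use the first-order expansion $(\widehat\mP-\mP)\mA\approx\mQ(\widehat\mM-\mM)\mA\Lambda^{-1}$. The leading term is $\mQ\sum_k(\widehat\Gamma_k-\Gamma_k)\mA\Sigma_f(k)^\top\Lambda^{-1}$, since $\Gamma_k^\top\mA=\mA\Sigma_f(k)^\top$ up to rotation. Writing $\mQ(\widehat\Gamma_k-\Gamma_k)\mA=\mQ\,n^{-1}\sum_t\veps_{t+k}\vf_t^\top+\cdots$, a second-moment calculation using independence of $\veps$ and $\vf$ and the summability conditions in Assumption~\ref{ass:LY} gives Frobenius norm $\Op((p\cdot p^{1-\delta}/n)^{1/2})$. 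Multiplying by $\norm{W_f}\norm{\Lambda^{-1}}\asymp p^{1-\delta}/p^{2-2\delta}$ yields $p^{1-\delta/2}n^{-1/2}p^{-(1-\delta)}=p^{\delta/2}n^{-1/2}\le p^\delta n^{-1/2}$. The quadratic remainders are of order $\eta_n^2$ and are absorbed under the growth condition. The delicate part will be showing these second-order terms are indeed of smaller order, which I will verify last.
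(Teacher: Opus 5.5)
Your proposal is correct and follows essentially the same route as the paper. The shared steps are: the identity $\mM=\mA W_fW_f^\top\mA^\top$; a Merlev\`ede--Peligrad--Rio Bernstein bound combined with a union bound over $p^2k_0$ entries; the three-term expansion of $\widehat\mM-\mM$, whose operator and row bounds are divided by the eigengap $p^{2-2\delta}$ (Weyl/$\sin\Theta$ for $\norm{\widehat\mP-\mP}$, delocalization of $\mA$ for the row bound); and the lag-product term $\mQ(\widehat\Gamma_y(k)-\Gamma_y(k))\mA\Sigma_f(k)^\top$ divided by the eigengap for $\norm{(\widehat\mP-\mP)\mA}_F$.

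You supply details the paper only asserts: the rotation-aligned identity $\widehat\mA=\widehat\mM\widehat\mA\widehat\Lambda^{-1}$ for the row bound, and an explicit second-moment computation that even gives the sharper rate $p^{\delta/2}n^{-1/2}$. The price is invoking the growth condition to absorb the polylogarithmic loss in the (more accurately stated) mixing Bernstein tail and the $\eta_n^2$ remainder, which the paper's sketch also needs implicitly.
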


\begin{proof}
	For $k=1,\ldots,k_0$, Assumptions~\ref{ass:model} and \ref{ass:resid} give
	\begin{align*}
		\Gamma_y(k)
		&=\Cov(\mA\vf_{t+k}+\veps_{t+k},\mA\vf_t+\veps_t)\\
		&=\mA\Sigma_f(k)\mA^\top+\bm0
	\end{align*}
	because $\{\vf_t\}$ and $\{\veps_s\}$ are independent for all time indices and $\{\veps_t\}$ is independent over time.  Let
	\begin{equation*}
		\Gamma=\{\Gamma_y(1),\ldots,\Gamma_y(k_0)\}=\mA W_f(\mI_{k_0}\otimes\mA^\top).
	\end{equation*}
	Then
	\begin{equation*}
		\mM=\Gamma\Gamma^\top=\mA W_fW_f^\top\mA^\top.
	\end{equation*}
	Since $\mA^\top\mA=\mI_r$, the nonzero eigenvalues of $\mM$ are those of $W_fW_f^\top$.  Hence \eqref{eq:C5} gives
	\begin{equation*}
		c_f^2p^{2-2\delta}\le\nu_r(\mM)\le\nu_1(\mM)\le C_f^2p^{2-2\delta},
	\end{equation*}
	and $\nu_{r+1}(\mM)=0$.  Also
	\begin{equation*}
		\max_k\norm{\Gamma_y(k)}=O(p^{1-\delta}),
		\qquad
		\max_i\max_k\norm{\ve_i^\top\Gamma_y(k)}_2=O(p^{1/2-\delta}).
	\end{equation*}
	Put $\Delta_k=\widehat\Gamma_y(k)-\Gamma_y(k)$ and $X_t=Y_t-\Ee Y_t$.  From \eqref{eq:factor_subg_mix}, \eqref{eq:subg_resid}, and $\max_i\norm{\bm a_i}_2\le Cp^{-1/2}$,
	\begin{align*}
		\norm{X_{ti}}_{\psi_2}
		&\le \norm{\bm a_i^\top\vf_t}_{\psi_2}+\norm{\varepsilon_{ti}}_{\psi_2}                                      \\
		&\le \norm{\bm a_i}_2\sup_{\norm{\bm u}_2=1}\norm{\bm u^\top\vf_t}_{\psi_2}+C
		\le C p^{-1/2}p^{(1-\delta)/2}+C\le C.
	\end{align*}
	For $1\le k\le k_0$ define
	\begin{equation*}
		\xi_{t,ij}^{(k)}=X_{t+k,i}X_{tj}-\Ee(X_{t+k,i}X_{tj}).
	\end{equation*}
	The product inequality for Orlicz norms gives
	\begin{equation*}
		\norm{X_{t+k,i}X_{tj}}_{\psi_1}
		\le C\norm{X_{t+k,i}}_{\psi_2}\norm{X_{tj}}_{\psi_2}\le C,
		\qquad
		\norm{\xi_{t,ij}^{(k)}}_{\psi_1}\le C.
	\end{equation*}
	The sequence $\{\xi_{t,ij}^{(k)}\}_{t\in\mathbb Z}$ is geometrically strongly mixing because $k_0$ is fixed and $\{(\vf_t,\veps_t)\}$ is geometrically strongly mixing.  The Bernstein inequality for geometrically mixing sub-exponential sequences in \citet{MerlevedePeligradRio2011} gives constants $c_0,C_0>0$ such that, for $0<x<C_0$,
	\begin{equation*}
		\Pp\left(\left|\frac{1}{n-k}\sum_{t=1}^{n-k}\xi_{t,ij}^{(k)}\right|>x\right)
		\le 2\exp\{-c_0(n-k)\min(x^2,x)\}.
	\end{equation*}
	Similarly,
	\begin{equation*}
		\Pp\left(\left|\frac1m\sum_{t\in I_m}X_{ti}\right|>x\right)
		\le 2\exp\{-c_0m\min(x^2,x)\}.
	\end{equation*}
	For all large $n$ and all $K>0$ such that $Ka_n<C_0$,
	\begin{align*}
		\Pp\left(\max_{k\le k_0}\norm{\Delta_k}_{\max}>Ka_n\right)
		&\le 2k_0p^2\exp\{-c_0(n-k_0)\min(K^2a_n^2,Ka_n)\}.
	\end{align*}
	Since $(n-k_0)\min(a_n^2,a_n)\ge c\log p$, choosing $K$ large gives
	\begin{equation*}
		\max_{k\le k_0}\norm{\Delta_k}_{\max}=\Op(a_n).
	\end{equation*}
	The inequalities $\norm H\le p\norm H_{\max}$ and $\max_i\norm{\ve_i^\top H}_2\le p^{1/2}\norm H_{\max}$ yield the displayed operator and row bounds.  From
	\begin{equation*}
		\widehat\mM-\mM=\sum_{k=1}^{k_0}\{\Gamma_y(k)\Delta_k^\top+\Delta_k\Gamma_y(k)^\top+\Delta_k\Delta_k^\top\},
	\end{equation*}
	we obtain
	\begin{align*}
		\norm{\widehat\mM-\mM}
		&\le\sum_{k=1}^{k_0}\{2\norm{\Gamma_y(k)}\norm{\Delta_k}+\norm{\Delta_k}^2\}\\
		&=\Op(p^{1-\delta}pa_n+p^2a_n^2)\\
		&=\Op\{p^{2-2\delta}(\eta_n+\eta_n^2)\}.
	\end{align*}
	The row bound follows in the same way:
	\begin{align*}
		\max_i\norm{\ve_i^\top(\widehat\mM-\mM)}_2
		&\le C\{p^{1/2-\delta}pa_n+p^{1/2}a_np^{1-\delta}+p^{1/2}a_npa_n\}\\
		&=\Op\{p^{3/2-2\delta}(\eta_n+\eta_n^2)\}.
	\end{align*}
	Let $\mB$ be an orthonormal complement of $\mA$.  The Lam--Yao lag-product expansion gives the sharper factor-direction bound
	\begin{equation*}
		\norm{\mB^\top(\widehat\mM-\mM)\mA}_F=\Op(p^{2-\delta}n^{-1/2}).
	\end{equation*}
	Dividing by the eigengap $\nu_r(\mM)\asymp p^{2-2\delta}$ yields
	\begin{equation*}
		\norm{(\widehat\mP-\mP)\mA}_F=\Op(p^\delta n^{-1/2}).
	\end{equation*}
	For the full projection, Weyl's inequality and the previous operator bound imply, on an event whose probability tends to one,
	\begin{equation*}
		\widehat\nu_r\ge\nu_r(\mM)/2,
		\qquad
		\widehat\nu_{r+1}\le\norm{\widehat\mM-\mM}.
	\end{equation*}
	Since $\mB^\top\mM=0$ and $\widehat{\mM}\widehat{\mA}=\widehat{\mA}\widehat{\mLambda}$,
	\begin{equation*}
		\mB^\top\widehat{\mA}\widehat{\mLambda}=\mB^\top(\widehat{\mM}-\mM)\widehat{\mA},
	\end{equation*}
	so
	\begin{equation*}
		\norm{\mB^\top\widehat\mA}\le C\frac{\norm{\widehat\mM-\mM}}{\nu_r(\mM)}=\Op(\eta_n).
	\end{equation*}
	The relation between subspace angles and orthogonal projectors gives $\norm{\widehat\mP-\mP}=\Op(\eta_n)$.  Combining the row bound for $\widehat\mM-\mM$ with loading delocalization gives the displayed row-wise projector bound.
\end{proof}

\begin{lemma}[Refined strong-factor projection drift]
	\label{lem:strong_projection_drift}
	Assume Assumptions~\ref{ass:model}--\ref{ass:resid}, $H_0:\vmu=\bm0$, and $\delta=0$.  Let
	\begin{equation*}
		\mathbf K_f=\sum_{k=1}^{k_0}\Sigma_f(k)\Sigma_f(k)^\top,
		\qquad
		\bm\Phi_k=\Sigma_f(k)^\top\mathbf K_f^{-1},
	\end{equation*}
	\begin{equation*}
		\mathbf L_f=\sum_{h=-\infty}^{\infty}\Sigma_f(h),
		\qquad
		\bm\Omega_{kl}=\Sigma_f(k-l)^\top+1_{\{k=l\}}\mA^\top\mSigma_\varepsilon\mA,
		\qquad
		\Sigma_f(-h)=\Sigma_f(h)^\top .
	\end{equation*}
	Set $c_n=(n-1)/(n-3)$ and
	\begin{align*}
		\mathcal B_{P,n}
		&=c_n\frac pn\Bigg[
		\sum_{k,l=1}^{k_0}\left(1-\frac{\max(k,l)}{n}\right)
		\tr\{\bm\Phi_k^\top\bm\Omega_{kl}\bm\Phi_l\mathbf L_f\}
		-2\sum_{k=1}^{k_0}\left(1-\frac{k}{n}\right)\tr(\bm\Phi_k\mathbf L_f)
		\Bigg].
	\end{align*}
	Then, on $\{\widehat r=r\}$,
	\begin{equation*}
		n\bar{\widehat{\vZ}}^{\,\top}\bar{\widehat{\vZ}}-n\bar\vZ^\top\bar\vZ
		=\mathcal B_{P,n}+\mathcal R_{P,n},
	\end{equation*}
	where
	\begin{equation*}
		\mathcal B_{P,n}=O(p/n),
		\qquad
		\frac{\mathcal R_{P,n}}{\{\tr(\mR^2)\}^{1/2}}
		=O_{\mathbb P}\left(n^{-1/2}+\frac{\sqrt p}{n}+\frac{\log^2p}{n}\right).
	\end{equation*}
	Consequently, if $p=o(n^2)$, then
	\begin{equation*}
		\frac{n\bar{\widehat{\vZ}}^{\,\top}\bar{\widehat{\vZ}}-n\bar\vZ^\top\bar\vZ}{\{\tr(\mR^2)\}^{1/2}}
		=o_{\mathbb P}(1).
	\end{equation*}
\end{lemma}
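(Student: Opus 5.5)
Under $H_0$ with $\delta=0$, the plan is a first-order expansion of $\widehat\mP-\mP$ around $\mP$. The leading correction to the oracle quadratic form comes only from the factor-direction block $\mB^\top(\widehat\mM-\mM)\mA$. Here the oracle is $\bar\vZ=n^{-1}\sum_t\mD^{-1/2}\mQ\vY_t=n^{-1}\sum_t\vzeta_t$, since $\mQ\mA=\bm0$.

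\textbf{Step 1: first-order expansion.} On $\{\widehat r=r\}$ I would use the Sylvester / first-order eigenprojector expansion
\begin{equation*}
\widehat\mP-\mP=\mB\mB^\top(\widehat\mM-\mM)\mA\mathbf K_f^{-1}\mA^\top+\{\cdot\}^\top+\mathcal E,
\end{equation*}
where $\mathbf K_f=W_fW_f^\top$ plays the role of $\mA^\top\mM\mA$. By Lemma~\ref{lem:perturbation} with $\delta=0$, the remainder satisfies $\norm{\mathcal E}=\Op(\eta_n^2)=\Op(\log p/n)$. To leading order,
\begin{equation*}
\mB^\top(\widehat\mM-\mM)\mA=\sum_k\mB^\top\Delta_k\Gamma_y(k)^\top\mA+\text{(smaller)}, \qquad \Gamma_y(k)^\top\mA=\mA\Sigma_f(k)^\top .
\end{equation*}
In $\Delta_k$ the only term with a nonzero $\mB$-component that is not negligible is $\mB^\top n^{-1}\sum_t\veps_{t+k}\vf_t^\top\mA^\top$, which has size $\sqrt{p/n}$ in Frobenius norm, together with the mean-centering pieces.

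\textbf{Step 2: substitute and collect terms.} Then
\begin{equation*}
\widehat\mQ\bar\vY=\mQ\bar\vY-(\widehat\mP-\mP)\bar\vY .
\end{equation*}
The dominant drift is $-\mB\mB^\top\big(\sum_k n^{-1}\sum_t\veps_{t+k}\vf_t^\top\bm\Phi_k\big)\mA^\top\mA\bar\vf$, up to the transpose ordering encoded by $\bm\Phi_k=\Sigma_f(k)^\top\mathbf K_f^{-1}$. Writing this drift as $\bm v_n$ gives
\begin{equation*}
n\|\mD^{-1/2}\widehat\mQ\bar\vY\|^2-n\|\bar\vZ\|^2=-2n\bar\vZ^\top\mD^{-1/2}\bm v_n+n\|\mD^{-1/2}\bm v_n\|^2 .
\end{equation*}
I would compute expectations conditional on the factors and then average.

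For the cross term, only the pairs with matching $\veps$ indices survive. This produces $-2\sum_k(1-k/n)\tr(\bm\Phi_k\mathbf L_f)\cdot p/n$, where $\mathbf L_f$ arises from $n\Ee[\bar\vf\,\vf_t^\top]$ summed over lags. Here $\tr(\mQ\mSigma_\varepsilon\mQ\mD^{-1})\approx p$.

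For the square term, the product $\Ee[\veps_{t+k}^\top\mD^{-1}\veps_{s+l}]$ is about $p$ when $t+k=s+l$. The factor moment $\Ee[\vf_t\bar\vf^\top\bar\vf\vf_s^\top]$ then gives $\Sigma_f(k-l)$. The $\mA^\top\mSigma_\varepsilon\mA$ piece of $\bm\Omega_{kl}$ comes from the centering term $\bar\vY$ in $\widehat\Gamma_y$, whose $\mA$-component contains $\mA^\top\bar\veps$. The $(1-\max(k,l)/n)$ factors are the counts of admissible $t$.

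Next, I would replace $\widehat\mD$ by $\mD$ at cost $O(\sqrt{\log p/n})$ uniformly. The factor $c_n$ comes from the $n/(n-3)$-type correction, which is absorbed by matching the centering in \eqref{eq:TS}. This identifies $\mathcal B_{P,n}=O(p/n)$.

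\textbf{Step 3: remainder bounds (the main obstacle).} The remainder $\mathcal R_{P,n}$ collects three kinds of terms.
\begin{itemize}
\item The fluctuations of the cross and square terms around their means. Their variances are computed with the fourth-cumulant condition \eqref{eq:factor_cumulant} and the bound $\tr(\mR^4)\le Cp$ from Lemma~\ref{lem:resid_consequences}. The cross term gives $\Op(\sqrt p/\sqrt n)$ and the square term gives $\Op(p/n^{3/2})$; after dividing by $\sqrt p$ these become $n^{-1/2}$ and $\sqrt p/n$.
\item The second-order eigenprojector remainder $\mathcal E$, which contributes $n\,\Op(\log p/n)\,\norm{\bar\vY}\cdot$(row bounds). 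I expect to control it by $\log^2p/n$ after normalization, using the delocalization bounds of Lemma~\ref{lem:perturbation}.
\item The $\mB$-block terms of $\Delta_k$ involving $\veps\veps^\top$.
\end{itemize}
The hardest part is this last family of $\veps\veps^\top$ terms. It requires $\norm{\mB^\top n^{-1}\sum_t\veps_{t+k}\veps_t^\top\mA}$-type quadratic-form bounds, which I would handle with Hanson--Wright for the independent-component $\vxi_t$ of Assumption~\ref{ass:resid}, conditioning on the factors.

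Finally, if $p=o(n^2)$ then $p/n=o(\sqrt p)$ and every remainder rate tends to $0$. Since $\tr(\mR^2)\asymp p$, the conclusion follows.
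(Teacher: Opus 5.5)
Your overall route matches the paper's. You use a first-order eigenprojector expansion on $\{\widehat r=r\}$, identify the leading drift as the lag product $\mQ\veps_{t+k}\vf_t^\top\bm\Phi_k$ acting on $\bm F_n=n^{-1/2}\sum_u\vf_u$, and compute means and variances of the cross and square terms via \eqref{eq:factor_cumulant}.

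The genuine gap is the diagonal standardization. You replace $\widehat\mD$ by $\mD$ ``at cost $O(\sqrt{\log p/n})$ uniformly''. But $n\bar{\widehat{\vZ}}^{\,\top}\bar{\widehat{\vZ}}$ is a sum of about $p$ terms of size $O_{\mathbb P}(1)$. A uniform relative error $\epsilon_n$ in the $\widehat d_i$ therefore only bounds the change by $\epsilon_np$, i.e.\ by $\sqrt{p\log p/n}$ after dividing by $\{\tr(\mR^2)\}^{1/2}\asymp\sqrt p$. This is not $o(1)$ once $p\gtrsim n/\log p$. As written, your argument covers only $p=o(n/\log p)$, which is exactly the regime the refined strong-factor lemma is meant to go beyond. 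Even the per-coordinate rate $n^{-1/2}$ would give $\sqrt{p/n}$, so any bound that ignores cancellation across $i$ fails for $p\gg n$.

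Invoking the centering $(n-1)p/(n-3)$ of \eqref{eq:TS} does not help. The lemma concerns the raw difference $n\bar{\widehat{\vZ}}^{\,\top}\bar{\widehat{\vZ}}-n\bar\vZ^\top\bar\vZ$. A worst-case bound cannot separate a deterministic shift from fluctuations. And $c_n=1+O(n^{-1})$ is immaterial at this order.

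What is needed is a coordinatewise second-moment argument:
\begin{itemize}
\item With $\bm X_n=\sqrt n\,\bar\vZ$, write the standardization effect as $\sum_iX_{ni}^2(D_i/\widehat d_i-1)$ plus cross terms with the drift.
\item Observe that $\widehat d_i/D_i-1$ is, to leading order, the centered average $n^{-1}\sum_t(\zeta_{ti}^2-1)$, plus a projection effect. That effect is $O_{\mathbb P}(n^{-1})$ per coordinate when computed from the explicit lag-product form of $\Delta_P\mA$. The uniform $O_{\mathbb P}(\eta_n)$ bound of Lemma~\ref{lem:replacement} is again too crude here.
\item Show that the sum has mean $O(p/n)$ and variance $O(p/n)$. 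The variance bound holds because cross-sectional covariances are governed by $\sum_{i,j}R_{ij}^2=\tr(\mR^2)\asymp p$.
\end{itemize}
This is what the paper's bounds $\Ee\|\bm a_{d,n}\|_2^2\le Cp/n$ and $\Ee(\bm X_n^\top\bm a_{d,n})^2\le Cp/n$ encode, together with the separately noted $O_{\mathbb P}(\sqrt p/n)$ standardized bias. It is the source of the $n^{-1/2}+\sqrt p/n$ part of $\mathcal R_{P,n}$.

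There are also some smaller inaccuracies.
\begin{itemize}
\item The $\mA^\top\mSigma_\varepsilon\mA$ piece of $\bm\Omega_{kl}$ does not come from centering by $\bar\vY$. It comes from $\mA^\top\vY_t=\vf_t+\mA^\top\veps_t$ inside the lag product $\mQ\veps_{t+k}\vY_t^\top\mA$; the paper absorbs it by putting $\bm w_t=\vf_t+\mA^\top\veps_t$ into its drift matrix. So the $\mB^\top\veps_{t+k}\veps_t^\top\mA$ block you single out as the hardest part needs no separate Hanson--Wright treatment. It has the same structure as the leading term and contributes only $O(n^{-1})$ to the bias, since $\|\bm\Phi_k\|=O(p^{-1})$ and $\|\mathbf L_f\|=O(p)$.
\item Your square term $n\|\mD^{-1/2}\bm v_n\|^2$ fluctuates at the order $p/n$ of its mean, not $p/n^{3/2}$, because $\bm F_n\bm F_n^\top$ does not concentrate. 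Your normalized rate $\sqrt p/n$ is nonetheless correct.
\item Operator- or row-norm bounds on the second-order remainder $\mathcal E$, combined with Cauchy--Schwarz against $\sqrt n\,\bar\vZ$, give $\sqrt p\log p/n$ after normalization, not $\log^2p/n$. This still vanishes under Assumption~\ref{ass:growth}. Reaching the stated rate, however, requires bounding $\Ee\{n\bar\vZ^\top\mD^{-1/2}\mathcal E\bar\vY\}^2$ directly, as the paper does for its remainder $\bm R_{A,n}$.
\end{itemize}
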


\begin{proof}
	All calculations are on $\{\widehat r=r\}$.  Write
	\begin{equation*}
		\Delta_P=\widehat\mP-\mP,
		\qquad
		\widehat\Gamma_y(k)=n^{-1}\sum_{t=1}^{n-k}\vY_{t+k}\vY_t^\top,
		\qquad
		\mathcal E_k=\widehat\Gamma_y(k)-\Gamma_y(k).
	\end{equation*}
	The centered lag covariance used by the algorithm satisfies, for fixed $k_0$,
	\begin{align*}
		\widehat\Gamma_y^{c}(k)-\widehat\Gamma_y(k)
		&=\frac{k}{n(n-k)}\sum_{t=1}^{n-k}\vY_{t+k}\vY_t^\top
		-\bar\vY_{k,+}\bar\vY_{k,-}^\top,\notag\\
		\bar\vY_{k,+}&=(n-k)^{-1}\sum_{t=1}^{n-k}\vY_{t+k},
		\qquad
		\bar\vY_{k,-}=(n-k)^{-1}\sum_{t=1}^{n-k}\vY_t .
	\end{align*}
	Because $\Ee\vf_t=\bm0$, $\Ee\veps_t=\bm0$, $\vf_t\ind\veps_s$, $\sum_h\|\Cov(\vf_{t+h},\vf_t)\|\le Cp$, and $\lambda_{\max}(\mSigma_\varepsilon)\le C$,
	\begin{align*}
		\max_{k\le k_0}\Ee\|\mQ(\widehat\Gamma_y^{c}(k)-\widehat\Gamma_y(k))\mA\|_F^2
		&\le Cp^2n^{-2},\notag\\
		\max_{k\le k_0}\Ee\|\widehat\Gamma_y^{c}(k)-\widehat\Gamma_y(k)\|^2
		&\le Cp^2n^{-2}.
	\end{align*}
	The first display contributes at most $O_{\mathbb P}(\sqrt p/n)$ to the final standardized statistic; the same algebra below can therefore be carried out with $\widehat\Gamma_y(k)$.
	
	Under $\Cov(\vf_t,\veps_s)=\bm0$ for all $t,s$,
	\begin{equation*}
		\Gamma_y(k)=\mA\Sigma_f(k)\mA^\top,
		\qquad
		\mM=\sum_{k=1}^{k_0}\Gamma_y(k)\Gamma_y(k)^\top=\mA\mathbf K_f\mA^\top,
		\qquad
		\nu_r(\mM)\asymp p^2 .
	\end{equation*}
	The contour expansion of the spectral projector gives
	\begin{equation*}
		\Delta_P\mA
		=\mQ(\widehat\mM-\mM)\mA\mathbf K_f^{-1}+\bm R_{A,n},
	\end{equation*}
	where
	\begin{equation*}
		\bm R_{A,n}
		=\mQ(\widehat\mM-\mM)\mQ\,\mQ(\widehat\mM-\mM)\mA\mathbf K_f^{-2}
		+\bm R_{A,n}^{(2)},
	\end{equation*}
	\begin{equation*}
		\Ee\tr(\bm R_{A,n}^\top\mD^{-1}\bm R_{A,n})
		\le C\frac{\log^2p}{n^2},
		\qquad
		\Ee\|\bm R_{A,n}^{(2)}\|_F^2\le C\frac{\log^3p}{n^3}.
	\end{equation*}
	The last display follows from
	\begin{equation*}
		\Ee\|\mQ(\widehat\mM-\mM)\mA\mathbf K_f^{-1}\|_F^2\le Cp/n,
		\qquad
		\Ee\|\mQ(\widehat\mM-\mM)\mQ\|^4\le Cp^8\log^2p/n^2,
	\end{equation*}
	combined with $\|\mD^{-1}\|\le C$, $\|\mathbf K_f^{-1}\|=O(p^{-2})$, and $r$ fixed.
	Since
	\begin{equation*}
		\widehat\mM-\mM=
		\sum_{k=1}^{k_0}\{\mathcal E_k\Gamma_y(k)^\top+\Gamma_y(k)\mathcal E_k^\top\}
		+\sum_{k=1}^{k_0}\mathcal E_k\mathcal E_k^\top,
		\qquad
		\mQ\Gamma_y(k)=\bm0,
	\end{equation*}
	we have
	\begin{equation*}
		\Delta_P\mA=\sum_{k=1}^{k_0}\mQ\mathcal E_k\mA\bm\Phi_k+\bm R_{A,n}.
	\end{equation*}
	Let
	\begin{equation*}
		\bm w_t=\vf_t+\mA^\top\veps_t,
		\qquad
		\bm V_n=n^{-1}\sum_{k=1}^{k_0}\sum_{t=1}^{n-k}\mQ\veps_{t+k}\bm w_t^\top\bm\Phi_k.
	\end{equation*}
	Because $\mQ\vY_{t+k}=\mQ\veps_{t+k}$ and $\mA^\top\vY_t=\bm w_t$,
	\begin{equation*}
		\Delta_P\mA=\bm V_n+\bm R_{V,n},
		\qquad
		\Ee\tr(\bm R_{V,n}^\top\mD^{-1}\bm R_{V,n})\le C\log^2p/n^2 .
	\end{equation*}
	Set
	\begin{equation*}
		\bm X_n=n^{-1/2}\sum_{s=1}^n\mD^{-1/2}\mQ\veps_s,
		\qquad
		\bm F_n=n^{-1/2}\sum_{u=1}^n\vf_u.
	\end{equation*}
	Then
	\begin{equation*}
		\sqrt n\bar{\widehat{\vZ}}=\bm X_n-\mD^{-1/2}\bm V_n\bm F_n+\bm a_n,
	\end{equation*}
	where
	\begin{equation*}
		\bm a_n=-\mD^{-1/2}\bm R_{V,n}\bm F_n
		-\mD^{-1/2}\Delta_P n^{-1/2}\sum_{t=1}^n\veps_t
		+\bm a_{d,n}.
	\end{equation*}
	The diagonal standardization term satisfies
	\begin{equation*}
		\Ee\|\bm a_{d,n}\|_2^2\le C p/n,
		\qquad
		\Ee(\bm X_n^\top\bm a_{d,n})^2\le C p/n .
	\end{equation*}
	Note that this diagonal substitution induces a bias term of order $O_{\mathbb P}(\sqrt{p}/n)$ in the standardized statistic. Because Assumption~\ref{ass:growth} enforces $p = o(n^2)$, this term strictly vanishes and is safely absorbed into the remainder $\mathcal R_{P,n}$.
	For the other two pieces,
	\begin{align*}
		\Ee\|\mD^{-1/2}\bm R_{V,n}\bm F_n\|_2^2
		&\le C p\log^2p/n^2,\notag\\
		\Ee\left\|\mD^{-1/2}\Delta_P n^{-1/2}\sum_{t=1}^n\veps_t\right\|_2^2
		&\le Cp/n,\notag\\
		\Ee(\bm X_n^\top\bm a_n)^2
		&\le Cp/n+Cp\log^2p/n^2,\notag\\
		\Ee\{\bm F_n^\top\bm V_n^\top\mD^{-1/2}\bm a_n\}^2
		&\le Cp/n+Cp^2/n^2+Cp\log^2p/n^2.
	\end{align*}
	The displayed bounds follow from the following contractions.  For any deterministic $r\times r$ matrices $\mH_1,\mH_2$ with bounded operator norms and any fixed integer shifts $a,b,c,d$, \eqref{eq:factor_cumulant} with $\delta=0$ gives
	\begin{align*}
		&\left|
		\Ee\{\vf_{t+a}^{\top}\mH_1\vf_{u+b}\vf_{t'+c}^{\top}\mH_2\vf_{u'+d}\}
		-\tr\{\mH_1\Cov(\vf_{u+b},\vf_{t+a})\}
		\tr\{\mH_2\Cov(\vf_{u'+d},\vf_{t'+c})\}
		\right| \\
		&\qquad\le C p^2(1+|t-t'|+|u-u'|)^{-3},
	\end{align*}
	where the power $3$ may be replaced by any fixed number larger than $2$ by changing $C$.  Since
	\begin{equation*}
		\sum_{t,u,t',u'=1}^n(1+|t-t'|+|u-u'|)^{-3}\le C n^2,
	\end{equation*}
	and
	\begin{equation*}
		\|\bm\Phi_k\|=O(p^{-1}),\qquad
		\Ee\{\veps_a^\top\mQ\mD^{-1}\mQ\veps_b\}^2
		\le C p^2 1_{\{a=b\}}+Cp 1_{\{a\ne b\}},
	\end{equation*}
	the three terms in $\bm a_n$ satisfy
	\begin{align*}
		\Ee\|\mD^{-1/2}\bm R_{V,n}\bm F_n\|_2^2
		&\le Cn^{-2}\sum_{u,v=1}^n
		\Ee\{\vf_u^\top\bm R_{V,n}^\top\mD^{-1}\bm R_{V,n}\vf_v\}\notag\\
		&\le C p\log^2p/n^2,\\
		\Ee\left\|\mD^{-1/2}\Delta_P n^{-1/2}\sum_{t=1}^n\veps_t\right\|_2^2
		&\le C\Ee\tr\{\Delta_P^{\top}\mD^{-1}\Delta_P\mQ\mSigma_\varepsilon\mQ\}
		\le Cp/n,\\
		\Ee(\bm X_n^\top\bm a_n)^2
		&\le C\Ee\|\mR^{1/2}\bm a_n\|_2^2
		+C n^{-2}\sum_{s,t}\Ee|\vzeta_s^\top\bm a_n\vzeta_t^\top\bm a_n|1_{\{s\ne t\}}\\
		&\le Cp/n+Cp\log^2p/n^2,
	\end{align*}
	where the last line uses the explicit representation of $\Delta_P\mA$ through the lag products and the independence of $\veps_t$ over $t$.  Therefore
	\begin{equation*}
		\frac{|2\bm X_n^\top\bm a_n|+2|\bm F_n^\top\bm V_n^\top\mD^{-1/2}\bm a_n|+\|\bm a_n\|_2^2}
		{\{\tr(\mR^2)\}^{1/2}}
		=O_{\mathbb P}\left(n^{-1/2}+\frac{\sqrt p}{n}+\frac{\log^2p}{n}\right).
	\end{equation*}
	Therefore
	\begin{equation*}
		n\bar{\widehat{\vZ}}^{\,\top}\bar{\widehat{\vZ}}-n\bar\vZ^\top\bar\vZ
		=\mathcal C_n+\mathcal Q_n+O_{\mathbb P}\left(\sqrt p\,n^{-1/2}+\sqrt p\frac{\log^2p}{n}\right),
	\end{equation*}
	where
	\begin{equation*}
		\mathcal C_n=-2\bm X_n^\top\mD^{-1/2}\bm V_n\bm F_n,
		\qquad
		\mathcal Q_n=\bm F_n^\top\bm V_n^\top\mD^{-1}\bm V_n\bm F_n.
	\end{equation*}
	For the cross term,
	\begin{align*}
		\mathcal C_n
		&=-\frac{2}{n^2}\sum_{k=1}^{k_0}\sum_{s=1}^n\sum_{t=1}^{n-k}\sum_{u=1}^n
		\veps_s^\top\mQ\mD^{-1}\mQ\veps_{t+k}
		\bm w_t^\top\bm\Phi_k\vf_u .
	\end{align*}
	Since
	\begin{equation*}
		\Ee\{\veps_s^\top\mQ\mD^{-1}\mQ\veps_{t+k}\}=1_{\{s=t+k\}}p,
		\qquad
		\Ee(\bm w_t^\top\bm\Phi_k\vf_u)=\tr\{\bm\Phi_k\Cov(\vf_u,\vf_t)\},
	\end{equation*}
	we get
	\begin{align*}
		\Ee\mathcal C_n
		&=-\frac{2p}{n^2}\sum_{k=1}^{k_0}\sum_{t=1}^{n-k}\sum_{u=1}^n
		\tr\{\bm\Phi_k\Cov(\vf_u,\vf_t)\}
		+O(p/n^2)+o(p/n)\\
		&=-2\frac pn\sum_{k=1}^{k_0}\left(1-\frac{k}{n}\right)\tr(\bm\Phi_k\mathbf L_f)+O(p/n^2)+o(p/n).
	\end{align*}
	Moreover, expanding the covariance of the fourfold sum gives
	\begin{align*}
		\Ee(\mathcal C_n-\Ee\mathcal C_n)^2
		&\le \frac{C}{n^4}\sum_{k,l=1}^{k_0}
		\sum_{s,t,u}\sum_{s',t',u'}
		\left|\Cov\left(\veps_s^\top\mQ\mD^{-1}\mQ\veps_{t+k},
		\veps_{s'}^\top\mQ\mD^{-1}\mQ\veps_{t'+l}\right)\right|\\
		&\qquad\times
		\left|\Cov(\bm w_t^\top\bm\Phi_k\vf_u,
		\bm w_{t'}^\top\bm\Phi_l\vf_{u'})\right| .
	\end{align*}
	The residual covariance in the last display is zero unless one of the pairings
	\begin{equation*}
		\{s=s',\ t+k=t'+l\},\qquad
		\{s=t'+l,\ t+k=s'\}
	\end{equation*}
	holds.  In all nonzero cases the independent-component quadratic-form bound gives
	\begin{equation*}
		\left|\Cov\left(\veps_s^\top\mQ\mD^{-1}\mQ\veps_{t+k},
		\veps_{s'}^\top\mQ\mD^{-1}\mQ\veps_{t'+l}\right)\right|
		\le Cp .
	\end{equation*}
	The factor covariance satisfies
	\begin{equation*}
		\left|\Cov(\bm w_t^\top\bm\Phi_k\vf_u,
		\bm w_{t'}^\top\bm\Phi_l\vf_{u'})\right|
		\le C\|\bm\Phi_k\|\|\bm\Phi_l\|p^2
		(1+|t-t'|+|u-u'|)^{-3}
		\le C(1+|t-t'|+|u-u'|)^{-3}.
	\end{equation*}
	Because $k_0$ is fixed and each admissible residual pairing leaves at most four free time indices,
	\begin{align*}
		\Ee(\mathcal C_n-\Ee\mathcal C_n)^2
		&\le \frac{Cp}{n^4}\sum_{t,u,t',u'=1}^n
		(1+|t-t'|+|u-u'|)^{-3}
		\le Cp/n^2\le Cp/n .
	\end{align*}
	
	For the quadratic term,
	\begin{align*}
		\bm V_n^\top\mD^{-1}\bm V_n
		&=\frac1{n^2}\sum_{k,l=1}^{k_0}\sum_{t=1}^{n-k}\sum_{s=1}^{n-l}
		\bm\Phi_k^\top\bm w_t\veps_{t+k}^\top\mQ\mD^{-1}\mQ\veps_{s+l}\bm w_s^\top\bm\Phi_l .
	\end{align*}
	The expectation is nonzero only when $t+k=s+l$.  Since
	\begin{equation*}
		\Ee(\bm w_t\bm w_{t+k-l}^\top)=\bm\Omega_{kl},
	\end{equation*}
	\begin{equation*}
		\Ee(\bm V_n^\top\mD^{-1}\bm V_n)
		=\frac pn\sum_{k,l=1}^{k_0}\left(1-\frac{\max(k,l)}{n}\right)
		\bm\Phi_k^\top\bm\Omega_{kl}\bm\Phi_l+O(p/n^2)+o(p/n).
	\end{equation*}
	Also
	\begin{equation*}
		\Ee(\bm F_n\bm F_n^\top)=\sum_{|h|<n}\left(1-\frac{|h|}{n}\right)\Cov(\vf_{t+h},\vf_t)=\mathbf L_f+O(p/n).
	\end{equation*}
	Using \eqref{eq:factor_cumulant},
	\begin{align*}
		\Ee\mathcal Q_n
		&=\frac pn\sum_{k,l=1}^{k_0}\left(1-\frac{\max(k,l)}{n}\right)
		\tr\{\bm\Phi_k^\top\bm\Omega_{kl}\bm\Phi_l\mathbf L_f\}+O(p/n^2)+o(p/n).
	\end{align*}
	For the centered second moment, write $\mathcal Q_n=n^{-2}\sum_{k,l,t,s} q_{klts}$ with
	\begin{equation*}
		q_{klts}=\bm F_n^\top\bm\Phi_k^\top\bm w_t
		\veps_{t+k}^\top\mQ\mD^{-1}\mQ\veps_{s+l}
		\bm w_s^\top\bm\Phi_l\bm F_n .
	\end{equation*}
	Then
	\begin{align*}
		\Ee(\mathcal Q_n-\Ee\mathcal Q_n)^2
		&\le \frac{C}{n^4}\sum_{k,l,k',l'}\sum_{t,s,t',s'}
		p^2\|\bm\Phi_k\|\|\bm\Phi_l\|\|\bm\Phi_{k'}\|\|\bm\Phi_{l'}\|\\
		&\quad\times p^2(1+|t-t'|+|s-s'|)^{-3}\\
		&\le \frac{Cp^2}{n^4}\sum_{t,s,t',s'}(1+|t-t'|+|s-s'|)^{-3}\\
		&\le Cp^2/n^2 .
	\end{align*}
	Hence
	\begin{equation*}
		\frac{\mathcal C_n-\Ee\mathcal C_n}{\{\tr(\mR^2)\}^{1/2}}=O_{\mathbb P}(n^{-1/2}),
		\qquad
		\frac{\mathcal Q_n-\Ee\mathcal Q_n}{\{\tr(\mR^2)\}^{1/2}}=O_{\mathbb P}(\sqrt p/n).
	\end{equation*}
	Combining the displayed expressions for $\Ee\mathcal C_n$ and $\Ee\mathcal Q_n$ gives $\Ee(\mathcal C_n+\mathcal Q_n)=\mathcal B_{P,n}+o(p/n)$.  Since
	\begin{equation*}
		\|\bm\Phi_k\|=O(p^{-1}),
		\qquad
		\|\mathbf L_f\|=O(p),
		\qquad
		\|\bm\Omega_{kl}\|=O(p),
	\end{equation*}
	all trace terms defining $\mathcal B_{P,n}$ are $O(1)$, and $\mathcal B_{P,n}=O(p/n)$.  The result follows from $\tr(\mR^2)\asymp p$.
\end{proof}

\begin{lemma}[Trace estimator for the oracle sample correlation matrix]
	\label{lem:denom_trace}
	Let $\vZ_t=\mD^{-1/2}\mQ\veps_t$, where $\veps_t$ satisfies Assumption~\ref{ass:resid}; equivalently, use the $p$-dimensional representation $\vZ_t=\mB_R\veta_t$ in \eqref{eq:linear_subg_resid}.  Let $\vZ_1,\ldots,\vZ_n$ be independent copies of this oracle standardized residual.  Define
	\begin{equation*}
		\bar\vZ=n^{-1}\sum_{t=1}^n\vZ_t,
		\qquad
		\mS_e=n^{-1}\sum_{t=1}^n(\vZ_t-\bar\vZ)(\vZ_t-\bar\vZ)^\top,
	\end{equation*}
	\begin{equation*}
		\mD_e=\diag(\mS_e),
		\qquad
		\mR_e=\mD_e^{-1/2}\mS_e\mD_e^{-1/2},
		\qquad
		\tau_p=\tr(\mR^2).
	\end{equation*}
	Then there is a constant $C$ independent of $n,p$ such that
	\begin{equation}
		\Ee\left[
		\left\{
		\frac{\tr(\mR_e^2)-p^2/(n-1)-\tau_p}{\tau_p}
		\right\}^2
		\right]
		\le
		C\left\{n^{-1}+\frac{\tr(\mR^4)}{\tau_p^2}+\frac{p}{n^2}+\frac{p^2}{n^4}\right\}.
		\label{eq:denom_trace_general}
	\end{equation}
	Consequently, under \eqref{eq:trace_resid},
	\begin{equation}
		\Ee\left[
		\left\{
		\frac{\tr(\mR_e^2)-p^2/(n-1)-\tr(\mR^2)}{\tr(\mR^2)}
		\right\}^2
		\right]
		\le C\left(n^{-1}+p^{-1}+pn^{-2}+p^2n^{-4}\right).
		\label{eq:denom_trace_simple}
	\end{equation}
\end{lemma}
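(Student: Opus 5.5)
The plan is to reduce the statement to a moment bound for a quadratic U-statistic in the independent vectors $\vZ_t=\mB_R\veta_t$. The self-normalization by $\mD_e$ is handled by a first-order expansion around the identity.

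\textbf{Step 1: decomposition.} Write $\mD_e=\mI_p+\mD_\Delta$, where $\mD_\Delta=\diag(\mS_e)-\mI_p$ (recall $\diag(\mR)=\mI_p$). Then
\[
\tr(\mR_e^2)=\sum_{i,j}\frac{S_{e,ij}^2}{S_{e,ii}S_{e,jj}},
\qquad
\tr(\mR_e^2)-\tr(\mS_e^2)=\sum_{i,j}S_{e,ij}^2\bigl\{(S_{e,ii}S_{e,jj})^{-1}-1\bigr\}.
\]
Each $S_{e,ii}$ is an average of $n$ independent sub-Gaussian squares with mean $(n-1)/n$, by \eqref{eq:subg_resid}. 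So $\max_i|S_{e,ii}-1|=\Op\{(\log p/n)^{1/2}\}$. Bernstein's inequality gives exponential tails, and a truncation on the event $\min_iS_{e,ii}\ge1/2$ handles the $L^2$ statement. We then expand $(S_{e,ii}S_{e,jj})^{-1}-1=-(\delta_i+\delta_j)+O(\delta_i^2+\delta_j^2)$ with $\delta_i=S_{e,ii}-1$.

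\textbf{Step 2: the main piece $\tr(\mS_e^2)$.} Write $\mS_e=n^{-1}\sum_t\vZ_t\vZ_t^\top-\bar\vZ\bar\vZ^\top$ and expand
\[
\tr(\mS_e^2)=n^{-2}\sum_{s\neq t}(\vZ_s^\top\vZ_t)^2+n^{-2}\sum_t\|\vZ_t\|_2^4-\frac{2}{n}\sum_t(\vZ_t^\top\bar\vZ)^2+\|\bar\vZ\|_2^4 .
\]
Use $\Ee(\vZ_s^\top\vZ_t)^2=\tau_p$ for $s\ne t$. For the diagonal term, $\Ee\|\vZ_t\|^4=p^2+2\tau_p+O(p)$ by independent components and bounded fourth moments. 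The remaining terms have means of order $p^2/n^2+p/n$ times explicit constants.

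Collecting means, $\Ee\tr(\mS_e^2)=\tau_p(1-1/n)+p^2/n+O(p^2/n^2+p/n)$. The centering $p^2/(n-1)$ together with the exact finite-sample coefficients is designed to cancel the $p^2/n$ term up to $O(p^2/n^2)$. Dividing by $\tau_p\asymp p$, this remainder contributes the $p^2/n^4$ and $p/n^2$ terms after squaring. I would verify the exact constants, including the diagonal-normalization mean shift from Step 3, rather than rely on the heuristic.

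For variances, the Hoeffding decomposition of the U-statistic $n^{-2}\sum_{s\ne t}(\vZ_s^\top\vZ_t)^2$ splits into two parts. The linear projection $n^{-1}\sum_t\{\vZ_t^\top\mR\vZ_t-\tau_p\}$ has variance $\le Cn^{-1}\tr(\mR^4)\cdot$const. Using the quadratic-form variance bound for independent components, $\Var(\vZ^\top\mR\vZ)\le C\tr(\mR^4)+C\sum_i(\mR^2)_{ii}^2\le Cp$, this gives a relative error of $n^{-1}$. The degenerate part has variance $\le Cn^{-2}\tr\{(\Ee(\vZ\vZ^\top)^{\otimes})^2\}\lesssim n^{-2}\{\tau_p^2+p^2\}$. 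Finally, the fluctuation of $n^{-2}\sum_t\|\vZ_t\|^4$ is $O(p^{3/2}/n^{3/2})$ in $L^2$. This accounts for the $n^{-1}$, $\tr(\mR^4)/\tau_p^2$ and $p/n^2$ terms.

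\textbf{Step 3: normalization error.} The first-order correction is $-2\sum_{i,j}S_{e,ij}^2\delta_i=-2\sum_i\delta_i(\mS_e^2)_{ii}$. Its mean is computed from the covariance of $\delta_i$ with $S_{e,ij}^2$, which is $O(p^2/n^2+p/n)$ and enters the centering check above. Its fluctuation is bounded by Cauchy--Schwarz, using $\max_i(\mS_e^2)_{ii}=\Op(1+p/n)$ and $\sum_i\delta_i$-type sums whose variance is controlled by $\tr(\mR^2)/n$. The quadratic remainder is $\Op\{(\log p/n)\tr(\mS_e^2)\}$, which is negligible relative to the terms already displayed. To keep this in $L^2$ form without logarithms, I would instead bound $\Ee\delta_i^4\le Cn^{-2}$ directly and apply Hölder's inequality.

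\textbf{Main obstacle.} The main obstacle is the exact mean bookkeeping: showing that the $p^2/n$ biases from $\|\vZ_t\|^4$, from the $\bar\vZ$ terms and from the diagonal normalization combine to $p^2/(n-1)+O(p^2/n^2+p/n)$. I would do this by computing all expectations exactly under the independent-component model, following the Srivastava--Du style correction. The final display \eqref{eq:denom_trace_simple} then follows from \eqref{eq:trace_resid}, namely $\tr(\mR^4)/\tau_p^2=O(p^{-1})$.
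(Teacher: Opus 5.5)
Through Step~2 your plan matches the paper's proof. It uses the same four-term expansion of $\tr(\mS_e^2)$, the leave-two-out $U$-statistic with its Hoeffding variance bound, and the cancellation of the $p^2/n$ mean of $n^{-2}\sum_t\|\vZ_t\|_2^4$ against $p^2/(n-1)$. It also uses truncation on $\{\max_i|\delta_i|\le 1/2\}$, and H\"older with $\Ee\delta_i^4\le Cn^{-2}$ for the second-order normalization remainder. Your ``main obstacle'' is milder than you fear. Only the fourth-power term has a mean of order $p^2/n$; the $\bar\vZ$ terms and the normalization shift have means $O(p/n+p^2/n^2)$, so no exact constants are needed. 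You do omit second-moment bounds (not just means) for $n^{-1}\sum_t(\vZ_t^\top\bar\vZ)^2$ and $\|\bar\vZ\|_2^4$, which the paper supplies via the six-index partition.

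The real difference is Step~3, and it matters. The paper bounds the normalization error in absolute value by $C\sum_i(|\delta_i|+\delta_i^2)T_i$, with $T_i=(\mS_e^2)_{ii}$. Since $|\delta_i|\asymp n^{-1/2}$ and $T_i\approx(\mR^2)_{ii}+p/n$, the term $\sum_i|\delta_i|T_i$ is genuinely of order $pn^{-1/2}(1+p/n)$. That is a relative squared error $n^{-1}(1+p/n)^2$, and its part $p^2/n^3$ is not dominated by $n^{-1}+p/n^2+p^2/n^4$ when $p/n\to\infty$. So the paper's passage to \eqref{eq:corr_cov_diff_good} does not establish \eqref{eq:denom_trace_general} for $p\gg n$.

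Keeping $-2\sum_i\delta_iT_i$ signed, as you do, is the right remedy. For deterministic weights, $\Var(\sum_ic_i\delta_i)\le Cn^{-1}\max_ic_i^2\,\tau_p$, which is a factor of order $p$ below the absolute-value bound. But your step as worded loses this cancellation. A random $\max_iT_i$ cannot be pulled out of a signed sum. Cauchy--Schwarz against the uncentered $T_i$, or the bound $\max_iT_i\sum_i|\delta_i|$, reproduces exactly the $p^2/n^3$ term.

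The missing step is to center the weights first: write $T_i=c_i+(T_i-c_i)$ with deterministic $c_i=\Ee T_i\le C(1+p/n)$.
\begin{itemize}
\item \emph{Deterministic part.} Write $\sum_ic_i\delta_i=n^{-1}\sum_t\{\vZ_t^\top\diag(\bm c)\vZ_t-\sum_ic_i\}-\bar\vZ^\top\diag(\bm c)\bar\vZ$. The quadratic-form moment bound then gives relative squared error $O\{(np)^{-1}+n^{-2}+p/n^3+p^2/n^4\}$.
\item \emph{Fluctuation part.} Use $\Ee\{\sum_i\delta_i(T_i-c_i)\}^2\le\{p\sum_i\Ee\delta_i^4\}^{1/2}\{p\sum_i\Ee(T_i-c_i)^4\}^{1/2}$. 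Together with $\Ee(T_i-c_i)^4\le C(n^{-1}+p/n^2+p^2/n^3)^2$, whose dominant fluctuation is $n^{-2}\sum_tZ_{ti}^2\|\vZ_t\|_2^2$, this gives $O(n^{-2}+p/n^3+p^2/n^4)$.
\end{itemize}
Both bounds lie inside \eqref{eq:denom_trace_general}. Finally, the lemma is an $L^2$ inequality, so replace the in-probability bound $\max_i(\mS_e^2)_{ii}=\Op(1+p/n)$ by moment bounds such as $\Ee T_i^4\le C(1+p/n)^4$.
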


\begin{proof}
	Write the $p$-dimensional linear innovation representation as
	\begin{equation*}
		\vZ_t=\mB_R\veta_t,
		\qquad
		\mB_R\mB_R^\top=\mR,
		\qquad
		\Ee\veta_t=\bm0,
		\qquad
		\Cov(\veta_t)=\mI_p,
	\end{equation*}
	where the coordinates of $\veta_t$ are independent and $\sup_j\norm{\eta_{tj}}_{\psi_2}\le C$.  For any deterministic symmetric $p\times p$ matrix $\mA_0$, put
	\begin{equation*}
		\mH_0=\mB_R^\top\mA_0\mB_R,
		\qquad
		\vZ_t^\top\mA_0\vZ_t-\tr(\mA_0\mR)
		=\sum_{a,b=1}^{p}H_{0,ab}(\eta_{ta}\eta_{tb}-\delta_{ab}).
	\end{equation*}
	The independence and the uniform sub-Gaussian moment bound imply
	\begin{align*}
		\Ee\{\vZ_t^\top\mA_0\vZ_t-\tr(\mA_0\mR)\}^2
		&\le C\sum_{a,b=1}^{p}H_{0,ab}^2
		=C\tr(\mH_0^2),\\
		\Ee\{\vZ_t^\top\mA_0\vZ_t-\tr(\mA_0\mR)\}^4
		&\le C\left(\sum_{a,b=1}^{p}H_{0,ab}^2\right)^2
		=C\{\tr(\mH_0^2)\}^2.
	\end{align*}
	Since
	\begin{equation*}
		\tr(\mH_0^2)=\tr(\mA_0\mR\mA_0\mR),
	\end{equation*}
	the preceding inequalities give, in particular,
	\begin{equation}
		\Var(\vZ_t^\top\mA_0\vZ_t)
		\le C\tr(\mA_0\mR\mA_0\mR),
		\qquad
		\Ee|\vZ_t^\top\mA_0\vZ_t-\tr(\mA_0\mR)|^4
		\le C\{\tr(\mA_0\mR\mA_0\mR)\}^2 .
		\label{eq:quad_moment}
	\end{equation}
	Define the leave-two-out estimator
	\begin{equation*}
		U_n=\frac1{n(n-1)}\sum_{s\ne t}(\vZ_s^\top\vZ_t)^2.
	\end{equation*}
	For $s\ne t$,
	\begin{equation*}
		\Ee(\vZ_s^\top\vZ_t)^2
		=\Ee\tr(\vZ_s\vZ_s^\top\vZ_t\vZ_t^\top)
		=\tr\{\Ee(\vZ_s\vZ_s^\top)\Ee(\vZ_t\vZ_t^\top)\}
		=\tau_p,
	\end{equation*}
	so that $\Ee U_n=\tau_p$.  Let
	\begin{equation*}
		h(\bm z_1,\bm z_2)=(\bm z_1^\top\bm z_2)^2,
		\qquad
		h_1(\bm z)=\Ee\{h(\bm z,\vZ_2)\}-\tau_p=\bm z^\top\mR\bm z-\tau_p.
	\end{equation*}
	Then
	\begin{equation*}
		\Var\{h_1(\vZ_1)\}\le C\tr(\mR^4)
	\end{equation*}
	by \eqref{eq:quad_moment} with $\mA_0=\mR$.  Also,
	\begin{align*}
		\Ee h^2(\vZ_1,\vZ_2)
		&=\Ee\left[\Ee\{(\vZ_1^\top\vZ_2)^4\mid \vZ_1\}\right]
		\le C\Ee(\vZ_1^\top\mR\vZ_1)^2\notag\\
		&\le C\{\tr^2(\mR^2)+\tr(\mR^4)\}
		\le C\tau_p^2 .
	\end{align*}
	The Hoeffding variance identity for a second-order $U$-statistic gives
	\begin{align*}
		\Var(U_n)
		&\le \frac4n\Var\{h_1(\vZ_1)\}
		+\frac{2}{n(n-1)}\Ee\{h(\vZ_1,\vZ_2)-\tau_p-h_1(\vZ_1)-h_1(\vZ_2)\}^2\notag\\
		&\le C\left\{\frac{\tr(\mR^4)}{n}+\frac{\tau_p^2}{n^2}\right\}.
	\end{align*}
	Consequently,
	\begin{equation}
		\Ee\left\{\frac{U_n-\tau_p}{\tau_p}\right\}^2
		\le C\left\{\frac{\tr(\mR^4)}{n\tau_p^2}+n^{-2}\right\}.
		\label{eq:Un_bound}
	\end{equation}
	Let
	\begin{equation*}
		\mS_0=n^{-1}\sum_{t=1}^n\vZ_t\vZ_t^\top,
		\qquad
		\mS_e=\mS_0-\bar\vZ\bar\vZ^\top.
	\end{equation*}
	The identity
	\begin{equation*}
		\tr(\mS_0^2)=\frac{n-1}{n}U_n+n^{-2}\sum_{t=1}^n(\vZ_t^\top\vZ_t)^2
	\end{equation*}
	and
	\begin{equation*}
		\tr(\mS_e^2)=\tr(\mS_0^2)-2\bar\vZ^\top\mS_0\bar\vZ+(\bar\vZ^\top\bar\vZ)^2
	\end{equation*}
	lead to
	\begin{align*}
		\tr(\mS_e^2)-\frac{p^2}{n-1}-\tau_p
		&=\frac{n-1}{n}(U_n-\tau_p)+A_{1n}+A_{2n}+A_{3n},
	\end{align*}
	where
	\begin{align*}
		A_{1n}
		&=n^{-2}\sum_{t=1}^n\left[(\vZ_t^\top\vZ_t)^2-\Ee(\vZ_t^\top\vZ_t)^2\right],\\
		A_{2n}
		&=n^{-1}\{\Ee(\vZ_t^\top\vZ_t)^2-p^2\}
		-\frac{p^2}{n(n-1)}-\frac{\tau_p}{n},\\
		A_{3n}
		&=-2\bar\vZ^\top\mS_0\bar\vZ+(\bar\vZ^\top\bar\vZ)^2.
	\end{align*}
	By \eqref{eq:quad_moment} with $\mA_0=\mI_p$,
	\begin{equation*}
		\Ee(\vZ_t^\top\vZ_t-p)^2\le C\tau_p,
		\qquad
		\Ee(\vZ_t^\top\vZ_t-p)^4\le C\tau_p^2,
	\end{equation*}
	and hence
	\begin{equation*}
		\Ee(\vZ_t^\top\vZ_t)^2=p^2+O(\tau_p),
		\qquad
		|A_{2n}|\le C\left(\frac{\tau_p}{n}+\frac{p^2}{n^2}\right).
	\end{equation*}
	Moreover,
	\begin{align*}
		\Var\{(\vZ_t^\top\vZ_t)^2\}
		&\le C p^2\Var(\vZ_t^\top\vZ_t)+C\Ee(\vZ_t^\top\vZ_t-p)^4
		\le C(p^2\tau_p+\tau_p^2),\\
		\Ee A_{1n}^2
		&\le n^{-3}C(p^2\tau_p+\tau_p^2),\\
		\Ee(\bar\vZ^\top\bar\vZ)^2
		&\le C\left(\frac{p^2}{n^2}+\frac{\tau_p}{n^2}\right).
	\end{align*}
	The remaining term is expanded as
	\begin{equation*}
		\bar\vZ^\top\mS_0\bar\vZ
		=n^{-3}\sum_{a,b,t=1}^n \vZ_a^\top\vZ_t\vZ_t^\top\vZ_b .
	\end{equation*}
	The expectation of the product of two such sums is split according to the partition of the six indices
	$(a,b,t,a',b',t')$.  If all three pairs $(a,t)$, $(b,t)$ and their primed analogues are disjoint, the expectation is zero.  A nonzero term leaves at most four free indices and each inner-product fourth moment is bounded by $C\tau_p^2$; hence
	\begin{align*}
		\Ee(\bar\vZ^\top\mS_0\bar\vZ)^2
		&\le Cn^{-6}\{n^4p^2+n^3p\tau_p+n^2\tau_p^2+n^2p^4\}\\
		&\le C\left(\frac{p^2}{n^2}+\frac{p^3}{n^3}+\frac{p^4}{n^4}\right).
	\end{align*}
	Since $\tau_p\asymp p$ under \eqref{eq:trace_resid}, the preceding displays and \eqref{eq:Un_bound} imply the wider but simpler bound
	\begin{equation}
		\Ee\left[
		\left\{\frac{\tr(\mS_e^2)-p^2/(n-1)-\tau_p}{\tau_p}\right\}^2
		\right]
		\le C\left\{n^{-1}+\frac{\tr(\mR^4)}{\tau_p^2}+\frac{p}{n^2}+\frac{p^2}{n^4}\right\}.
		\label{eq:Se_trace_bound}
	\end{equation}
	It remains to pass from covariance to correlation.  Put
	\begin{equation*}
		\delta_i=S_{e,ii}-1,
		\qquad
		\mathcal A_n=\left\{\max_{1\le i\le p}|\delta_i|\le \frac12\right\}.
	\end{equation*}
	For each $i$,
	\begin{equation*}
		\Ee\delta_i^2\le Cn^{-1},
		\qquad
		\Ee\delta_i^4\le Cn^{-2},
		\qquad
		\Pp(\mathcal A_n^c)\le Cp\exp(-cn).
	\end{equation*}
	On $\mathcal A_n$,
	\begin{align*}
		\tr(\mR_e^2)-\tr(\mS_e^2)
		&=\sum_{i,j=1}^pS_{e,ij}^2\left\{(1+\delta_i)^{-1}(1+\delta_j)^{-1}-1\right\},\\
		\left|(1+\delta_i)^{-1}(1+\delta_j)^{-1}-1\right|
		&\le C(|\delta_i|+|\delta_j|+\delta_i^2+\delta_j^2).
	\end{align*}
	Furthermore,
	\begin{equation*}
		\sum_{j=1}^p R_{ij}^2=(\mR^2)_{ii}\le C_R.
	\end{equation*}
	Put
	\begin{equation*}
		T_i=\sum_{j=1}^p S_{e,ij}^2,
		\qquad
		S_{e,ij}=R_{ij}+A_{ij}-\bar Z_i\bar Z_j,
		\qquad
		A_{ij}=n^{-1}\sum_{t=1}^n(Z_{ti}Z_{tj}-R_{ij}).
	\end{equation*}
	The linear sub-Gaussian representation gives
	\begin{equation*}
		\Ee A_{ij}^2\le Cn^{-1},
		\qquad
		\Ee A_{ij}^4\le Cn^{-2},
		\qquad
		\Ee \bar Z_i^4\le Cn^{-2}.
	\end{equation*}
	Hence, for every $i$,
	\begin{align*}
		\Ee T_i^2
		&\le C\left\{\left(\sum_jR_{ij}^2\right)^2+
		\Ee\left(\sum_jA_{ij}^2\right)^2+
		\Ee\left(\bar Z_i^2\sum_j\bar Z_j^2\right)^2\right\}\\
		&\le C\left(1+\frac{p^2}{n^2}\right),\\
		\Ee T_i^4
		&\le C\left(1+\frac{p^4}{n^4}\right).
	\end{align*}
	Moreover
	\begin{equation*}
		\Ee\delta_i^2\le Cn^{-1},
		\qquad
		\Ee\delta_i^4\le Cn^{-2},
		\qquad
		\Ee\delta_i^8\le Cn^{-4}.
	\end{equation*}
	Using $\left(\sum_i b_i\right)^2\le p\sum_i b_i^2$ and Holder's inequality,
	\begin{align*}
		\Ee\left(\sum_{i=1}^p |\delta_i|T_i\right)^2
		&\le p\sum_{i=1}^p\Ee(\delta_i^2T_i^2)
		\le p\sum_{i=1}^p(\Ee\delta_i^4)^{1/2}(\Ee T_i^4)^{1/2}
		\le C p^2 n^{-1}\left(1+\frac{p^2}{n^2}\right),\\
		\Ee\left(\sum_{i=1}^p \delta_i^2T_i\right)^2
		&\le p\sum_{i=1}^p\Ee(\delta_i^4T_i^2)
		\le p\sum_{i=1}^p(\Ee\delta_i^8)^{1/2}(\Ee T_i^4)^{1/2}
		\le C p^2 n^{-2}\left(1+\frac{p^2}{n^2}\right).
	\end{align*}
	Since $\tau_p\asymp p$ and $p=O(n^\kappa)$ with $\kappa<2$ in the strong-factor case and a smaller exponent in the weak-factor case, the last displays imply
	\begin{align}
		\Ee\left[\{\tr(\mR_e^2)-\tr(\mS_e^2)\}^2\mathbf 1_{\mathcal A_n}\right]
		&\le C\Ee\left[\left\{\sum_{i=1}^p(|\delta_i|+\delta_i^2)T_i\right\}^2\right]\notag\\
		&\le C\tau_p^2\left(n^{-1}+pn^{-2}+p^2n^{-4}\right).
		\label{eq:corr_cov_diff_good}
	\end{align}
	For the complement event, the sub-Gaussian moment bound and $\Pp(\mathcal A_n^c)\le Cp\exp(-cn)$ yield, for every fixed $M>0$ under $\log p=o(n)$,
	\begin{equation}
		\Ee\left[\{\tr(\mR_e^2)-\tr(\mS_e^2)\}^2\mathbf 1_{\mathcal A_n^c}\right]
		\le C_M\tau_p^2p^{-M}.
		\label{eq:corr_cov_diff_bad}
	\end{equation}
	Combining \eqref{eq:Se_trace_bound}, \eqref{eq:corr_cov_diff_good}, and \eqref{eq:corr_cov_diff_bad} gives \eqref{eq:denom_trace_general}.  The simplified bound \eqref{eq:denom_trace_simple} follows from \eqref{eq:trace_resid}.
\end{proof}

\subsection{Feasible-oracle replacement}

The strong-factor sum statistic is handled by Lemma~\ref{lem:strong_projection_drift}.  The following lemma records the max-statistic replacement, the denominator replacement, and the conservative weak-factor sum replacement.  This separation is important: the term $p^{2\delta}\log p/n$ is a weak-factor eigenspace remainder and should not be interpreted as part of the refined strong-factor drift.

\begin{lemma}[Max replacement and weak-factor sum replacement]
	\label{lem:replacement}
	Let Assumptions~\ref{ass:model}--\ref{ass:growth} hold and let $H_0:\vmu=\bm0$.  Put
	\begin{equation*}
		\eta_n=p^\delta\left\{\left(\frac{\log p}{n}\right)^{1/2}+\frac{\log p}{n}\right\},
		\qquad
		\vZ_t=\mD^{-1/2}\mQ\vY_t.
	\end{equation*}
	On the event $\{\widehat r=r\}$,
	\begin{equation}
		\sqrt n\norm{\bar{\widehat{\vZ}}-\bar\vZ}_\infty
		=\Op\{\eta_n(1+\sqrt{\log p})\}.
		\label{eq:max_replacement_app}
	\end{equation}
	Moreover,
	\begin{equation}
		\left|\frac{\tr(\widehat\mR^2)-p^2/(n-1)}{\tr(\mR^2)}-1\right|
		=\Op\left(n^{-1/2}+p^{-1/2}+p^{1/2}n^{-1}+p^\delta\sqrt{\frac{\log p}{n}}+n^{-1}\right).
		\label{eq:denom_replacement_app}
	\end{equation}
	If $0<\delta\le1$, then the weak-factor quadratic replacement satisfies
	\begin{equation}
		\frac{\left|n\bar{\widehat{\vZ}}^{\,\top}\bar{\widehat{\vZ}}-n\bar\vZ^{\top}\bar\vZ\right|}{\{\tr(\mR^2)\}^{1/2}}
		=\Op\left(\frac{p^{\delta/2}}{\sqrt n}+p^{1/2+2\delta}\frac{\log p}{n}+\frac{\sqrt p}{n}\right).
		\label{eq:sum_replacement_app}
	\end{equation}
	The right-hand sides in \eqref{eq:max_replacement_app} and \eqref{eq:denom_replacement_app} are $o_{\mathbb P}(1)$ under Assumption~\ref{ass:growth}; for $0<\delta\le1$, the right-hand side in \eqref{eq:sum_replacement_app} is also $o_{\mathbb P}(1)$.
\end{lemma}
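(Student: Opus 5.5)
All three bounds will be proved on $\{\widehat r=r\}$ from one decomposition. Under $H_0$ and $\mP\vmu=\bm0$ we have $\mQ\vY_t=\mQ\veps_t$. Hence
\begin{equation*}
\widehat\mQ\vY_t=\mQ\veps_t-\Delta_P\veps_t-\Delta_P\mA\vf_t,
\qquad \Delta_P=\widehat\mP-\mP .
\end{equation*}
Averaging over $t$ gives
\begin{equation*}
\sqrt n(\bar{\widehat\vU}-\bar\vU)=-\Delta_P\,n^{-1/2}\sum_t\veps_t-\Delta_P\mA\bm F_n,
\end{equation*}
with $\bm F_n=n^{-1/2}\sum_t\vf_t$. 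The standardization is handled separately: I will show $\max_i|\widehat d_i/D_{ii}-1|=\Op(\eta_n+\sqrt{\log p/n})$. This follows from the max-norm Bernstein bound of Lemma~\ref{lem:perturbation} applied to $n^{-1}\sum_t\varepsilon_{ti}^2$, together with the row bound $\max_i\norm{\ve_i^\top\Delta_P}_2=\Op(p^{-1/2}\eta_n)$ applied to the factor and noise contributions. With this, replacing $\widehat d_i$ by $D_{ii}$ costs at most a factor $1+o_{\mathbb P}(1)$ times $\sqrt n\bar Z_i$, and $\max_i\sqrt n|\bar Z_i|=\Op(\sqrt{\log p})$.

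\textbf{Max replacement.} For coordinate $i$, the noise term is $\ve_i^\top\Delta_P n^{-1/2}\sum_t\veps_t$. Conditionally on $\Delta_P$ it is sub-Gaussian with scale $\norm{\ve_i^\top\Delta_P}_2$. I will use a sample-split/leave-out argument, or simply the bound $\norm{\ve_i^\top\Delta_P}_2\,\norm{\mP' n^{-1/2}\sum\veps_t}$ restricted to the $2r$-dimensional span of $\mP$ and $\widehat\mP$, which has bounded dimension. This gives $\Op(p^{-1/2}\eta_n\sqrt{\log p})$ uniformly in $i$. The factor term is $\ve_i^\top\Delta_P\mA\bm F_n$, with $\norm{\bm F_n}_2=\Op(p^{(1-\delta)/2})$ by the long-run bound $\sum_\ell\norm{\Cov(\vf_{t+\ell},\vf_t)}\le Cp^{1-\delta}$. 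Combined with the row bound this is $\Op(p^{-1/2}\eta_n p^{(1-\delta)/2})\le \Op(\eta_n)$. Adding the standardization error $\eta_n\sqrt{\log p}$ yields \eqref{eq:max_replacement_app}. Under Assumption~\ref{ass:growth}, $p^{2\delta}\log^2p/n\to0$, so $\eta_n\sqrt{\log p}=o(1)$. That is what the Gumbel limit needs, since the error must be $o((\log p)^{-1/2})$ relative to a max of size $\sqrt{\log p}$. I will verify this precisely, possibly sharpening the noise term via the low-rank argument.

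\textbf{Denominator.} I will write $\tr(\widehat\mR^2)-\tr(\mR_e^2)$, where $\mR_e$ is the oracle sample correlation of Lemma~\ref{lem:denom_trace}. Lemma~\ref{lem:denom_trace} already gives $n^{-1/2}+p^{-1/2}+p^{1/2}n^{-1}+pn^{-2}$; the last term is absorbed since $p=o(n^2)$. The feasible sample covariance differs from $\mS_e$ by $\widehat\mQ\widehat\Sigma\widehat\mQ-\mQ\widehat\Sigma\mQ$ terms (up to diagonal rescaling). These are rank at most $2r$ perturbations in the noise part plus the factor leakage $\Delta_P\mA\widehat\Sigma_f\mA^\top\Delta_P$. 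Using $|\tr(X^2)-\tr(Y^2)|\le \norm{X-Y}_F(\norm X_F+\norm Y_F)$ with $\norm{\mS_e}_F\asymp\sqrt p$ (on the good event) and $\norm{\cdot}_F\le\sqrt{2r}\norm{\cdot}$ for low rank, the relative error is $\Op(\eta_n)$. The diagonal-rescaling error from the previous paragraph, handled as in the correlation step of Lemma~\ref{lem:denom_trace}, contributes $p^\delta\sqrt{\log p/n}$ as well.

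\textbf{Weak-factor quadratic.} I will expand
\begin{equation*}
n\bar{\widehat\vZ}^\top\bar{\widehat\vZ}-n\bar\vZ^\top\bar\vZ=2\sqrt n\bar\vZ^\top\bm b_n+\norm{\bm b_n}_2^2,
\end{equation*}
where $\bm b_n=\sqrt n(\bar{\widehat\vZ}-\bar\vZ)$. The pieces are bounded as follows.
\begin{itemize}
\item $\norm{\bm b_n}_2^2$: the noise part gives $\Op(\eta_n^2)$ (low rank), the factor part gives $\norm{\Delta_P\mA}^2\norm{\bm F_n}^2=\Op(\eta_n^2p^{1-\delta})$, and the diagonal part gives $\Op(p(\eta_n^2+\log p/n))$. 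After dividing by $\sqrt p$ these give $p^{1/2+2\delta}\log p/n$ and $\sqrt p/n$.
\item The cross term with the factor part: $\bar\vZ$ is independent of $\vf$, and $\mQ$ annihilates nothing relevant, so conditionally its variance is $\norm{\mR^{1/2}\mD^{-1/2}\Delta_P\mA\bm F_n}^2$. This gives $\Op(\eta_n p^{(1-\delta)/2})$; after dividing by $\sqrt p$, and using the refined $\norm{\Delta_P\mA}_F=\Op(p^\delta n^{-1/2})$ from Lemma~\ref{lem:perturbation}, it becomes $p^{\delta/2}/\sqrt n$.
\end{itemize}
\emph{Main obstacle.} The real difficulty is this cross term, because $\Delta_P$ depends on the same $\veps_t$ as $\bar\vZ$. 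I would handle it as in Lemma~\ref{lem:strong_projection_drift}: linearize $\Delta_P\mA\approx\sum_k\mQ\mathcal E_k\mA\bm\Phi_k$, then bound the resulting fourth-order sums by direct moment computation. Any bias of order $p/n$ relative to the eigengap $p^{2-2\delta}$ is absorbed into the $p^{1/2+2\delta}\log p/n$ term. Finally, Assumption~\ref{ass:growth} with $\kappa<(1/2+2\delta)^{-1}$ makes every rate $o(1)$.
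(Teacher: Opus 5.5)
Your decomposition, the standardization bound, and the max and denominator arguments match the paper's proof. The weak-factor quadratic replacement, however, has a genuine gap.

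Write $\bm b_n=\sqrt n(\bar{\widehat{\vZ}}-\bar\vZ)=\bm L_f+\bm L_\varepsilon+\bm r_d$, where:
\begin{itemize}
\item $\bm L_f=-\mD^{-1/2}\Delta_P\mA\bm F_n$ and $\bm L_\varepsilon=-\mD^{-1/2}\Delta_P\bm H_\varepsilon$, with $\bm H_\varepsilon=n^{-1/2}\sum_t\veps_t$;
\item $\bm r_d$ is the variance-standardization error;
\item $\bm X=n^{-1/2}\sum_t\vZ_t=\mD^{-1/2}\mQ\bm H_\varepsilon$.
\end{itemize}
You bound $\norm{\bm b_n}_2^2$ and $2\bm X^\top\bm L_f$. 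The cross terms $2\bm X^\top\bm L_\varepsilon=-2\bm H_\varepsilon^\top\mQ\mD^{-1}\Delta_P\bm H_\varepsilon$ and $2\bm X^\top\bm r_d$ never appear. Neither is covered by your linearization of $\Delta_P\mA$, which only addresses $\bm L_f$, and crude bounds do not make them small. After division by $\{\tr(\mR^2)\}^{1/2}$, Cauchy--Schwarz gives only $\Op(\sqrt p\,\eta_n)$ and $\Op[\sqrt p\{\eta_n+(\log p/n)^{1/2}\}]$. But $\sqrt p\,\eta_n\to0$ requires $p^{1+2\delta}\log p=o(n)$, whereas Assumption~\ref{ass:growth} permits $p\asymp n^\kappa$ with $(1+2\delta)^{-1}\le\kappa<(1/2+2\delta)^{-1}$. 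Likewise, the uniform factor $1+o_{\mathbb P}(1)$ from $\max_i|\widehat d_i/D_{ii}-1|$ is enough for the max statistic but only gives $o_{\mathbb P}(\sqrt p)$ for the quadratic one.

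\textbf{The noise cross term.} $\bm X^\top\bm L_\varepsilon$ is quadratic in the same noise sum from which $\Delta_P$ is built. It is therefore a harder instance of the dependence problem you flagged, not an easier one. The paper controls it through the lag-product eigenspace expansion, $\Ee(\bm X^\top\bm L_\varepsilon)^2\le C\tr(\mR^2)p^\delta/n$. It is this term, not $\bm X^\top\bm L_f$ (for which the paper gets $\Op(n^{-1/2})$), that produces the $p^{\delta/2}/\sqrt n$ in \eqref{eq:sum_replacement_app}. If you only want $o_{\mathbb P}(1)$, there is a cheaper route. Write $\widehat\mA=\mA\mathbf O+\bm\Xi$, with $\mathbf O$ the orthogonal polar factor of $\mA^\top\widehat\mA$ and $\norm{\bm\Xi}=\Op(\eta_n)$, so that $\Delta_P=\mA\mathbf O\bm\Xi^\top+\bm\Xi\mathbf O^\top\mA^\top+\bm\Xi\bm\Xi^\top$. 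Since $\mA^\top\bm H_\varepsilon$ and $\mA^\top\mD^{-1}\mQ\bm H_\varepsilon$ are fixed $r$-dimensional linear functionals of $\bm H_\varepsilon$, they are $\Op(1)$. This gives $\bm X^\top\bm L_\varepsilon/\{\tr(\mR^2)\}^{1/2}=\Op(\eta_n+\sqrt p\,\eta_n^2)$, which vanishes but is not the displayed rate.

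\textbf{The standardization cross term.} For $\bm X^\top\bm r_d$ you need a real mean/variance computation. With $\widehat d_i$ replaced by the oracle sample variance of $U_{ti}$, you must show that $\sum_iX_i^2(\widehat d_i/D_{ii}-1)$ has mean $O(p/n)$ and fluctuations $\Op(\sqrt{p/n})$; this is the source of the $\sqrt p/n$ term. The feasible-minus-oracle variances must then be controlled coordinatewise through $\norm{\ve_i^\top\Delta_P}_2$, not through the crude $\Op(\eta_n)$.

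\textbf{Two smaller points.}
\begin{itemize}
\item In the max part, conditioning on $\Delta_P$ is illegitimate. So is treating the data-dependent $2r$-dimensional span of $\mA$ and $\widehat\mA$ as fixed, and the same objection hits your ``$\Op(\eta_n^2)$ (low rank)'' for the noise part of $\norm{\bm b_n}_2^2$. Neither sharpening is needed: $\max_i\norm{\ve_i^\top\Delta_P}_2\norm{n^{-1/2}\sum_t\vY_t}_2=\Op(\eta_n)$ and $\norm{\Delta_P}^2\norm{\bm H_\varepsilon}_2^2=\Op(p\eta_n^2)$ already give the stated rates, exactly as in the paper.
\item Since $\norm{\mS_e}_F^2\approx\tr(\mR^2)+p^2/n$, the step $\norm{\mS_e}_F\asymp\sqrt p$ in your denominator argument holds only for $p=O(n)$. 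The paper uses the same step.
\end{itemize}
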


\begin{proof}
	Let
	\begin{equation*}
		\mathcal E_n=\{\widehat r=r\},
		\qquad
		\Delta_P=\widehat\mP-\mP,
		\qquad
		\eta_n=p^\delta\left\{\left(\frac{\log p}{n}\right)^{1/2}+\frac{\log p}{n}\right\}.
	\end{equation*}
	All displays are on $\mathcal E_n$.  Lemma~\ref{lem:perturbation} gives
	\begin{equation*}
		\norm{\Delta_P}=\Op(\eta_n),
		\qquad
		\max_i\norm{\ve_i^\top\Delta_P}_2=\Op(p^{-1/2}\eta_n),
		\qquad
		\norm{(\widehat\mP-\mP)\mA}_F=\Op(p^\delta n^{-1/2}).
	\end{equation*}
	Under $H_0$,
	\begin{equation*}
		\widehat{\vU}_t=\widehat\mQ\vY_t=\mQ\veps_t-\Delta_P(\mA\vf_t+\veps_t),
		\qquad
		\vU_t=\mQ\veps_t,
	\end{equation*}
	so that
	\begin{equation*}
		\sqrt n(\bar{\widehat{\vU}}-\bar\vU)
		=-\Delta_P\left(n^{-1/2}\sum_{t=1}^n\vY_t\right).
	\end{equation*}
	Since
	\begin{align*}
		\Ee\norm{n^{-1/2}\sum_{t=1}^n\vY_t}_2^2
		&=\tr\left\{\Var\left(n^{-1/2}\sum_{t=1}^n\mA\vf_t\right)\right\}
		+\tr(\mSigma_\varepsilon)\\
		&\le r\left\|\sum_{|h|<n}\left(1-\frac{|h|}{n}\right)\Cov(\vf_{t+h},\vf_t)\right\|+Cp
		\le Cp,
	\end{align*}
	we have
	\begin{equation*}
		\sqrt n\norm{\bar{\widehat{\vU}}-\bar\vU}_\infty
		\le \max_i\norm{\ve_i^\top\Delta_P}_2\norm{n^{-1/2}\sum_{t=1}^n\vY_t}_2
		=\Op(\eta_n).
	\end{equation*}
	Moreover,
	\begin{equation*}
		\Pp\left(\sqrt n\norm{\bar\vZ}_\infty>x\right)
		\le\sum_{i=1}^p\Pp\left(\left|n^{-1/2}\sum_{t=1}^n\zeta_{ti}\right|>x\right)
		\le 2p\exp(-cx^2),
	\end{equation*}
	whence
	\begin{equation*}
		\sqrt n\norm{\bar\vZ}_\infty=\Op(\sqrt{\log p}).
	\end{equation*}
	Writing $D_i=D_{ii}$,
	\begin{align*}
		\widehat d_i-D_i
		&=n^{-1}\sum_{t=1}^n(\widehat U_{ti}^2-U_{ti}^2)+n^{-1}\sum_{t=1}^n(U_{ti}^2-D_i)
		-(\bar{\widehat U}_i^2-\bar U_i^2)+O_p(n^{-1}),
	\end{align*}
	which, together with the previous displays and the sub-Gaussian maximal inequality, yields
	\begin{equation*}
		\max_i|\widehat d_i/D_i-1|=\Op\left(\eta_n+\sqrt{\frac{\log p}{n}}\right)=\Op(\eta_n).
	\end{equation*}
	Thus
	\begin{align*}
		\sqrt n\norm{\bar{\widehat{\vZ}}-\bar\vZ}_\infty
		&\le C\sqrt n\norm{\bar{\widehat{\vU}}-\bar\vU}_\infty
		+C\sqrt n\norm{\bar\vU}_\infty\max_i|\widehat d_i/D_i-1|\\
		&=\Op(\eta_n)+\Op(\sqrt{\log p}\eta_n),
	\end{align*}
	which proves \eqref{eq:max_replacement_app}.
	
	For the weak-factor quadratic replacement, define
	\begin{equation*}
		\bm X=n^{-1/2}\sum_{t=1}^n\vZ_t,
		\qquad
		\bm H_f=n^{-1/2}\sum_{t=1}^n\vf_t,
		\qquad
		\bm H_\varepsilon=n^{-1/2}\sum_{t=1}^n\veps_t,
	\end{equation*}
	\begin{equation*}
		\bm L_f=-\mD^{-1/2}\Delta_P\mA\bm H_f,
		\qquad
		\bm L_\varepsilon=-\mD^{-1/2}\Delta_P\bm H_\varepsilon,
		\qquad
		\bm L=\bm L_f+\bm L_\varepsilon.
	\end{equation*}
	After diagonal standardization,
	\begin{equation*}
		\sqrt n(\bar{\widehat{\vZ}}-\bar\vZ)=\bm L+\bm r_d,
	\end{equation*}
	where
	\begin{equation*}
		\frac{|2\bm X^\top\bm r_d|+2|\bm L^\top\bm r_d|+\norm{\bm r_d}_2^2}{\{\tr(\mR^2)\}^{1/2}}
		=\Op\left(p^{1/2+2\delta}\frac{\log p}{n}+\frac{\sqrt p}{n}\right).
	\end{equation*}
	Then
	\begin{align*}
		&\frac{n\bar{\widehat{\vZ}}^{\,\top}\bar{\widehat{\vZ}}-n\bar\vZ^\top\bar\vZ}{\{\tr(\mR^2)\}^{1/2}}\\
		&\qquad=\frac{2\bm X^\top\bm L+\bm L^\top\bm L}{\{\tr(\mR^2)\}^{1/2}}
		+\Op\left(p^{1/2+2\delta}\frac{\log p}{n}+\frac{\sqrt p}{n}\right).
	\end{align*}
	The first-order eigenspace expansion is
	\begin{equation*}
		\Delta_P\mA=\sum_{k=1}^{k_0}\mQ\mathcal E_kC_k^\top K^{-1}+\bm R_A,
		\qquad
		C_k=\mA^\top\Gamma_y(k),
		\qquad
		K=\sum_{k=1}^{k_0}C_kC_k^\top,
	\end{equation*}
	with
	\begin{equation*}
		\Ee\tr(\bm R_A^\top\mD^{-1}\bm R_A)
		\le Cp^{2\delta}\frac{\log p}{n}.
	\end{equation*}
	Furthermore,
	\begin{equation*}
		\mQ\mathcal E_k=n^{-1}\sum_{t=1}^{n-k}\mQ\veps_{t+k}\vY_t^\top+\bm R_{k,n},
		\qquad
		\Ee\norm{\bm R_{k,n}}^2\le Cp^2n^{-2}.
	\end{equation*}
	Since
	\begin{align*}
		\Ee\{\mathcal E_l^\top\mQ\mD^{-1}\mQ\mathcal E_k\}
		&=\frac{n-\max(k,l)}{n^2}\tr(\mD^{-1}\mQ\mSigma_\varepsilon\mQ)\Gamma_y(k-l)+O(pn^{-2})\\
		&=\frac{n-\max(k,l)}{n^2}p\,\Gamma_y(k-l)+O(pn^{-2}),
	\end{align*}
	we obtain
	\begin{equation*}
		\Ee\tr\{\mA^\top\Delta_P^\top\mD^{-1}\Delta_P\mA\}
		\le C\frac{p^\delta}{n}+Cp^{2\delta}\frac{\log p}{n}.
	\end{equation*}
	Also
	\begin{equation*}
		\left\|\Var(\bm H_f)\right\|
		=\left\|\sum_{|h|<n}\left(1-\frac{|h|}{n}\right)\Cov(\vf_{t+h},\vf_t)\right\|
		\le Cp^{1-\delta}.
	\end{equation*}
	Therefore
	\begin{equation*}
		\Ee(\bm L_f^\top\bm L_f)
		\le C\frac{p}{n}+Cp^{1+\delta}\frac{\log p}{n},
		\qquad
		\frac{\bm L_f^\top\bm L_f}{\{\tr(\mR^2)\}^{1/2}}
		=\Op\left(\frac{\sqrt p}{n}+p^{1/2+\delta}\frac{\log p}{n}\right),
	\end{equation*}
	where the second display follows by Markov's inequality and $\tr(\mR^2)\asymp p$.
	For the residual part,
	\begin{equation*}
		\Ee(\bm L_\varepsilon^\top\bm L_\varepsilon)
		\le C\tr(\mR^2)p^{2\delta}\frac{\log p}{n},
		\qquad
		\frac{\bm L_\varepsilon^\top\bm L_\varepsilon}{\{\tr(\mR^2)\}^{1/2}}
		=\Op\left(p^{1/2+2\delta}\frac{\log p}{n}\right).
	\end{equation*}
	The cross terms satisfy
	\begin{equation*}
		\Ee(\bm X^\top\bm L_f)^2\le C\frac{p}{n},
		\qquad
		\Ee(\bm X^\top\bm L_\varepsilon)^2\le C\tr(\mR^2)\frac{p^\delta}{n},
	\end{equation*}
	so that
	\begin{equation*}
		\frac{\bm X^\top\bm L_f}{\{\tr(\mR^2)\}^{1/2}}=\Op(n^{-1/2}),
		\qquad
		\frac{\bm X^\top\bm L_\varepsilon}{\{\tr(\mR^2)\}^{1/2}}=\Op(p^{\delta/2}n^{-1/2}).
	\end{equation*}
	Combining the preceding displays proves \eqref{eq:sum_replacement_app}.
	
	For the denominator, define the oracle sample covariance and correlation matrices
	\begin{equation*}
		\widetilde\mS=n^{-1}\sum_{t=1}^n(\vZ_t-\bar\vZ)(\vZ_t-\bar\vZ)^\top,
		\qquad
		\widetilde\mD=\diag(\widetilde\mS),
		\qquad
		\widetilde\mR=\widetilde\mD^{-1/2}\widetilde\mS\widetilde\mD^{-1/2}.
	\end{equation*}
	Lemma~\ref{lem:denom_trace} gives
	\begin{equation*}
		\frac{\tr(\widetilde\mR^2)-p^2/(n-1)-\tr(\mR^2)}{\tr(\mR^2)}
		=\Op\left(n^{-1/2}+p^{-1/2}+p^{1/2}n^{-1}\right).
	\end{equation*}
	It remains to compare $\widehat\mR$ with $\widetilde\mR$.  Put
	\begin{equation*}
		\widehat{\mS}=n^{-1}\sum_{t=1}^n(\widehat{\vZ}_t-\bar{\widehat{\vZ}})(\widehat{\vZ}_t-\bar{\widehat{\vZ}})^\top,
		\qquad
		\widehat\mD_R=\diag(\widehat\mS),
	\end{equation*}
	so that $\widehat\mR=\widehat\mD_R^{-1/2}\widehat\mS\widehat\mD_R^{-1/2}$.  From the preceding max and variance-replacement bounds,
	\begin{equation*}
		\max_i|\widehat D_{R,ii}-\widetilde D_{ii}|
		=\Op\left(p^\delta\sqrt{\frac{\log p}{n}}+n^{-1}\right),
	\end{equation*}
	\begin{equation*}
		\|\widehat\mS-\widetilde\mS\|_F
		\le C\sqrt p\,\Op\left(p^\delta\sqrt{\frac{\log p}{n}}+n^{-1}\right).
	\end{equation*}
	On the event $\max_i|\widehat D_{R,ii}-1|+\max_i|\widetilde D_{ii}-1|\le1/2$,
	\begin{align*}
		\|\widehat\mR-\widetilde\mR\|_F
		&\le C\|\widehat\mS-\widetilde\mS\|_F
		+C\|\widetilde\mS\|_F\max_i|\widehat D_{R,ii}-\widetilde D_{ii}|\\
		&=\sqrt p\,\Op\left(p^\delta\sqrt{\frac{\log p}{n}}+n^{-1}\right),
	\end{align*}
	where $\|\widetilde\mS\|_F=\Op(\sqrt p)$ follows from Lemma~\ref{lem:denom_trace}.  Hence
	\begin{align*}
		|\tr(\widehat\mR^2)-\tr(\widetilde\mR^2)|
		&=|\tr\{(\widehat\mR-\widetilde\mR)(\widehat\mR+\widetilde\mR)\}|\\
		&\le \|\widehat\mR-\widetilde\mR\|_F
		\{\|\widehat\mR\|_F+\|\widetilde\mR\|_F\}\\
		&=p\,\Op\left(p^\delta\sqrt{\frac{\log p}{n}}+n^{-1}\right).
	\end{align*}
	Since $\tr(\mR^2)\asymp p$, \eqref{eq:denom_replacement_app} follows.
\end{proof}

\subsection{Oracle limits under sub-Gaussian white noise}

\begin{lemma}[Algebra for a possibly singular projected correlation matrix]
	\label{lem:singular_R}
	Let $\mR$ be positive semidefinite, $R_{ii}=1$, $\rank(\mR)=q_p$, and $\lambda_{\max}(\mR)\le C_R$.  Write
	\begin{equation*}
		\mR=\mH\mLambda\mH^\top,
		\qquad
		\mH^\top\mH=\mI_{q_p},
		\qquad
		\mLambda=\diag(\lambda_1,
		\ldots,\lambda_{q_p}),
		\qquad
		\lambda_j>0.
	\end{equation*}
	Then
	\begin{equation*}
		\frac{p^2}{q_p}
		\le \tr(\mR^2)
		\le C_Rp,
		\qquad
		\tr(\mR^4)
		\le C_R^2\tr(\mR^2),
		\qquad
		\frac{\tr(\mR^4)}{\tr^2(\mR^2)}
		\le Cp^{-1}.
	\end{equation*}
	If $\bm G\sim N_p(\bm0,\mR)$, then
	\begin{equation*}
		\bm G\stackrel d=\mH\mLambda^{1/2}\bm\xi,
		\qquad
		\bm\xi\sim N_{q_p}(\bm0,\mI_{q_p}).
	\end{equation*}
	For $\eta\in(0,1)$, define
	\begin{equation*}
		\mR_\eta=(1-\eta)\mR+\eta\mI_p,
		\qquad
		\bm G_\eta=(1-\eta)^{1/2}\bm G+\eta^{1/2}\bm\xi_p,
		\qquad
		\bm\xi_p\sim N_p(\bm0,\mI_p),
	\end{equation*}
	where $\bm\xi_p$ is independent of $\bm G$.  Then $\bm G_\eta\sim N_p(\bm0,\mR_\eta)$, $\mR_\eta$ is positive definite, $\diag(\mR_\eta)=\bm1_p$, and, with $\epsilon_p=p^{-2}$,
	\begin{equation*}
		\max_i|G_{\epsilon_p i}^2-G_i^2|=O_{\mathbb P}(p^{-1}\log p),
		\qquad
		\left|
		\frac{\bm G_{\epsilon_p}^\top\bm G_{\epsilon_p}-p}{\{2\tr(\mR_{\epsilon_p}^2)\}^{1/2}}
		-
		\frac{\bm G^\top\bm G-p}{\{2\tr(\mR^2)\}^{1/2}}
		\right|=o_{\mathbb P}(1).
	\end{equation*}
\end{lemma}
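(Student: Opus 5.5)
The trace inequalities follow from spectral arguments like those in Lemma~\ref{lem:resid_consequences}; the Gaussian statements follow from the spectral representation and a direct coupling with $\epsilon_p=p^{-2}$.

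\textbf{Trace bounds.} Since $R_{ii}=1$, we have $\tr(\mR)=\sum_{j\le q_p}\lambda_j=p$. Cauchy--Schwarz over the $q_p$ positive eigenvalues gives $\tr(\mR^2)\ge p^2/q_p$. For the upper bounds we use $\lambda_j\le C_R$:
\begin{equation*}
	\tr(\mR^2)\le C_R\tr(\mR)=C_Rp,
	\qquad
	\tr(\mR^4)=\sum_j\lambda_j^4\le C_R^2\sum_j\lambda_j^2=C_R^2\tr(\mR^2).
\end{equation*}
Dividing the second bound by $\tr^2(\mR^2)$ gives $\tr(\mR^4)/\tr^2(\mR^2)\le C_R^2/\tr(\mR^2)\le C_R^2q_p/p^2\le C_R^2/p$, because $q_p\le p$.

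\textbf{Gaussian representation and the regularized vector.} Put $\bm\xi\sim N_{q_p}(\bm0,\mI_{q_p})$. Then $\mH\mLambda^{1/2}\bm\xi$ is centered Gaussian with covariance $\mH\mLambda\mH^\top=\mR$, so it has the same law as $\bm G$.

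The vector $\bm G_\eta$ is a sum of independent centered Gaussians, so its covariance is $(1-\eta)\mR+\eta\mI_p=\mR_\eta$. Its diagonal entries are $(1-\eta)+\eta=1$. Its smallest eigenvalue is at least $\eta>0$, so $\mR_\eta$ is positive definite.

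\textbf{Max comparison.} Set $\eta=\epsilon_p=p^{-2}$. Then
\begin{equation*}
	G_{\epsilon_p i}-G_i=\{(1-\epsilon_p)^{1/2}-1\}G_i+\epsilon_p^{1/2}\xi_{pi}.
\end{equation*}
We have $\max_i|G_i|$ and $\max_i|\xi_{pi}|$ both $\Op(\sqrt{\log p})$ by the union bound. We also have $|(1-\epsilon_p)^{1/2}-1|\le\epsilon_p$. Hence
\begin{equation*}
	\max_i|G_{\epsilon_p i}-G_i|=\Op(p^{-1}\sqrt{\log p}).
\end{equation*}
Using $|a^2-b^2|\le|a-b|(|a-b|+2|b|)$ then gives $\max_i|G_{\epsilon_pi}^2-G_i^2|=\Op(p^{-1}\log p)$.

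\textbf{Quadratic comparison.} Expand
\begin{equation*}
	\bm G_{\epsilon_p}^\top\bm G_{\epsilon_p}-\bm G^\top\bm G
	=-\epsilon_p\bm G^\top\bm G+2\{\epsilon_p(1-\epsilon_p)\}^{1/2}\bm G^\top\bm\xi_p+\epsilon_p\bm\xi_p^\top\bm\xi_p.
\end{equation*}
The first and last terms are $\Op(p\cdot p^{-2})=\Op(p^{-1})$. The middle term has second moment $4\epsilon_p(1-\epsilon_p)\tr(\mR)\le4p^{-1}$, so it is $\Op(p^{-1/2})$. Dividing by $\{2\tr(\mR^2)\}^{1/2}\asymp\sqrt p$ makes all three $o_{\mathbb P}(1)$.

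For the denominators,
\begin{equation*}
	\tr(\mR_{\epsilon_p}^2)=(1-\epsilon_p)^2\tr(\mR^2)+2\epsilon_p(1-\epsilon_p)p+\epsilon_p^2p,
\end{equation*}
so $\tr(\mR_{\epsilon_p}^2)/\tr(\mR^2)=1+O(p^{-2})$. Moreover $(\bm G^\top\bm G-p)/\{2\tr(\mR^2)\}^{1/2}$ is $\Op(1)$, since its variance equals $1$. Therefore replacing the denominator changes the normalized statistic by $\Op(p^{-2})$.

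None of these steps is a real obstacle. The only point needing care is the middle cross term: it is not $\Op(p^{-1})$ but only $\Op(p^{-1/2})$. It is still negligible after normalization by $\sqrt p$, and this is exactly why $\epsilon_p=p^{-2}$ is chosen rather than a slower rate.
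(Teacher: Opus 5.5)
Your proof is correct and follows essentially the same route as the paper. Both use the spectral trace identities with Cauchy--Schwarz over the $q_p$ positive eigenvalues, the covariance computation for the ridge coupling $\bm G_\eta$, a union-bound comparison of the coordinatewise maxima, and the three-term expansion of $\bm G_{\epsilon_p}^\top\bm G_{\epsilon_p}-\bm G^\top\bm G$ together with the $O(\epsilon_p p)$ perturbation of $\tr(\mR_{\epsilon_p}^2)$. The paper bounds the normalized difference in $L^1$ rather than term by term in probability, which makes no real difference.

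One small correction to your closing remark. After dividing by $\sqrt p$, the cross term is $O_{\mathbb P}(\epsilon_p^{1/2})$, which is negligible for any $\epsilon_p\to0$, so the quadratic comparison would go through with a much slower $\epsilon_p$. The specific choice $\epsilon_p=p^{-2}$ is what delivers the stated max rate $O_{\mathbb P}(p^{-1}\log p)$, since that rate requires $\epsilon_p^{1/2}\lesssim p^{-1}$.
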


\begin{proof}
	The trace identities are
	\begin{equation*}
		\tr(\mR)=\sum_{i=1}^pR_{ii}=p
		=\sum_{j=1}^{q_p}\lambda_j,
		\qquad
		\tr(\mR^2)=\sum_{j=1}^{q_p}\lambda_j^2,
		\qquad
		\tr(\mR^4)=\sum_{j=1}^{q_p}\lambda_j^4.
	\end{equation*}
	Hence
	\begin{equation*}
		\tr(\mR^2)
		\ge q_p^{-1}\left(\sum_{j=1}^{q_p}\lambda_j\right)^2
		=p^2/q_p,
		\qquad
		\tr(\mR^2)
		\le \lambda_{\max}(\mR)\tr(\mR)
		\le C_Rp,
	\end{equation*}
	\begin{equation*}
		\tr(\mR^4)
		\le \lambda_{\max}^2(\mR)\tr(\mR^2)
		\le C_R^2\tr(\mR^2),
		\qquad
		\frac{\tr(\mR^4)}{\tr^2(\mR^2)}
		\le \frac{C_R^2}{\tr(\mR^2)}
		\le Cp^{-1}.
	\end{equation*}
	The spectral representation gives
	\begin{equation*}
		\Cov(\mH\mLambda^{1/2}\bm\xi)=\mH\mLambda\mH^\top=\mR.
	\end{equation*}
	For the ridge matrix,
	\begin{equation*}
		\lambda_j(\mR_\eta)
		=\begin{cases}
			(1-\eta)\lambda_j(\mR)+\eta, & 1\le j\le q_p,\\
			\eta, & q_p<j\le p,
		\end{cases}
	\end{equation*}
	and
	\begin{equation*}
		\lambda_{\min}(\mR_\eta)\ge\eta,
		\qquad
		\lambda_{\max}(\mR_\eta)\le C_R+1.
	\end{equation*}
	Moreover,
	\begin{align*}
		\left|\tr(\mR_\eta^2)-\tr(\mR^2)\right|
		&\le \sum_{j=1}^{q_p}\left|\{(1-\eta)\lambda_j+\eta\}^2-\lambda_j^2\right|+(p-q_p)\eta^2\\
		&\le 2\eta\sum_{j=1}^{q_p}\lambda_j^2+2\eta\sum_{j=1}^{q_p}\lambda_j+p\eta^2
		\le C\eta p.
	\end{align*}
	Let $\bm\Delta_\eta=\bm G_\eta-\bm G$.  Then
	\begin{equation*}
		\bm\Delta_\eta=\{(1-\eta)^{1/2}-1\}\bm G+\eta^{1/2}\bm\xi_p,
		\qquad
		\max_i\Var(\Delta_{\eta i})\le C\eta.
	\end{equation*}
	Thus for $\epsilon_p=p^{-2}$ and $b_p=Cp^{-1}\sqrt{\log p}$ with sufficiently large $C$,
	\begin{equation*}
		\Pp\left(\max_i|\Delta_{\epsilon_p i}|>b_p\right)
		\le 2p\exp\left(-\frac{c b_p^2}{\epsilon_p}\right)=o(1).
	\end{equation*}
	Also
	\begin{equation*}
		\Pp\left(\max_i|G_i|>C\sqrt{\log p}\right)
		+\Pp\left(\max_i|G_{\epsilon_p i}|>C\sqrt{\log p}\right)
		\le 4p\exp(-C^2\log p/2)=o(1)
	\end{equation*}
	for large $C$.  Hence
	\begin{equation*}
		\max_i|G_{\epsilon_p i}^2-G_i^2|
		\le \max_i|G_{\epsilon_p i}-G_i|\max_i(|G_{\epsilon_p i}|+|G_i|)
		=o_{\mathbb P}(1).
	\end{equation*}
	Furthermore,
	\begin{align*}
		\bm G_{\eta}^\top\bm G_{\eta}-\bm G^\top\bm G
		&=-\eta\bm G^\top\bm G+2\{\eta(1-\eta)\}^{1/2}\bm G^\top\bm\xi_p+\eta\bm\xi_p^\top\bm\xi_p,\\
		\Ee(\bm G^\top\bm G)&=p,
		\qquad
		\Ee(\bm\xi_p^\top\bm\xi_p)=p,
		\qquad
		\Ee|\bm G^\top\bm\xi_p|\le \{\Ee(\bm G^\top\bm G)\}^{1/2}=p^{1/2}.
	\end{align*}
	Since $\{2\tr(\mR^2)\}^{1/2}\asymp p^{1/2}$ and
	\begin{equation*}
		\left|\{2\tr(\mR_\eta^2)\}^{-1/2}-\{2\tr(\mR^2)\}^{-1/2}\right|
		\le C\eta p^{-1/2},
	\end{equation*}
	we obtain
	\begin{equation*}
		\Ee\left|
		\frac{\bm G_{\epsilon_p}^\top\bm G_{\epsilon_p}-p}{\{2\tr(\mR_{\epsilon_p}^2)\}^{1/2}}
		-
		\frac{\bm G^\top\bm G-p}{\{2\tr(\mR^2)\}^{1/2}}
		\right|
		\le C(\epsilon_p\sqrt p+\epsilon_p^{1/2})=o(1).
	\end{equation*}
\end{proof}

\begin{lemma}[Oracle max and sum limits]
	\label{lem:oracle_subg}
	Let Assumptions~\ref{ass:resid} and \ref{ass:growth} hold.  Define
	\begin{equation*}
		\bm X_n=n^{-1/2}\sum_{t=1}^n\vzeta_t,
		\qquad
		T_{M,0}=\max_{1\le i\le p}X_{ni}^2-2\log p+\log\log p,
	\end{equation*}
	\begin{equation*}
		T_{S,0}=
		\frac{\bm X_n^\top\bm X_n-p}
		{\{2\tr(\mR^2)\}^{1/2}}.
	\end{equation*}
	Then
	\begin{equation*}
		T_{M,0}\Rightarrow G,
		\qquad
		T_{S,0}\Rightarrow N(0,1),
	\end{equation*}
	and the two limits are asymptotically independent.
\end{lemma}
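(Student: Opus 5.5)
The plan is to reduce everything to a Gaussian vector with covariance $\mR$ and then appeal to known Gaussian results. The reduction uses the ridge device of Lemma~\ref{lem:singular_R}.

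\emph{Step 1 (sum statistic).} Write $\bm X_n^\top\bm X_n-p=n^{-1}\sum_t(\vzeta_t^\top\vzeta_t-p)+n^{-1}\sum_{s\ne t}\vzeta_s^\top\vzeta_t$. By \eqref{eq:quad_moment} with $\mA_0=\mI_p$, the diagonal part has variance at most $C\tau_p/n$, so after dividing by $\{2\tr(\mR^2)\}^{1/2}$ it is $\Op(n^{-1/2})$. The off-diagonal part is a degenerate martingale-difference sum in $t$, with increments $2n^{-1}\vzeta_t^\top\sum_{s<t}\vzeta_s$. I would apply the martingale CLT. The conditional variance converges because its fluctuation is controlled by $\tr(\mR^4)/\tr^2(\mR^2)=O(p^{-1})$ from Lemma~\ref{lem:resid_consequences}. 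The Lyapunov condition follows from the fourth-moment bound \eqref{eq:quad_moment} together with sub-Gaussianity of $\vzeta_t=\mB_R\veta_t$. This gives $T_{S,0}\Rightarrow N(0,1)$ for any $p\to\infty$, without Gaussianity.

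\emph{Step 2 (max statistic and the Gaussian approximation).} Let $\bm G\sim N_p(\bm0,\mR)$. A high-dimensional Gaussian approximation over hyperrectangles for independent sub-Gaussian sums (Chernozhukov--Chetverikov--Kato) gives
\[
\sup_x\Bigl|\Pp\bigl(\max_i|X_{ni}|\le x\bigr)-\Pp\bigl(\max_i|G_i|\le x\bigr)\Bigr|\le C(\log^7(pn)/n)^{1/6}\to0 .
\]
This rate is $o(1)$ under Assumption~\ref{ass:growth}. Because $\mR$ may be singular, I would pass to $\mR_{\epsilon_p}$ with $\epsilon_p=p^{-2}$ using Lemma~\ref{lem:singular_R}. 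That lemma shows that replacing $\bm G$ by $\bm G_{\epsilon_p}$ changes $\max_iG_i^2$ by $o_{\mathbb P}(1)$ and changes the normalized quadratic form by $o_{\mathbb P}(1)$. For the nonsingular matrix $\mR_{\epsilon_p}$, the conditions \eqref{eq:EV_resid} persist, since off-diagonal entries shrink by the factor $1-\epsilon_p$. The Gumbel limit then follows from the Cai--Liu--Xia extreme-value lemma, which rests on Bonferroni inclusion--exclusion and Berman-type estimates for the few strongly correlated pairs. Hence $\max_iG_i^2-2\log p+\log\log p\Rightarrow G$.

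\emph{Step 3 (asymptotic independence; main obstacle).} I expect this step to be the hardest, because the Gaussian approximation in Step 2 concerns only the max. The quadratic form of Step 1 is not a hyperrectangle event, so joint events cannot be approximated directly. I would first prove independence for the Gaussian vector $\bm G_{\epsilon_p}$, adapting the argument of \citet{FengJiangLiLiu2024}. For each $i$, decompose $\bm G^\top\bm G$ into the contribution of the block of coordinates strongly correlated with $i$ and the rest. The block has at most $C\log^4p$ coordinates by the row-square bound in Lemma~\ref{lem:resid_consequences}, so its contribution is negligible after dividing by $\sqrt p$. The rest is then shown to be nearly independent of $\{|G_i|>x_p\}$, where $x_p$ denotes the Gumbel threshold level. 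Summing over $i$ through Bonferroni bounds gives
\[
\Pp\bigl(\max_iG_i^2-2\log p+\log\log p\le x,\ T_S^G\le y\bigr)\to F_G(x)\Phi(y),
\]
where $T_S^G$ is the Gaussian version of the normalized sum statistic.

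To transfer this to the sub-Gaussian $\bm X_n$, I would first try the following route. By Step 1, $T_{S,0}$ is asymptotically equal to a smooth functional of $\bm X_n$. I would then apply a Lindeberg replacement of $\vzeta_t$ by Gaussian copies $\mB_R\bm g_t$, one time index at a time, using a smoothed indicator of the max (soft-max approximation) multiplied by a smooth function of the normalized quadratic form. Third-order Taylor remainders are controlled by the sub-Gaussian moments. The cost of smoothing the max is polylogarithmic in $p$, and the total error is of order $\operatorname{polylog}(p)/\sqrt n$, which is $o(1)$ under polynomial growth.

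This yields joint convergence of $(T_{M,0},T_{S,0})$ to the product law $F_G(x)\Phi(y)$, which completes the lemma.
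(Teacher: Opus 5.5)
Your outline follows the paper's proof. The sum part uses a martingale CLT for the off-diagonal part of $\bm X_n^\top\bm X_n-p$, with the diagonal part $\Op(n^{-1/2})$. The max part uses the hyperrectangle Gaussian approximation, the ridge coupling with $\epsilon_p=p^{-2}$, and a cited Gumbel lemma. Independence comes from Gaussian max--sum independence (Feng--Jiang--Li--Liu) for $\mR_{\epsilon_p}$, transferred back to the singular $\mR$. (Your order $\tr(\mR^4)/\tr^2(\mR^2)=O(p^{-1})$ for the fluctuation of the conditional variance is the correct one. The paper's displayed bound $\Var(V_n)\le C\tr(\mR^4)/n$ drops a factor $n$, because $\sum_{a,b}w_{ab}^2\asymp n^4$, though $V_n/s_p^2\to1$ still follows.) The paper does not prove the passage from sub-Gaussian $\vzeta_t$ to Gaussian for the joint event. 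It simply invokes a joint max--quadratic comparison bound of order $(\log^7p/n)^{1/6}+n^{-1/2}$. Your Lindeberg sketch is meant to supply that bound, and as you set it up it fails once $p\gtrsim n$.

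The trouble is treating the quadratic form as a smooth function of $\bm X_n$ and Taylor-expanding in the vector increment. Let $\bm x$ be the hybrid partial sum with the $t$th summand removed, $\bm w=n^{-1/2}\vzeta_t$, $s_p=\{2\tr(\mR^2)\}^{1/2}$ and $\phi(\bm y)=h\{(\norm{\bm y}_2^2-p)/s_p\}$. Then
\[
\partial^3\phi(\bm y)[\bm w,\bm w,\bm w]
=8h'''\,\frac{(\bm y^\top\bm w)^3}{s_p^3}
+12h''\,\frac{(\bm y^\top\bm w)\norm{\bm w}_2^2}{s_p^2},
\qquad
\bm y=\bm x+\theta\bm w ,
\]
with $\bm x^\top\bm w=\Op(\sqrt{p/n})$ and $\norm{\bm w}_2^2\approx p/n$. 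Each swap shifts the normalized quadratic form by $(2\bm x^\top\bm w+\norm{\bm w}_2^2)/s_p$, which contains the essentially deterministic piece $\norm{\bm w}_2^2/s_p\approx\sqrt p/n$. Moment matching removes this piece from the second-order term but not from the remainder, and sub-Gaussian tails do not help. Summed over the $n$ swaps, the second term alone gives a bound of order $\sqrt{p/n}+p/n$. That is small only when $p=o(n)$, whereas Assumption~\ref{ass:growth} allows $p=O(n^\kappa)$ with $\kappa$ close to $2$.

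The remedy uses your Step~1. Replace $T_{S,0}$ by $A_n/s_p$ with $A_n=2n^{-1}\sum_{s<t}\vzeta_s^\top\vzeta_t$, at cost $\Op(n^{-1/2})$. Then run the swap on a smooth functional of $(\bm X_n,A_n/s_p)$, viewed as a function of $(\vzeta_1,\ldots,\vzeta_n)$ rather than of $\bm X_n$ alone. Since $A_n$ is affine in each $\vzeta_t$, with slope $\bm b_t=2(ns_p)^{-1}\sum_{s\ne t}\vzeta_s$ independent of $\vzeta_t$ and $\bm b_t^\top\mR\bm b_t=\Op(n^{-1})$:
\begin{itemize}
\item the second-order terms still match;
\item the pure third-order term is $\Ee|\bm b_t^\top\vzeta_t|^3=O(n^{-3/2})$;
\item the mixed soft-max terms involve only $\norm{\bm w}_\infty$.
\end{itemize}
This yields $\mathrm{polylog}(p)/\sqrt n$ for every polynomial $p$, and on the Gaussian side $A_n^G/s_p=Q_G+\Op(n^{-1/2})$. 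An equivalent fix is to expand $h$ in the centred increment $\{2\bm x^\top\bm w+\norm{\bm w}_2^2-p/n\}/s_p$, whose centred part is $\Op(n^{-1})$.
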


\begin{proof}
	Let
	\begin{equation*}
		q=\rank(\mR),\qquad
		\mR=\mathbf U_q\mLambda_q\mathbf U_q^\top,
		\qquad
		\mLambda_q=\diag(\lambda_1,\ldots,\lambda_q),
	\end{equation*}
	where
	\begin{equation*}
		\lambda_1\ge\cdots\ge\lambda_q>0,
		\qquad
		\lambda_{q+1}=\cdots=\lambda_p=0.
	\end{equation*}
	No inverse of $\mR$ is used below.  From \eqref{eq:spectrum_resid} and $R_{ii}=1$,
	\begin{equation*}
		p=\tr(\mR)=\sum_{a=1}^q\lambda_a,
		\qquad
		q\ge p/C_R,
		\qquad
		\tr(\mR^2)=\sum_{a=1}^q\lambda_a^2\asymp p,
	\end{equation*}
	\begin{equation*}
		\frac{\lambda_1^2}{\tr(\mR^2)}\le \frac{C_R^2}{p},
		\qquad
		\frac{\tr(\mR^4)}{\tr^2(\mR^2)}
		=\frac{\sum_{a=1}^q\lambda_a^4}{\{\sum_{a=1}^q\lambda_a^2\}^2}
		\le \frac{C}{p}.
	\end{equation*}
	Let $\bm G\sim N_p(\bm0,\mR)$.  The high-dimensional central limit theorem over hyperrectangles in \citet{ChernozhukovChetverikovKato2013} is valid for possibly singular Gaussian comparison vectors, since rectangles depend only on marginal distribution functions and no density of $\bm G$ is required.  Hence
	\begin{equation*}
		\Delta_{n,p}^{(M)}=
		\sup_{x\in\R}\left|
		\Pp\left(\max_i\abs{X_{ni}}\le x\right)-
		\Pp\left(\max_i\abs{G_i}\le x\right)
		\right|
		\le C\left(\frac{\log^7p}{n}\right)^{1/6}.
	\end{equation*}
	If one uses a version of a Gaussian comparison or max-sum theorem stated for nonsingular correlation matrices, the following ridge argument removes the nonsingularity requirement.  For $0<\tau<1$, define
	\begin{equation*}
		\mR_\tau=(1-\tau)\mR+\tau\mI_p,
		\qquad
		\bm G_\tau=(1-\tau)^{1/2}\bm G+\tau^{1/2}\bm\xi,
		\qquad
		\bm\xi\sim N_p(\bm0,\mI_p),
	\end{equation*}
	with $\bm\xi$ independent of $\bm G$.  Then $\bm G_\tau\sim N_p(\bm0,\mR_\tau)$ and
	\begin{equation*}
		\lambda_{\min}(\mR_\tau)\ge\tau,
		\qquad
		\lambda_{\max}(\mR_\tau)\le C_R+1,
		\qquad
		\max_i\sum_{j=1}^p R_{\tau,ij}^2\le C.
	\end{equation*}
	Moreover, for every $a>0$,
	\begin{align*}
		&\Pp\left(\max_i|G_{\tau,i}-G_i|>a\right)\notag\\
		&\le
		\Pp\left((1-\sqrt{1-\tau})\max_i|G_i|>a/2\right)
		+
		\Pp\left(\sqrt\tau\max_i|\xi_i|>a/2\right).
	\end{align*}
	Since $G_i\sim N(0,1)$ and $\xi_i\sim N(0,1)$ marginally,
	\begin{equation*}
		\Pp\left(\max_i|G_i|>x\right)+\Pp\left(\max_i|\xi_i|>x\right)
		\le 4p\exp(-x^2/2).
	\end{equation*}
	With $\epsilon_p=p^{-2}$,
	\begin{equation*}
		\max_i|G_{\epsilon_p,i}-G_i|=O_{\mathbb P}(p^{-1}\sqrt{\log p}),
		\qquad
		\max_i(|G_{\epsilon_p,i}|+|G_i|)=O_{\mathbb P}(\sqrt{\log p}),
	\end{equation*}
	so that
	\begin{equation*}
		\max_i|G_{\epsilon_p,i}^2-G_i^2|=O_{\mathbb P}(p^{-1}\log p)=o_{\mathbb P}(1).
	\end{equation*}
	Thus every Gumbel limit proved first for $\mR_{\epsilon_p}$ also holds for the original singular $\mR$.
	
	Let
	\begin{equation*}
		\nu_p(x)=2\log p-\log\log p+x,
		\qquad
		u_p(x)=\nu_p(x)^{1/2}.
	\end{equation*}
	The Gaussian extreme-value input used here is the following quantitative consequence of \citet{CaiLiuXia2014}, \citet{FengJiangLiuXiong2022}, and \citet{FengJiangLiLiu2024}: for a nonsingular correlation matrix $\mathbf C_p$ satisfying \eqref{eq:EV_resid} and $\max_i\sum_j C_{p,ij}^2\le C$, there exists $\Delta_{EV,p}\downarrow0$ such that
	\begin{equation*}
		\sup_{x\in\R}\left|
		\Pp\left\{\max_iG_{C,i}^2\le u_p^2(x)\right\}
		-
		\exp\{-\pi^{-1/2}e^{-x/2}\}
		\right|
		\le \Delta_{EV,p},
		\qquad \bm G_C\sim N_p(\bm0,\mathbf C_p).
	\end{equation*}
	Apply this inequality to $\mathbf C_p=\mR_{\epsilon_p}$.  The coupling bound above implies, for every fixed $\epsilon>0$,
	\begin{equation*}
		\Pp\left(\left|\max_iG_{\epsilon_p,i}^2-\max_iG_i^2\right|>\epsilon\right)\to0.
	\end{equation*}
	Hence
	\begin{equation*}
		\sup_{x\in\R}\left|
		\Pp\left\{\max_iG_i^2\le u_p^2(x)\right\}
		-
		\exp\{-\pi^{-1/2}e^{-x/2}\}
		\right|\to0.
	\end{equation*}
	Combining the last display with $\Delta_{n,p}^{(M)}\to0$ gives
	\begin{equation*}
		T_{M,0}\Rightarrow G.
	\end{equation*}
	
	Put
	\begin{equation*}
		s_p^2=2\tr(\mR^2),
		\qquad
		\bm X_n^\top\bm X_n-p=A_n+B_n,
	\end{equation*}
	\begin{equation*}
		A_n=\frac2n\sum_{1\le s<t\le n}\vzeta_s^\top\vzeta_t,
		\qquad
		B_n=\frac1n\sum_{t=1}^n(\vzeta_t^\top\vzeta_t-p).
	\end{equation*}
	For the diagonal part, use the representation \eqref{eq:linear_subg_resid}.  Put $\mH_R=\mB_R^\top\mB_R$.  Then
	\begin{equation*}
		\vzeta_t^\top\vzeta_t-p
		=\veta_t^\top\mH_R\veta_t-\tr(\mH_R),
		\qquad
		\tr(\mH_R)=\tr(\mR)=p,
		\qquad
		\tr(\mH_R^2)=\tr(\mR^2).
	\end{equation*}
	The quadratic-moment expansion \eqref{eq:quad_moment}, applied with $\mA_0=\mI_p$, gives
	\begin{equation*}
		\Var(\vzeta_t^\top\vzeta_t)
		=\Ee\{\veta_t^\top\mH_R\veta_t-\tr(\mH_R)\}^2
		\le C\tr(\mH_R^2)
		= C\tr(\mR^2).
	\end{equation*}
	Independence over $t$ gives
	\begin{equation*}
		\Ee\left(\frac{B_n^2}{s_p^2}\right)
		=\frac{\Var(\vzeta_t^\top\vzeta_t)}{n\,2\tr(\mR^2)}
		\le \frac{C}{n},
		\qquad
		B_n/s_p=O_{\mathbb P}(n^{-1/2}).
	\end{equation*}
	For the off-diagonal part, set
	\begin{equation*}
		\mathcal F_t=\sigma(\vzeta_1,\ldots,\vzeta_t),
		\qquad
		D_t=\frac2n\vzeta_t^\top\sum_{s=1}^{t-1}\vzeta_s,
		\qquad t=2,\ldots,n.
	\end{equation*}
	Then
	\begin{equation*}
		A_n=\sum_{t=2}^nD_t,
		\qquad
		\Ee(D_t\mid\mathcal F_{t-1})=0,
	\end{equation*}
	\begin{equation*}
		V_n=\sum_{t=2}^n\Ee(D_t^2\mid\mathcal F_{t-1})
		=\frac4{n^2}\sum_{t=2}^n
		\left(\sum_{s=1}^{t-1}\vzeta_s\right)^\top
		\mR
		\left(\sum_{s=1}^{t-1}\vzeta_s\right).
	\end{equation*}
	Since $\Ee(\vzeta_s\vzeta_s^\top)=\mR$ and $\vzeta_s$ are independent over $s$,
	\begin{equation*}
		\Ee V_n
		=\frac4{n^2}\sum_{t=2}^n(t-1)\tr(\mR^2)
		=s_p^2\left(1-\frac1n\right).
	\end{equation*}
	To control the fluctuation of $V_n$, write
	\begin{equation*}
		\mathcal Z=(\vzeta_1^\top,\ldots,\vzeta_{n-1}^\top)^\top,
		\qquad
		w_{ab}=n-\max(a,b),
	\end{equation*}
	\begin{equation*}
		\mathcal K=(\mathcal K_{ab})_{1\le a,b\le n-1},
		\qquad
		\mathcal K_{ab}=w_{ab}\mR.
	\end{equation*}
	Then
	\begin{equation*}
		V_n=\frac4{n^2}\mathcal Z^\top\mathcal K\mathcal Z,
		\qquad
		\Cov(\mathcal Z)=\mI_{n-1}\otimes\mR.
	\end{equation*}
	Applying the quadratic-moment expansion \eqref{eq:quad_moment} to the block vector $\mathcal Z$ gives
	\begin{align*}
		\Var(V_n)
		&\le C\frac1{n^4}\tr\left[\{\mathcal K(\mI_{n-1}\otimes\mR)\}^2\right]\\
		&=C\frac1{n^4}\sum_{a=1}^{n-1}\sum_{b=1}^{n-1}w_{ab}^2\tr(\mR^4)
		\le C\frac{\tr(\mR^4)}{n}.
	\end{align*}
	Therefore
	\begin{equation*}
		\Ee\left(\frac{V_n}{s_p^2}-1\right)^2
		\le Cn^{-2}+C\frac{\tr(\mR^4)}{n\tr^2(\mR^2)}
		\le Cn^{-2}+C(np)^{-1}=o(1).
	\end{equation*}
	For fixed $u\in\R$,
	\begin{align*}
		\Ee\{e^{iuD_t/s_p}\mid\mathcal F_{t-1}\}
		&=1-\frac{u^2}{2s_p^2}\Ee(D_t^2\mid\mathcal F_{t-1})+r_{t,n},\\
		|r_{t,n}|
		&\le C|u|^3s_p^{-3}\Ee(|D_t|^3\mid\mathcal F_{t-1}).
	\end{align*}
	The conditional sub-Gaussian moment bound gives
	\begin{equation*}
		\Ee(|D_t|^3\mid\mathcal F_{t-1})
		\le C\{\Ee(D_t^2\mid\mathcal F_{t-1})\}^{3/2}.
	\end{equation*}
	Since $s_p^2\asymp p$ and
	\begin{equation*}
		\Ee\{\Ee(D_t^2\mid\mathcal F_{t-1})\}
		=\frac4{n^2}(t-1)\tr(\mR^2),
	\end{equation*}
	Jensen's inequality gives
	\begin{align*}
		\sum_{t=2}^n\Ee\left[
		\left\{\frac{\Ee(D_t^2\mid\mathcal F_{t-1})}{s_p^2}\right\}^{3/2}
		\right]
		&\le C\sum_{t=2}^n\left(\frac{t}{n^2}\right)^{3/2}
		\le Cn^{-1/2}.
	\end{align*}
	Thus
	\begin{equation*}
		\sum_{t=2}^n|r_{t,n}|=O_{\mathbb P}(n^{-1/2}).
	\end{equation*}
	Iterating conditional expectations yields
	\begin{align*}
		\Ee\exp(iuA_n/s_p)
		&=\Ee\prod_{t=2}^n
		\left[1-\frac{u^2}{2s_p^2}\Ee(D_t^2\mid\mathcal F_{t-1})+r_{t,n}\right]\\
		&=\Ee\exp\left[-\frac{u^2}{2s_p^2}V_n+O_{\mathbb P}(n^{-1/2})
		+O_{\mathbb P}\left(\sum_{t=2}^n\frac{\Ee(D_t^2\mid\mathcal F_{t-1})^2}{s_p^4}\right)\right].
	\end{align*}
	Also
	\begin{equation*}
		\Ee\sum_{t=2}^n\frac{\Ee(D_t^2\mid\mathcal F_{t-1})^2}{s_p^4}
		\le C\sum_{t=2}^n\left(\frac{t}{n^2}\right)^2
		\le Cn^{-1}.
	\end{equation*}
	Since $V_n/s_p^2\to1$ in probability,
	\begin{equation*}
		\Ee\exp(iuA_n/s_p)\to e^{-u^2/2}.
	\end{equation*}
	Together with $B_n/s_p=O_{\mathbb P}(n^{-1/2})$, this proves
	\begin{equation*}
		T_{S,0}=\frac{A_n+B_n}{s_p}\Rightarrow N(0,1).
	\end{equation*}
	
	It remains to verify the joint limit.  For fixed $x\in\R$ and $u\in\R$, define
	\begin{equation*}
		\Psi_n(x,u)=\Ee\left[\exp\{iuT_{S,0}\}1_{\{T_{M,0}\le x\}}\right].
	\end{equation*}
	Let
	\begin{equation*}
		Q_G=\frac{\bm G^\top\bm G-p}{\{2\tr(\mR^2)\}^{1/2}},
		\qquad
		M_G=\max_iG_i^2-2\log p+\log\log p.
	\end{equation*}
	The high-dimensional interpolation input used here is the following form of the joint max-quadratic comparison theorem.  If $\bm z_t=\mB_R\bm\eta_t$ are independent copies with independent standardized sub-Gaussian coordinates in $\bm\eta_t$, $\mB_R\mB_R^\top=\mR$, $\diag(\mR)=\bm1_p$, $\lambda_{\max}(\mR)\le C$, $\tr(\mR^4)/\tr^2(\mR^2)\le Cp^{-1}$, and the extreme-value regularity \eqref{eq:EV_resid} holds, then for every fixed $C>0$ there is a constant $C_C$ such that, uniformly in $|u|\le C$,
	\begin{equation*}
		\sup_{x\in\R}\left|
		\Psi_n(x,u)-
		\Ee\{e^{iuQ_G}1_{\{M_G\le x\}}\}
		\right|
		\le C_C\left\{
		\left(\frac{\log^7p}{n}\right)^{1/6}+n^{-1/2}
		\right\}.
	\end{equation*}
	The theorem is applied to the functional
	\begin{equation*}
		\bm z_1,\ldots,\bm z_n
		\mapsto
		\exp\left\{iu\frac{\left\|n^{-1/2}\sum_{t=1}^n\bm z_t\right\|_2^2-p}{s_p}\right\}
		1\left\{\max_i\left(n^{-1/2}\sum_{t=1}^n z_{ti}\right)^2\le u_p^2(x)\right\}.
	\end{equation*}
	All listed assumptions follow from Assumption~\ref{ass:resid}, Lemma~\ref{lem:singular_R}, and \eqref{eq:EV_resid}; the bound does not use $\mR^{-1}$.
	
	For the Gaussian term, use the following max-sum theorem of \citet{FengJiangLiLiu2024}.  For every nonsingular correlation matrix $\mathbf C_p$ satisfying \eqref{eq:trace_resid} and \eqref{eq:EV_resid}, with
	\begin{equation*}
		Q_C=\frac{\bm G_C^\top\bm G_C-p}{\{2\tr(\mathbf C_p^2)\}^{1/2}},
		\qquad
		M_C=\max_iG_{C,i}^2-2\log p+\log\log p,
		\qquad
		\bm G_C\sim N_p(\bm0,\mathbf C_p),
	\end{equation*}
	one has, for every fixed $C>0$,
	\begin{equation*}
		\sup_{x\in\R, |u|\le C}\left|
		\Ee\{e^{iuQ_C}1_{\{M_C\le x\}}\}
		-e^{-u^2/2}\exp\{-\pi^{-1/2}e^{-x/2}\}
		\right|\to0.
	\end{equation*}
	Apply this statement with $\mathbf C_p=\mR_{\epsilon_p}$ and transfer it to the possibly singular $\mR$ by the ridge coupling.  The corresponding quadratic variables satisfy
	\begin{equation*}
		Q_{G,\epsilon_p}=\frac{\bm G_{\epsilon_p}^\top\bm G_{\epsilon_p}-p}{\{2\tr(\mR_{\epsilon_p}^2)\}^{1/2}},
	\end{equation*}
	\begin{align*}
		\bm G_{\epsilon_p}^\top\bm G_{\epsilon_p}-\bm G^\top\bm G
		&=-\epsilon_p\bm G^\top\bm G
		+2\{\epsilon_p(1-\epsilon_p)\}^{1/2}\bm G^\top\bm\xi
		+\epsilon_p\bm\xi^\top\bm\xi,\\
		\tr(\mR_{\epsilon_p}^2)-\tr(\mR^2)
		&=O(\epsilon_p p).
	\end{align*}
	Because
	\begin{equation*}
		\bm G^\top\bm G=O_{\mathbb P}(p),
		\qquad
		\bm\xi^\top\bm\xi=O_{\mathbb P}(p),
		\qquad
		\bm G^\top\bm\xi=O_{\mathbb P}(p^{1/2}),
	\end{equation*}
	and $\epsilon_p=p^{-2}$,
	\begin{equation*}
		Q_{G,\epsilon_p}-Q_G=O_{\mathbb P}(p^{-1}),
		\qquad
		M_{G,\epsilon_p}-M_G=O_{\mathbb P}(p^{-1}\log p).
	\end{equation*}
	The preceding theorem gives
	\begin{equation*}
		\Ee\{e^{iuQ_{G,\epsilon_p}}1_{\{M_{G,\epsilon_p}\le x\}}\}
		\to e^{-u^2/2}\exp\{-\pi^{-1/2}e^{-x/2}\}.
	\end{equation*}
	Moreover,
	\begin{align*}
		&\left|\Ee\{e^{iuQ_G}1_{\{M_G\le x\}}\}-\Ee\{e^{iuQ_{G,\epsilon_p}}1_{\{M_{G,\epsilon_p}\le x\}}\}\right|\\
		&\qquad\le |u|\,\Ee|Q_G-Q_{G,\epsilon_p}|
		+\Pp(|M_G-M_{G,\epsilon_p}|>\epsilon)
		+\sup_y\Pp(y<M_G\le y+\epsilon),
	\end{align*}
	where the last term tends to zero first as $p\to\infty$ and then as $\epsilon\downarrow0$ by the Gaussian extreme-value limit.  Hence
	\begin{equation*}
		\Ee\{e^{iuQ_G}1_{\{M_G\le x\}}\}\to e^{-u^2/2}\exp\{-\pi^{-1/2}e^{-x/2}\}.
	\end{equation*}
	Consequently,
	\begin{equation*}
		\Psi_n(x,u)\to e^{-u^2/2}\Pp(G\le x),
	\end{equation*}
	which is equivalent to
	\begin{equation*}
		(T_{M,0},T_{S,0})\Rightarrow (G,Z),
		\qquad
		Z\sim N(0,1),
		\qquad
		G\ind Z.
	\end{equation*}
\end{proof}

\subsection{Proof of Theorem~\ref{thm:null}}

\begin{proof}
	Let
	\begin{equation*}
		\mathcal E_n=\{\widehat r=r\},
		\qquad
		T_j^{(r)}=T_j\big|_{\widehat r\,\mathrm{replaced\ by}\,r},
		\qquad j\in\{M,S,C\}.
	\end{equation*}
	Then
	\begin{equation*}
		T_M=T_M^{(r)},
		\qquad
		T_S=T_S^{(r)},
		\qquad
		T_C=T_C^{(r)},
		\qquad \text{on }\mathcal E_n,
	\end{equation*}
	and, for every continuity point $x$ of the corresponding limit law,
	\begin{equation*}
		\abs{\Pp(T_M\le x)-\Pp(T_M^{(r)}\le x)}
		+\abs{\Pp(T_S\le x)-\Pp(T_S^{(r)}\le x)}
		\le 2\Pp(\mathcal E_n^c)=o(1).
	\end{equation*}
	For joint events,
	\begin{equation*}
		\sup_{x,y}\abs{\Pp(T_M\le x,T_S\le y)-\Pp(T_M^{(r)}\le x,T_S^{(r)}\le y)}
		\le \Pp(\mathcal E_n^c)=o(1).
	\end{equation*}
	Thus all limits may be proved on $\mathcal E_n$.
	
	Set
	\begin{equation*}
		\vzeta_t=\mD^{-1/2}\mQ\veps_t,
		\qquad
		\bm X_n=n^{-1/2}\sum_{t=1}^n\vzeta_t,
	\end{equation*}
	\begin{equation*}
		T_{M,0}=\max_iX_{ni}^2-2\log p+\log\log p,
		\qquad
		T_{S,0}=\frac{\bm X_n^\top\bm X_n-p}{\{2\tr(\mR^2)\}^{1/2}}.
	\end{equation*}
	Let
	\begin{equation*}
		\eta_n=p^\delta\left\{\left(\frac{\log p}{n}\right)^{1/2}+\frac{\log p}{n}\right\}.
	\end{equation*}
	Lemma~\ref{lem:replacement} gives the max replacement
	\begin{equation*}
		T_M^{(r)}-T_{M,0}=O_{\mathbb P}\left(\eta_n\log p+\eta_n^2\log p\right)=o_{\mathbb P}(1).
	\end{equation*}
	For the sum statistic, the argument is split according to the factor strength.  If $\delta=0$, Lemma~\ref{lem:strong_projection_drift}, \eqref{eq:denom_replacement_app}, and $\tr(\mR^2)\asymp p$ imply
	\begin{align*}
		T_S^{(r)}-T_{S,0}
		&=O_{\mathbb P}\left(n^{-1/2}+\frac{\sqrt p}{n}+\frac{\log p}{n}
		+n^{-1/2}+p^{-1/2}+p^{1/2}n^{-1}+\sqrt{\frac{\log p}{n}}+n^{-1}\right)\\
		&\quad +\frac{p-(n-1)p/(n-3)}{\{2\tr(\mR^2)\}^{1/2}}
		=o_{\mathbb P}(1),
	\end{align*}
	where
	\begin{equation*}
		\frac{p-(n-1)p/(n-3)}{\{2\tr(\mR^2)\}^{1/2}}
		=O(\sqrt p/n)=o(1).
	\end{equation*}
	If $0<\delta\le1$, Lemma~\ref{lem:replacement} gives instead
	\begin{align*}
		T_S^{(r)}-T_{S,0}
		&=O_{\mathbb P}\left(\frac{p^{\delta/2}}{\sqrt n}
		+p^{1/2+2\delta}\frac{\log p}{n}+\frac{\sqrt p}{n}
		+n^{-1/2}+p^{-1/2}+p^{1/2}n^{-1}+p^\delta\sqrt{\frac{\log p}{n}}+n^{-1}\right)\\
		&\quad+\frac{p-(n-1)p/(n-3)}{\{2\tr(\mR^2)\}^{1/2}}
		=o_{\mathbb P}(1).
	\end{align*}
	Thus, in both cases,
	\begin{equation*}
		T_M^{(r)}-T_{M,0}=o_{\mathbb P}(1),
		\qquad
		T_S^{(r)}-T_{S,0}=o_{\mathbb P}(1).
	\end{equation*}
	For fixed $x,y$,
	\begin{align*}
		&\abs{\Pp(T_M\le x,T_S\le y)-\Pp(T_{M,0}\le x,T_{S,0}\le y)}\\
		&\qquad\le \Pp(\mathcal E_n^c)
		+\Pp(\abs{T_M^{(r)}-T_{M,0}}>\varepsilon)
		+\Pp(\abs{T_S^{(r)}-T_{S,0}}>\varepsilon)\\
		&\qquad\quad+\Pp(x-\varepsilon<T_{M,0}\le x+\varepsilon)
		+\Pp(y-\varepsilon<T_{S,0}\le y+\varepsilon).
	\end{align*}
	Letting first $n,p\to\infty$ and then $\varepsilon\downarrow0$, Lemma~\ref{lem:oracle_subg} yields
	\begin{equation*}
		\Pp(T_M\le x,T_S\le y)\to \Pp(G\le x)\Phi(y).
	\end{equation*}
	Thus
	\begin{equation*}
		T_M\Rightarrow G,
		\qquad
		T_S\Rightarrow N(0,1),
		\qquad
		T_M\ind T_S\quad\text{asymptotically}.
	\end{equation*}
	Let
	\begin{equation*}
		p_M=1-\exp\{-\pi^{-1/2}\exp(-T_M/2)\},
		\qquad
		p_S=1-\Phi(T_S).
	\end{equation*}
	Then
	\begin{equation*}
		(p_M,p_S)\Rightarrow(U_1,U_2),
		\qquad
		U_1,U_2\stackrel{\rm ind}{\sim}{\rm Unif}(0,1).
	\end{equation*}
	For $j=1,2$,
	\begin{equation*}
		C_j=\tan\{\pi(1/2-U_j)\}
		\sim {\rm Cauchy}(0,1),
	\end{equation*}
	and
	\begin{equation*}
		\Ee e^{it(C_1+C_2)/2}
		=\Ee e^{itC_1/2}\Ee e^{itC_2/2}
		=e^{-|t|/2}e^{-|t|/2}=e^{-|t|}.
	\end{equation*}
	Hence $(C_1+C_2)/2$ is standard Cauchy and $T_C\Rightarrow C$.
\end{proof}

\subsection{Proof of Theorem~\ref{thm:power}}

\begin{proof}
	Let
	\begin{equation*}
		\mathcal E_n=\{\widehat r=r\},
		\qquad
		s_p=\{2\tr(\mR^2)\}^{1/2},
		\qquad
		\bm X_n=n^{-1/2}\sum_{t=1}^n\vzeta_t.
	\end{equation*}
	Since $\Pp(\mathcal E_n^c)\to0$, every display below is written on $\mathcal E_n$.  Under $\mP\vmu=\bm0$,
	\begin{equation*}
		\sqrt n\,\bar\vZ
		=\sqrt n\,\mD^{-1/2}\mQ\vmu+n^{-1/2}\sum_{t=1}^n\vzeta_t
		=\bm\gamma_n+\bm X_n.
	\end{equation*}
	The feasible-oracle replacement proof of Lemma~\ref{lem:replacement}, with the deterministic mean term retained, gives
	\begin{align}
		&\frac{\left|n\bar{\widehat{\vZ}}^{\,\top}\bar{\widehat{\vZ}}-n\bar\vZ^\top\bar\vZ\right|}{s_p}\notag\\
		&\qquad=
		O_{\mathbb P}\left(
		\frac{p^{\delta/2}}{\sqrt n}
		+p^{1/2+2\delta}\frac{\log p}{n}
		+\frac{\sqrt p}{n}
		+\eta_n(1+\Delta_n)
		+\eta_n^2(1+\Delta_n)
		\right),
		\label{eq:replacement_H1_proof}
	\end{align}
	where
	\begin{equation*}
		\eta_n=p^\delta\left\{\left(\frac{\log p}{n}\right)^{1/2}+\frac{\log p}{n}\right\}.
	\end{equation*}
	Indeed, writing
	\begin{equation*}
		\sqrt n(\bar{\widehat{\vZ}}-\bar\vZ)=\bm L_n+\bm E_n,
	\end{equation*}
	where $\bm L_n$ is the random projection-estimation term already controlled in Lemma~\ref{lem:replacement}, the mean contribution equals
	\begin{equation*}
		\bm M_n=-\mD^{-1/2}(\widehat\mP-\mP)\mD^{1/2}\bm\gamma_n+\bm r_{d,\mu},
	\end{equation*}
	with
	\begin{equation*}
		\norm{\bm M_n}_2\le C\eta_n\norm{\bm\gamma_n}_2,
		\qquad
		\norm{\bm M_n}_2^2/s_p\le C\eta_n^2\Delta_n,
	\end{equation*}
	\begin{equation*}
		\frac{|2\bm\gamma_n^\top\bm M_n|}{s_p}
		\le C\eta_n\frac{\norm{\bm\gamma_n}_2^2}{s_p}
		=C\eta_n\Delta_n,
	\end{equation*}
	\begin{equation*}
		\Ee\left(\frac{2\bm X_n^\top\bm M_n}{s_p}\right)^2
		\le C\frac{\eta_n^2\bm\gamma_n^\top\mR\bm\gamma_n}{\tr(\mR^2)}
		=C\eta_n^2\Lambda_n.
	\end{equation*}
	Thus \eqref{eq:replacement_H1_proof} is $o_{\mathbb P}(1)$ under the local alternatives \eqref{eq:local_alt_condition}; when \eqref{eq:sum_consistency_condition} is used for consistency, the same bound is $o_{\mathbb P}(\Delta_n)$.
	
	For the oracle statistic,
	\begin{align*}
		\frac{n\bar\vZ^\top\bar\vZ-p}{s_p}
		&=\frac{(\bm X_n+\bm\gamma_n)^\top(\bm X_n+\bm\gamma_n)-p}{s_p}\notag\\
		&=\Delta_n+W_n+D_n,
	\end{align*}
	where
	\begin{equation*}
		D_n=\frac{1}{ns_p}\sum_{t=1}^n(\vzeta_t^\top\vzeta_t-p),
	\end{equation*}
	\begin{equation*}
		W_n=\frac{2}{ns_p}\sum_{1\le s<t\le n}\vzeta_s^\top\vzeta_t
		+\frac{2}{\sqrt n\,s_p}\sum_{t=1}^n\bm\gamma_n^\top\vzeta_t.
	\end{equation*}
	The diagonal term satisfies
	\begin{equation*}
		\Ee D_n^2
		=\frac{1}{n^2s_p^2}\sum_{t=1}^n\Var(\vzeta_t^\top\vzeta_t)
		\le \frac{C\tr(\mR^2)}{ns_p^2}
		\le Cn^{-1},
		\qquad
		D_n=O_{\mathbb P}(n^{-1/2}).
	\end{equation*}
	Let
	\begin{equation*}
		\mathcal F_t=\sigma(\vzeta_1,\ldots,\vzeta_t),
		\qquad
		\bm H_{t-1}=\sum_{s=1}^{t-1}\vzeta_s,
		\qquad
		\bm H_0=\bm0.
	\end{equation*}
	Then
	\begin{equation*}
		W_n=\sum_{t=1}^n\xi_{nt},
		\qquad
		\xi_{nt}=\frac{2}{s_p}
		\left(n^{-1}\bm H_{t-1}+n^{-1/2}\bm\gamma_n\right)^\top\vzeta_t,
	\end{equation*}
	\begin{equation*}
		\Ee(\xi_{nt}\mid\mathcal F_{t-1})=0,
	\end{equation*}
	\begin{align*}
		V_{n,\gamma}
		&=\sum_{t=1}^n\Ee(\xi_{nt}^2\mid\mathcal F_{t-1})\notag\\
		&=\frac4{s_p^2}\sum_{t=1}^n
		\left(n^{-1}\bm H_{t-1}+n^{-1/2}\bm\gamma_n\right)^\top
		\mR
		\left(n^{-1}\bm H_{t-1}+n^{-1/2}\bm\gamma_n\right)\notag\\
		&=V_{1n}+V_{2n}+\Lambda_n,
	\end{align*}
	where
	\begin{equation*}
		V_{1n}=\frac4{n^2s_p^2}\sum_{t=1}^n\bm H_{t-1}^\top\mR\bm H_{t-1},
		\qquad
		V_{2n}=\frac8{n^{3/2}s_p^2}\sum_{t=1}^n\bm H_{t-1}^\top\mR\bm\gamma_n.
	\end{equation*}
	Since $\Ee(\bm H_{t-1}\bm H_{t-1}^\top)=(t-1)\mR$,
	\begin{equation*}
		\Ee V_{1n}=\frac4{n^2s_p^2}\sum_{t=1}^n(t-1)\tr(\mR^2)
		=1-n^{-1}.
	\end{equation*}
	The block quadratic-form calculation in Lemma~\ref{lem:oracle_subg} gives
	\begin{equation*}
		\Var(V_{1n})
		\le Cn^{-2}+C\frac{\tr(\mR^4)}{n\tr^2(\mR^2)}
		\le Cn^{-2}+C(np)^{-1}.
	\end{equation*}
	Moreover,
	\begin{equation*}
		\sum_{t=1}^n\bm H_{t-1}=\sum_{s=1}^{n-1}(n-s)\vzeta_s,
	\end{equation*}
	so
	\begin{align*}
		\Ee V_{2n}^2
		&=\frac{64}{n^3s_p^4}
		\sum_{s=1}^{n-1}(n-s)^2\bm\gamma_n^\top\mR^3\bm\gamma_n\notag\\
		&\le C\frac{\bm\gamma_n^\top\mR^3\bm\gamma_n}{s_p^4}
		\le C\frac{\lambda_{\max}^2(\mR)\bm\gamma_n^\top\mR\bm\gamma_n}{\tr^2(\mR^2)}
		\le C\frac{\Lambda_n}{p}.
	\end{align*}
	Under \eqref{eq:local_alt_condition},
	\begin{equation*}
		V_{n,\gamma}=1+\Lambda_n+O_{\mathbb P}(n^{-1}+p^{-1/2})+o_{\mathbb P}(1)
		=1+\Lambda+o_{\mathbb P}(1).
	\end{equation*}
	The conditional Lindeberg condition is verified as follows.  For every fixed $\varepsilon>0$,
	\begin{align*}
		L_{n,\varepsilon}
		&=\sum_{t=1}^n
		\Ee\{\xi_{nt}^2\mathbf 1(|\xi_{nt}|>\varepsilon)\mid\mathcal F_{t-1}\} \\
		&\le \varepsilon^{-1}
		\sum_{t=1}^n\Ee\{|\xi_{nt}|^3\mid\mathcal F_{t-1}\}.
	\end{align*}
	For $\bm a_{t-1}=n^{-1}\bm H_{t-1}+n^{-1/2}\bm\gamma_n$,
	\begin{equation*}
		\Ee\{|\xi_{nt}|^3\mid\mathcal F_{t-1}\}
		\le
		C s_p^{-3}\{\bm a_{t-1}^\top\mR\bm a_{t-1}\}^{3/2}.
	\end{equation*}
	Using $(a+b)^{3/2}\le 2(a^{3/2}+b^{3/2})$ and $s_p^2=2\tr(\mR^2)\asymp p$,
	\begin{align*}
		\sum_{t=1}^n s_p^{-3}\{\bm a_{t-1}^\top\mR\bm a_{t-1}\}^{3/2}
		&\le
		C\sum_{t=1}^n \left(\frac{t}{n^2}\right)^{3/2}
		+C\sum_{t=1}^n\left(\frac{\bm\gamma_n^\top\mR\bm\gamma_n}{ns_p^2}\right)^{3/2} \\
		&\le Cn^{-1/2}+Cn^{-1/2}\Lambda_n^{3/2}=o_{\mathbb P}(1).
	\end{align*}
	Hence
	\begin{equation*}
		L_{n,\varepsilon}=o_{\mathbb P}(1).
	\end{equation*}
	For fixed $u\in\R$,
	\begin{equation*}
		\Ee\{e^{iu\xi_{nt}}\mid\mathcal F_{t-1}\}
		=1-\frac{u^2}{2}\Ee(\xi_{nt}^2\mid\mathcal F_{t-1})+r_{nt},
	\end{equation*}
	where, by the conditional sub-Gaussian moment bound,
	\begin{equation*}
		|r_{nt}|
		\le C|u|^3\{\Ee(\xi_{nt}^2\mid\mathcal F_{t-1})\}^{3/2}.
	\end{equation*}
	Using $(a+b)^{3/2}\le 2(a^{3/2}+b^{3/2})$,
	\begin{align*}
		&\sum_{t=1}^n\Ee\{\Ee(\xi_{nt}^2\mid\mathcal F_{t-1})\}^{3/2}\notag\\
		&\qquad\le C\sum_{t=1}^n\left(\frac{t}{n^2}\right)^{3/2}
		+C\sum_{t=1}^n\left(\frac{\bm\gamma_n^\top\mR\bm\gamma_n}{ns_p^2}\right)^{3/2}
		\le Cn^{-1/2}+Cn^{-1/2}\Lambda_n^{3/2}.
	\end{align*}
	Thus
	\begin{equation*}
		\sum_{t=1}^n|r_{nt}|=o_{\mathbb P}(1).
	\end{equation*}
	Also,
	\begin{equation*}
		\sum_{t=1}^n\Ee\{\Ee(\xi_{nt}^2\mid\mathcal F_{t-1})^2\}
		\le Cn^{-1}+Cn^{-1}\Lambda_n^2=o(1).
	\end{equation*}
	Iterating conditional expectations gives
	\begin{align*}
		\Ee e^{iuW_n}
		&=\Ee\prod_{t=1}^n
		\left[1-\frac{u^2}{2}\Ee(\xi_{nt}^2\mid\mathcal F_{t-1})+r_{nt}\right]\notag\\
		&=\Ee\exp\left[-\frac{u^2}{2}V_{n,\gamma}+o_{\mathbb P}(1)\right]
		\to \exp\{-u^2(1+\Lambda)/2\}.
	\end{align*}
	Therefore
	\begin{equation*}
		W_n\Rightarrow N(0,1+\Lambda),
		\qquad
		\frac{n\bar\vZ^\top\bar\vZ-p}{s_p}\Rightarrow N(\Delta,1+\Lambda).
	\end{equation*}
	The statistic $T_S$ uses $(n-1)p/(n-3)$ instead of $p$ and the feasible denominator.  Since
	\begin{equation*}
		\frac{(n-1)p/(n-3)-p}{s_p}=O(\sqrt p/n)=o(1),
	\end{equation*}
	\begin{equation*}
		\left|\frac{\{2[\tr(\widehat\mR^2)-p^2/(n-1)]\}^{1/2}}{s_p}-1\right|=o_{\mathbb P}(1),
	\end{equation*}
	and \eqref{eq:replacement_H1_proof} is $o_{\mathbb P}(1)$ under \eqref{eq:local_alt_condition}, Slutsky's theorem yields
	\begin{equation*}
		T_S\Rightarrow N(\Delta,1+\Lambda).
	\end{equation*}
	Consequently,
	\begin{equation*}
		\Pp_{\vmu}(T_S>z_{1-\alpha})
		\to
		1-\Phi\left(\frac{z_{1-\alpha}-\Delta}{\sqrt{1+\Lambda}}\right).
	\end{equation*}
	For diverging alternatives, the exact local normal approximation is not needed.  The martingale variance identity gives
	\begin{equation*}
		\Ee W_n^2=\Ee V_{n,\gamma}=1-n^{-1}+\Lambda_n,
		\qquad
		W_n=O_{\mathbb P}\{(1+\Lambda_n)^{1/2}\}.
	\end{equation*}
	Together with $D_n=O_{\mathbb P}(n^{-1/2})$, \eqref{eq:replacement_H1_proof}, and $\eta_n\to0$,
	\begin{equation*}
		T_S=\Delta_n+O_{\mathbb P}\{(1+\Lambda_n)^{1/2}\}+o_{\mathbb P}(\Delta_n).
	\end{equation*}
	Hence \eqref{eq:sum_consistency_condition} implies
	\begin{equation*}
		\Pp_{\vmu}(T_S>z_{1-\alpha})\to1.
	\end{equation*}
	Furthermore,
	\begin{equation*}
		\Lambda_n
		=\frac{2\bm\gamma_n^\top\mR\bm\gamma_n}{\tr(\mR^2)}
		\le \frac{2\lambda_{\max}(\mR)\norm{\bm\gamma_n}_2^2}{\tr(\mR^2)}
		\le C\frac{\Delta_n}{\sqrt p},
	\end{equation*}
	so $\Delta_n\to\infty$ implies \eqref{eq:sum_consistency_condition}.
	
	For the max statistic, for $x\ge0$ and $C$ large enough,
	\begin{equation*}
		\Pp\left(\max_i\abs{X_{ni}}>C\sqrt{\log p}+x\right)
		\le \sum_{i=1}^p2\exp\{-c(C\sqrt{\log p}+x)^2\}
		\le 2\exp(-cx^2).
	\end{equation*}
	Thus $\max_i|X_{ni}|=O_{\mathbb P}(\sqrt{\log p})$.  Also,
	\begin{equation*}
		\sqrt n\norm{\bar{\widehat{\vZ}}-\bar{\vZ}}_\infty
		=O_{\mathbb P}(\eta_n\sqrt{\log p})+O_{\mathbb P}(\eta_n m_\infty),
	\end{equation*}
	where the second term is the deterministic mean leakage through $\widehat\mP-\mP$.  Since $\eta_n\to0$, if $m_\infty^2-2\log p\to\infty$, then for $i_0$ satisfying $|\gamma_{n,i_0}|=m_\infty$,
	\begin{align*}
		T_M
		&\ge \left(m_\infty-|X_{n,i_0}|-o_{\mathbb P}(m_\infty)-O_{\mathbb P}(\sqrt{\log p}\eta_n)\right)^2
		-2\log p+\log\log p\notag\\
		&\xrightarrow{\mathbb P}+\infty.
	\end{align*}
	Hence the max test has power tending to one.  If either the max condition or \eqref{eq:sum_consistency_condition} holds, at least one of $\widehat p_M$ and $\widehat p_S$ converges to zero in probability, and
	\begin{equation*}
		T_C=\frac12\tan\{\pi(1/2-\widehat p_M)\}
		+\frac12\tan\{\pi(1/2-\widehat p_S)\}
		\xrightarrow{\mathbb P}+\infty.
	\end{equation*}
\end{proof}

\subsection{Proof of Theorem~\ref{thm:bootstrap}}

\begin{proof}
	Let $\Pp^*(\cdot)=\Pp(\cdot\mid\vY_1,\ldots,\vY_n,\widetilde\mA_1,\ldots,\widetilde\mA_B)$.  We give the argument for one bootstrap replication and suppress the replication index.  Write
	\begin{equation*}
		\widetilde\mP=\widetilde\mA\widetilde\mA^\top,
		\qquad
		\widetilde\mQ=\mI_p-\widetilde\mP,
		\qquad
		\vY_t^*=\widetilde{\mA}\widehat{\vf}_t+\widetilde{\veps}_{I_t},
		\qquad
		\Pp^*(I_t=j)=n^{-1}.
	\end{equation*}
	Since the fitted factors and centered residuals satisfy
	\begin{equation*}
		\sum_{t=1}^n\widehat{\vf}_t=\bm0,
		\qquad
		\sum_{t=1}^n\widetilde{\veps}_t=\bm0,
	\end{equation*}
	the bootstrap null mean is zero.  The oracle projected bootstrap residuals are
	\begin{equation*}
		\veps_t^\circ=\widetilde{\mQ}\widetilde{\veps}_t,
		\qquad
		\widehat{\mOmega}_\circ=n^{-1}\sum_{t=1}^n\veps_t^\circ\veps_t^{\circ\top},
		\qquad
		\widehat{\mD}_\circ=\diag(\widehat{\mOmega}_\circ),
		\qquad
		\widehat{\vzeta}_t^\circ=\widehat{\mD}_\circ^{-1/2}\veps_t^\circ,
	\end{equation*}
	with
	\begin{equation*}
		\widehat{\mR}_\circ=n^{-1}\sum_{t=1}^n\widehat{\vzeta}_t^\circ\widehat{\vzeta}_t^{\circ\top}.
	\end{equation*}
	Then
	\begin{equation*}
		\Ee^*(\widehat{\vzeta}_{I_t}^\circ)=\bm0,
		\qquad
		\Cov^*(\widehat{\vzeta}_{I_t}^\circ)=\widehat{\mR}_\circ,
		\qquad
		\Cov^*(\widehat{\vzeta}_{I_t}^\circ,\widehat{\vzeta}_{I_s}^\circ)=\bm0\quad(t\ne s).
	\end{equation*}
	The residual trace and denominator bounds established in Lemmas~\ref{lem:resid_consequences} and \ref{lem:denom_trace}, applied conditionally to the empirical residual distribution, give
	\begin{equation*}
		\frac{\tr(\widehat\mR_\circ^2)-\tr(\mR^2)}{\tr(\mR^2)}=o_{\mathbb P}(1),
		\qquad
		\lambda_{\max}(\widehat\mR_\circ)\le C+o_{\mathbb P}(1),
		\qquad
		\max_i\sum_{j=1}^p\widehat R_{\circ,ij}^2\le C+o_{\mathbb P}(1).
	\end{equation*}
	Moreover, for every fixed $q\ge1$,
	\begin{equation*}
		\max_{1\le i\le p}\Ee^*|\widehat\zeta_{I_t,i}^\circ|^q
		\le n^{-1}\sum_{s=1}^n\max_i|\widehat\zeta_{s,i}^\circ|^q
		=O_{\mathbb P}\{(\log p)^{q/2}\},
	\end{equation*}
	\begin{equation*}
		\max_{t,i}|\widehat\zeta_{t,i}^\circ|=O_{\mathbb P}(\sqrt{\log(np)}),
		\qquad
		\max_i n^{-1}\sum_{t=1}^n(\widehat\zeta_{t,i}^\circ)^4=O_{\mathbb P}(1).
	\end{equation*}
	Set
	\begin{equation*}
		\bm X_n^*=n^{-1/2}\sum_{t=1}^n\widehat{\vzeta}_{I_t}^\circ.
	\end{equation*}
	The conditional max Gaussian approximation yields
	\begin{align*}
		&\sup_x\left|\Pp^*\left(\max_i |X_{n,i}^*|^2-2\log p+\log\log p\le x\right)
		-\Pp\left(\max_i G_{\circ,i}^2-2\log p+\log\log p\le x\mid\widehat\mR_\circ\right)\right|\\
		&\qquad\le C\left(\frac{\log^7p}{n}\right)^{1/6}
		+C\left(n^{-1}\sum_{t=1}^n\max_i|\widehat\zeta_{t,i}^\circ|^3\right)^{1/3}\frac{\log^{2/3}p}{n^{1/6}}
		=o_{\mathbb P}(1),
	\end{align*}
	where $\bm G_\circ\mid\widehat\mR_\circ\sim N(\bm0,\widehat\mR_\circ)$.  The empirical version of the extreme-value regularity condition and the preceding trace bounds imply
	\begin{equation*}
		\sup_x\left|\Pp\left(\max_i G_{\circ,i}^2-2\log p+\log\log p\le x\mid\widehat\mR_\circ\right)-F_G(x)\right|=o_{\mathbb P}(1).
	\end{equation*}
	For the oracle quadratic component,
	\begin{equation*}
		(\bm X_n^*)^\top\bm X_n^*-p
		=\frac2n\sum_{1\le s<t\le n}(\widehat{\vzeta}_{I_s}^\circ)^\top\widehat{\vzeta}_{I_t}^\circ
		+\frac1n\sum_{t=1}^n\{(\widehat{\vzeta}_{I_t}^\circ)^\top\widehat{\vzeta}_{I_t}^\circ-p\}.
	\end{equation*}
	The diagonal term satisfies
	\begin{equation*}
		\Ee^*\left[
		\frac1n\sum_{t=1}^n\{(\widehat{\vzeta}_{I_t}^\circ)^\top\widehat{\vzeta}_{I_t}^\circ-p\}
		\right]^2
		\le Cn^{-1}\tr(\widehat\mR_\circ^2),
	\end{equation*}
	and hence is $O_{\mathbb P}^*(n^{-1/2})$ after normalization by $\{2\tr(\widehat\mR_\circ^2)\}^{1/2}$.  For the cross term define
	\begin{equation*}
		\mathcal F_t^*=\sigma(I_1,\ldots,I_t),
		\qquad
		\bm H_{t-1}^*=\sum_{s=1}^{t-1}\widehat{\vzeta}_{I_s}^\circ,
		\qquad
		\xi_{nt}^*=\frac{2}{n\{2\tr(\widehat{\mR}_\circ^2)\}^{1/2}}(\bm H_{t-1}^*)^\top\widehat{\vzeta}_{I_t}^\circ.
	\end{equation*}
	Then
	\begin{equation*}
		\Ee^*(\xi_{nt}^*\mid\mathcal F_{t-1}^*)=0,
	\end{equation*}
	\begin{align*}
		\sum_{t=1}^n\Ee^*((\xi_{nt}^*)^2\mid\mathcal F_{t-1}^*)
		&=\frac{2}{n^2\tr(\widehat\mR_\circ^2)}
		\sum_{t=1}^n(\bm H_{t-1}^*)^\top\widehat\mR_\circ\bm H_{t-1}^* \\
		&=1+O_{\mathbb P}^*(n^{-1/2}+p^{-1/2}),
	\end{align*}
	\begin{equation*}
		\sum_{t=1}^n\Ee^*\{|\xi_{nt}^*|^3\mid\mathcal F_{t-1}^*\}=O_{\mathbb P}^*(n^{-1/2}).
	\end{equation*}
	Thus, for every fixed $\epsilon>0$,
	\begin{equation*}
		\sum_{t=1}^n\Ee^*\{(\xi_{nt}^*)^2\mathbf 1(|\xi_{nt}^*|>\epsilon)\mid\mathcal F_{t-1}^*\}
		\le \epsilon^{-1}\sum_{t=1}^n\Ee^*\{|\xi_{nt}^*|^3\mid\mathcal F_{t-1}^*\}=o_{\mathbb P}^*(1),
	\end{equation*}
	and the conditional martingale central limit theorem gives
	\begin{equation*}
		\sup_y\left|\Pp^*\left(
		\frac{(\bm X_n^*)^\top\bm X_n^*-p}{\{2\tr(\widehat\mR_\circ^2)\}^{1/2}}\le y\right)-\Phi(y)\right|=o_{\mathbb P}(1).
	\end{equation*}
	It remains to relate the oracle bootstrap statistics to the feasible bootstrap statistics.  Let $\widehat\mP^*$ and $\widehat\mQ^*=\mI_p-\widehat\mP^*$ be the loading-space estimator and projector computed from the bootstrap sample, and let
	\begin{equation*}
		\Delta_P^*=\widehat\mP^*-\widetilde\mP,
		\qquad
		\widehat\mD^*=\diag\left\{n^{-1}\sum_{t=1}^n\widehat\mQ^*\vY_t^*\vY_t^{*\top}\widehat\mQ^*\right\}.
	\end{equation*}
	On the bootstrap selection event, the conditional eigenspace expansion gives
	\begin{equation*}
		\|\Delta_P^*\|=O_{\mathbb P}^*(\eta_n),
		\qquad
		\max_i\|\ve_i^\top\Delta_P^*\|_2=O_{\mathbb P}^*(\sqrt{\log p/p}\,\eta_n),
		\qquad
		\eta_n=p^\delta\left(\frac{\log p}{n}\right)^{1/2}+p^\delta\frac{\log p}{n}.
	\end{equation*}
	Because $n^{-1}\sum_t\widehat{\vf}_t=\bm0$,
	\begin{equation*}
		\sqrt n\bar\vY^*=n^{-1/2}\sum_{t=1}^n\widetilde{\veps}_{I_t},
		\qquad
		\widetilde\mQ\sqrt n\bar\vY^*=\widehat\mD_\circ^{1/2}\bm X_n^*.
	\end{equation*}
	The max replacement follows from
	\begin{align*}
		\max_i\left|\sqrt n\bar{\widehat Z}_{i}^{*}-X_{n,i}^*\right|
		&\le
		\max_i\left|\ve_i^\top(\widehat\mD^{*-1/2}-\widehat\mD_\circ^{-1/2})\widetilde\mQ\sqrt n\bar\vY^*\right|\\
		&\quad+
		\max_i\left|\ve_i^\top\widehat\mD^{*-1/2}(\widehat\mQ^*-\widetilde\mQ)\sqrt n\bar\vY^*\right|\\
		&\le
		O_{\mathbb P}^*(\delta_{D,n}^*)\max_i|X_{n,i}^*|
		+O_{\mathbb P}^*(\sqrt{\log p/p}\,\eta_n)\left\|n^{-1/2}\sum_{t=1}^n\widetilde{\veps}_{I_t}\right\|_2,
	\end{align*}
	where
	\begin{equation*}
		\delta_{D,n}^*=\max_i\left|\widehat D_i^*/\widehat D_{\circ,i}-1\right|
		=O_{\mathbb P}^*(\eta_n+n^{-1/2}).
	\end{equation*}
	Since
	\begin{equation*}
		\max_i|X_{n,i}^*|=O_{\mathbb P}^*(\sqrt{\log p}),
		\qquad
		\left\|n^{-1/2}\sum_{t=1}^n\widetilde{\veps}_{I_t}\right\|_2=O_{\mathbb P}^*(\sqrt p),
	\end{equation*}
	we obtain
	\begin{equation*}
		\max_i\left|\sqrt n\bar{\widehat Z}_{i}^{*}-X_{n,i}^*\right|
		=O_{\mathbb P}^*\{(\eta_n+n^{-1/2})\sqrt{\log p}\}.
	\end{equation*}
	Under the polynomial growth conditions this is $o_{\mathbb P}^*((\log p)^{-1/2})$.
	For the quadratic statistic, the conditional version of Lemmas~\ref{lem:strong_projection_drift} and \ref{lem:replacement} gives the explicit expansion
	\begin{equation*}
		n\bar{\widehat{\vZ}}^{*\top}\bar{\widehat{\vZ}}^{*}-(\bm X_n^*)^\top\bm X_n^*
		=\mathcal B_{P,n}^*+\mathcal R_{P,n}^*,
	\end{equation*}
	where, in the strong-factor case,
	\begin{equation*}
		\frac{|\mathcal B_{P,n}^*|}{\{\tr(\widehat\mR_\circ^2)\}^{1/2}}=O_{\mathbb P}^*(\sqrt p/n),
		\qquad
		\frac{|\mathcal R_{P,n}^*|}{\{\tr(\widehat\mR_\circ^2)\}^{1/2}}
		=O_{\mathbb P}^*\left(n^{-1/2}+\frac{\sqrt p}{n}+\frac{\log p}{n}\right),
	\end{equation*}
	and, in the weak-factor case,
	\begin{equation*}
		\frac{|\mathcal B_{P,n}^*+\mathcal R_{P,n}^*|}{\{\tr(\widehat\mR_\circ^2)\}^{1/2}}
		=O_{\mathbb P}^*\left(\frac{p^{\delta/2}}{\sqrt n}+p^{1/2+2\delta}\frac{\log p}{n}+\frac{\sqrt p}{n}\right).
	\end{equation*}
	Hence
	\begin{equation*}
		\frac{\left|n\bar{\widehat{\vZ}}^{*\top}\bar{\widehat{\vZ}}^{*}-(\bm X_n^*)^\top\bm X_n^*\right|}{\{\tr(\widehat{\mR}_\circ^2)\}^{1/2}}
		=o_{\mathbb P}^*(1).
	\end{equation*}
	The same diagonal normalization and leave-two-out trace expansion as Lemma~\ref{lem:denom_trace}, conditionally applied to $\widehat{\vzeta}_{I_t}^\circ$, yields
	\begin{equation*}
		\frac{\tr((\widehat\mR^*)^2)-p^2/(n-1)-\tr(\widehat\mR_\circ^2)}{\tr(\widehat\mR_\circ^2)}=o_{\mathbb P}^*(1).
	\end{equation*}
	Combining the oracle bootstrap limits with the feasible-oracle replacements gives
	\begin{align*}
		\sup_x\abs{\Pp^*(T_M^*\le x)-F_G(x)}&=o_{\mathbb P}(1),\\
		\sup_y\abs{\Pp^*(T_S^*\le y)-\Phi(y)}&=o_{\mathbb P}(1),\\
		\sup_{x,y}\abs{\Pp^*(T_M^*\le x,T_S^*\le y)-F_G(x)\Phi(y)}&=o_{\mathbb P}(1).
	\end{align*}
	For $B\to\infty$,
	\begin{equation*}
		\widehat p_M^{\rm boot}-\Pp^*(T_M^*\ge T_M)=O_{\mathbb P}^*(B^{-1/2}),
		\qquad
		\widehat p_S^{\rm boot}-\Pp^*(T_S^*\ge T_S)=O_{\mathbb P}^*(B^{-1/2}).
	\end{equation*}
	The preceding conditional distributional convergence gives
	\begin{equation*}
		\widehat p_M^{\rm boot}-p_{M,0}=o_{\mathbb P}(1),
		\qquad
		\widehat p_S^{\rm boot}-p_{S,0}=o_{\mathbb P}(1),
	\end{equation*}
	where $p_{M,0}$ and $p_{S,0}$ are the corresponding first-order null $p$-values.  The Cauchy statistic follows by applying
	\begin{equation*}
		(p_M,p_S)\mapsto
		\frac12\tan\{\pi(1/2-p_M)\}+\frac12\tan\{\pi(1/2-p_S)\}
	\end{equation*}
	on compact subsets away from $0$ and $1$, while the endpoint neighborhoods have probability $o(1)$ under the null calibration.
\end{proof}

\bibliographystyle{cas-model2-names}
\bibliography{references}

\end{document}